\documentclass[twocolumn,aps,prl,reprint,superscriptaddress,amsmath,amssymb,bbm,floatfix]{revtex4-2}
\usepackage[latin9]{inputenc}
\setcounter{secnumdepth}{3}
\usepackage{color}
\usepackage{amstext}
\usepackage{graphicx}
\usepackage[colorlinks, allcolors={blue}]{hyperref}
\usepackage{physics}
\usepackage{blindtext}

\usepackage{xcolor,graphicx}
\usepackage{newlfont}
\usepackage{amssymb,amsmath,mathrsfs,amsthm}
\usepackage{verbatim}
\usepackage{bbm}
\usepackage{multirow}
\usepackage[varg]{txfonts}


\newcommand{\orcid}[1]{\href{https://orcid.org/#1}{\includegraphics[width=7pt]{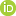}}}
\newcommand{\mc}{\mathcal}
\newcommand{\be}{\begin{equation}}
\newcommand{\ee}{\end{equation}}
\newcommand{\bbm}{\mathbbm}
\newcommand{\I}{\mathbbm{1}}
\newcommand{\ic}{\mathbbm{i}}
\newcommand{\mb}{\mathbb}

\newtheorem{theorem}{Theorem}

\newtheorem{fact}{Fact}
\newtheorem{conjecture}{Conjecture}

\renewcommand\bra[1]{{\langle{#1}|}}
\makeatletter
\renewcommand\ket[1]{%
  \@ifnextchar\bra{\k@t{#1}\!}{\k@t{#1}}%
}
\newcommand\k@t[1]{{|{#1}\rangle}}
\newcommand\proj[1]{{|{#1}\rangle\!\langle{#1}|}}
\makeatother

\begin{document}

\title{Certifying classes of $d$-outcome measurements with quantum steering}

\author{Alexandre C. Orthey Jr. \orcid{0000-0001-8111-3944}}
\affiliation{Center for Theoretical Physics, Polish Academy of Sciences, Al. Lotnik\'ow 32/46, 02-668 Warsaw, Poland.}
\affiliation{Institute of Fundamental Technological Research, Polish Academy of Sciences, Pawi\'nskiego 5B, 02-106 Warsaw, Poland.}

\author{Remigiusz Augusiak \orcid{0000-0003-1154-6132}}
\email{augusiak@cft.edu.pl}
\affiliation{Center for Theoretical Physics, Polish Academy of Sciences, Al. Lotnik\'ow 32/46, 02-668 Warsaw, Poland.}

\begin{abstract}

Device-independent certification schemes are based on minimal assumptions about the quantum system under study, which makes the most desirable among certification schemes. However, they are often the most challenging to implement. In order to reduce the implementation cost one can consider semi-device-independent schemes such as those based on quantum steering. Here we provide a construction of a family of steering inequalities which are tailored to large classes of $d$-outcomes projective measurements being a certain linear combination of the Heisenberg-Weyl operators on the untrusted side and a fixed set of known measurements on the trusted side. We then prove that the maximal quantum violation of those inequalities can be used for certification of those measurements and the maximally entangled state of two qudits. Importantly, in our self-testing proof, we do not assume the shared state to be pure, nor do we assume the measurements to be projective. Before concluding, we also show how robust to noise our self-testing statement is. We believe that our construction broadens the scope of semi-device-independent certification, paving the way for more general but still less costly quantum certification protocols.
%
\end{abstract}

\maketitle

\section{Introduction}

The advantage of quantum strategies over classical ones when performing certain tasks can be verified by violating Bell-like inequalities \cite{brunner2014bell}. These inequalities are tailored according to the scenario under scrutiny in such a way that overcoming the bound achieved by any classical strategy necessarily implies that a certain quantum correlation must be present in the system. By maximally violating these inequalities it is possible to go beyond that and certify more specific details about the quantum system, such as measurements and states, up to local unitary transformations and extra degrees of freedom. In fact, that can always be done by simply performing quantum tomography on the system, but such a procedure often requires too many measurements and resources and, more importantly, requires the measuring devices to be fully characterized and to perform known measurements. 
With Bell-like inequalities, on the other side, we can certify measurements and states in a device-independent (DI) way, that is, by not relying on the specific details of the measurement apparatus. Rather, in DI certification, we only assume classical inputs and outputs in the experiment (i.e. probabilities), which is a great advantage in the certification of quantum devices in quantum cryptography schemes \cite{Acin2007device}.

The DI certification scheme based on the maximal violation of Bell-like inequalities described above is known as self-testing \cite{mayers1998quantum,mayers2004self} (see Ref. \cite{supic2020self} for a review). Since the last decade, many inequalities were proposed to self-test any pure bipartite and multipartite entangled states, as well as, some specific classes of quantum measurements
\cite{mckague2012robust,Reichardt2013classical,McKague2014self,Wu2014robust,Bamps2015sos,Wang2016all,Supic2016self-testing,Coladangelo2017all,kaniewski2019maximal,Mancinska2021constant,Tavakoli2021mutually,Sarkar2021npj,Supic2023quantum,Das2022robust}. Among these classes of measurements, we can highlight the mutually unbiased measurements \cite{Tavakoli2021mutually} and the optimal Collins-Gisin-Linden-Massar-Popescu (CGLMP) measurements \cite{Sarkar2021npj}. In fact, a general method to certify any real local rank-one projective measurement was just recently explored \cite{chen2023all} (see also \cite{sarkar2023universal} for a general method to certify any quantum state, projective measurement and rank-one extremal non-projective measurements).

Although very general, the DI self-testing schemes of Ref. \cite{sarkar2023universal} and \cite{chen2023all} are at the same time very expensive to implement, e.g. in terms of the number of measurements that parties must perform to certify a given state or measurement. For instance, 
the scheme of Ref. \cite{chen2023all} requires that both parties perform altogether $O(d^3)$ measurements to certify a given real projective measurement acting on a Hilbert space of dimension $d$. 
%
%
One way to reduce the complexity of such certification schemes is by making assumptions about the devices
used for certification, which characterize the so-called semi-device-independent (SDI) certification schemes. One example of these schemes is the prepare and measure (PM) scenario, where we assume the dimension of the system to be known. The PM scenario allows the certification, for instance, of nonprojective qubit measurements \cite{Tavakoli2021mutually,Mironowicz2019experimentally}, mutually unbiased basis (MUBs) \cite{Farkas2019mub}, and nonprojective symmetric informationally complete (SIC) measurements \cite{Tavakoli2019enabling}.

Another way to perform SDI certification is by making use of quantum steering \cite{wiseman2007steering,cavalcanti2009experimental,costa2016quantification,uola2019quantum} rather than Bell-nonlocality. In a steering scenario, we trust the measurements performed by one of the parties, say Alice, while assuming that the other party, say Bob, performs arbitrary unknown measurements. The use of steering for self-testing was first proposed in Ref. \cite{supic2016self_EPR}, with further developments including the certification of some classes of genuinely multipartite entangled (GME) states or incompatible measurements in $d$-dimensional space \cite{sarkar2022certification}. In the latter case, however, for any different set of incompatible measurements 
to be certified, the trusted party must also perform different measurements.

Inspired by Refs. \cite{sarkar2022certification,kaniewski2019maximal}, our aim here is to
use steering to construct certification methods for a large class of quantum measurements in $d$-dimensional space given by linear combinations of the Heisenberg-Weyl (HW) operators. Here, however, we consider a scenario in which, unlike in Ref. \cite{sarkar2022certification}, Alice always performs the same fixed measurements (see Fig. \ref{fig:scenario}) independently of the measurements to be certified on Bob's side.
%
The state giving rise to the maximal quantum value is always the maximally entangled state of two qudits. 
\begin{figure}[h]
    \centering
    \includegraphics[width=0.8\linewidth]{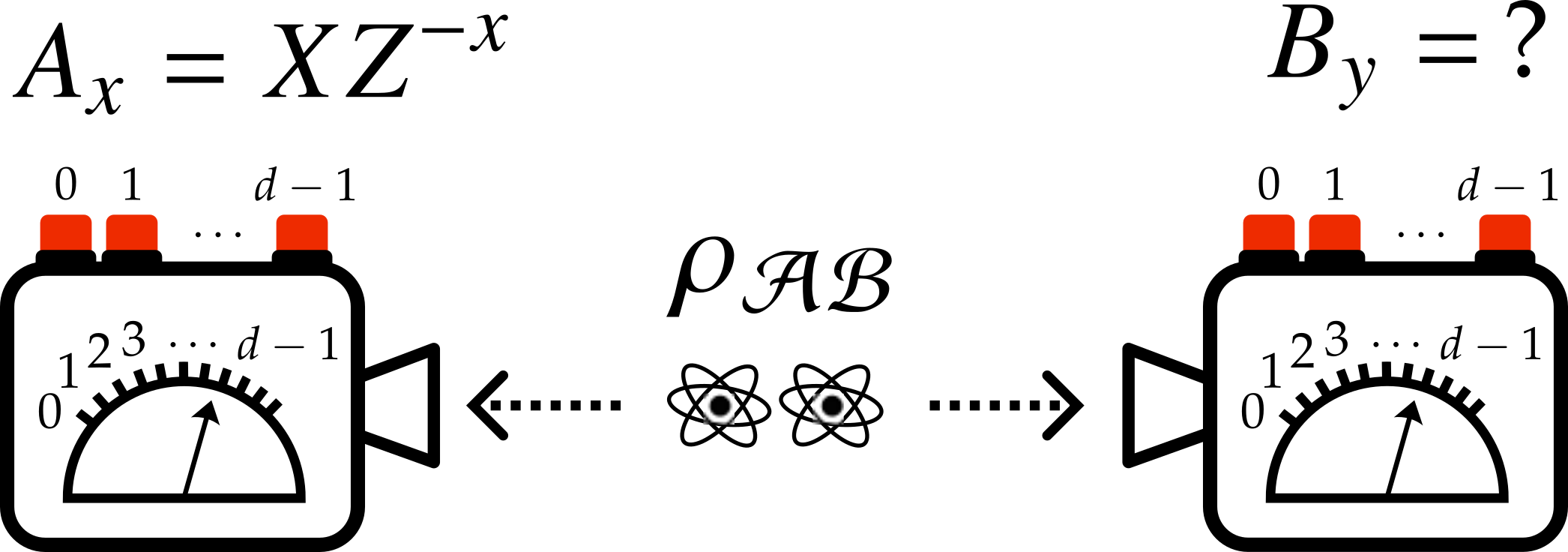}
    \caption{Depiction of the scenario we consider: both Alice and Bob perform $d$ measurements with $d$-outcome each on an arbitrary bipartite state $\rho_\mc{AB}$. We assume that Alice's share of the state is of known local dimension $\dim \mc{H_A}=d$. Alice's measurements are projective and given by $A_x=XZ^{-x}$, while Bob's measurements $B_y$ are arbitrary. Our goal is to certify that the state is equivalent to the maximally entangled state of local dimension $d$ and that Bob's measurements are equivalent to any appropriate unitary linear combination of $XZ^k$, with $k\in\{0,\ldots,d-1\}$.}
    \label{fig:scenario}
\end{figure}
%
%
%
With that in mind, we derive steering inequalities tailored to every different linear
combination. Our steering inequalities are maximally violated by the maximally entangled state of local dimension $d$ and measurements that are equivalent to the given linear combinations of HW operators. 
%
%
It is important to note that our self-testing statement assumes that the state shared by the parties can be mixed and that Bob's measurements can be non-projective.

Our paper is organized as follows. In Sec. \ref{sec_pre} we provide a review on generalized observables, Heisenberg-Weyl operators, and the definition of self-testing. In Sec \ref{sec_results} we present our steering inequality, provide an example for qutrits, calculate the quantum bound, present the self-testing statement, and end with the calculation of the classical bound for qutrits. The robustness analysis is presented in Sec. \ref{sec_robustness}. Our conclusions are left for Sec. \ref{sec_discussion}

\section{Preliminaries}\label{sec_pre}

\subsection{Generalized observables}

The scenario in question is composed of a simple bipartite system where the two parties, Alice and Bob, represented by curly symbols $\mc{A}$ and $\mc{B}$, share a bipartite quantum state $\rho_\mc{AB}$ acting over $\mc{H_A\otimes H_B}$. Both of them perform $d$ measurements of $d$ outcomes each. We assume that Alice's measurements are trusted, that is, completely characterized. Bob's measurements, however, are unknown; we assume that they can be non-projective and of any dimension. We also assume that the state shared by Alice and Bob can be mixed, but Alice's share is of local dimension $\dim \mc{H_A}=d$.

Alice's and Bob's measurements are represented by POVM's $M_x=\{M_{a|x}\}_{a=0}^{d-1}$ and $N_y=\{N_{b|y}\}_{b=0}^{d-1}$, that is, $M_{a|x}\geqslant 0$ and $\sum_{a=0}^{d-1} M_{a|x}=\I_d$, and similarly for Bob. The measurement choices and the outcomes are identified by $x,y\in\{0,\ldots,d-1\}$ and $a,b\in\{0,\ldots,d-1\}$, respectively. After performing the experiment many times, the statistics can be collected in the form of a set of probabilities $\mathbf{p}\coloneqq\{p(ab|xy)\}_{abxy}$, henceforth called \textit{behavior}, that are calculated using Born rule, i.e., $p(ab|xy)=\Tr[(M_{a|x}\otimes N_{b|y})\rho_\mc{AB}]$.

To detect steering, we are going to use Bell-type inequalities, that is,
\be
\sum_{a,b,x,y}\alpha_{a,b,x,y}p(a,b|x,y)\leqslant \beta_C,
\ee
where $\alpha_{a,b,x,y}$ are coefficients suitably chosen and $\beta_C$ is the bound satisfied by behaviors that admit a local hidden state model \cite{uola2019quantum}; we will refer to it simply as \textit{classical bound}. A behavior that violates the above steering inequality implies that the state $\rho_\mc{AB}$ is \textit{steerable}.

Since our goal is to deal with $d$-dimensional systems, it is convenient to work with generalized correlators, which are the Fourier transform of $\mathbf{p}$ given by
\be\label{gen_correlators}
\langle A_{k|x}B_{l|y} \rangle=\sum_{a,b=0}^{d-1} \omega^{ak+bl}p(ab|xy),
\ee
where $\omega=\exp(2\pi\ic/d)$ and $k,l\in\{0,\ldots,d-1\}$. The objects $A_{k|x}$ and $B_{l|y}$ are called \textit{generalized observables}; they are represented as the Fourier transforms of the original measurement operators, that is, $A_{k|x}=\sum_{a=0}^{d-1} \omega^{ak} M_{a|x}$ and $N_{l|y}=\sum_{b=0}^{d-1} \omega^{bl} N_{b|y}$. Because the Fourier transform is invertible, $A_x=\{A_{k|x}\}$ and $B_y=\{B_{l|y}\}$ fully characterize Alice's and Bob's measurements. Therefore, we can make all the calculations and proofs with $A_x$ and $B_y$, and at the end of the day, we do the inverse Fourier transform to send $M_x$ and $N_y$ to our experimentalist colleagues so they can perform the measurements in the lab. Moreover, in the quantum case we have
\be
\langle A_{k|x}B_{l|y} \rangle=\Tr\left[\rho_{AB}\left(A_{k|x}\otimes B_{l|y} \right)\right].
\ee

Because we know Alice's measurements, we can assume that they are projective without loss of generality. It is proven in Appendix A of Ref. \cite{kaniewski2019maximal} that if $M_x$ is projective, then $A_{k|x}$ are unitary operators such that $A_{k|x}=(A_{1|x})^k$. In that case, we use the notation $A_{1|x}\coloneqq A_x$, which means that $A_{k|x}=A_x^k$. From the fact that $\omega^{d-k}=\omega^{-k}=(\omega^k)^*$, we can summarize the properties satisfied by operators $A_x$ and $B_y$ by
\begin{subequations}
\begin{align}
    & A_{x}^{d-k}=A_x^{-k}=\left(A_x^k \right)^\dagger,\label{prop_Ax}\\
    & B_{d-l|y}=B_{-l|y}=\left(B_{l|y} \right)^\dagger,
\end{align}
\end{subequations}
for every $k,l\in\{0,\ldots,d-1\}$ and also $A_x^d=\I_d$ and $B_{l|y}^d=\I_\mc{B}$. While $A_x$ are unitary, Bob's measurements satisfy $B_{l|y}^\dagger B_{l|y}\leqslant \I_\mc{B}$ for $l,y\in\{0,\ldots,d-1\}$ \cite{kaniewski2019maximal}.

In our scenario, we are going to assume for convenience that Alice's generalized observables are given by the Heisenberg-Weyl operators
\be
A_x\coloneqq XZ^{-x}=(XZ^x)^*,
\ee
for $x\in\{0,\ldots,d-1\}$, where $X$ and $Z$ are the generalized Pauli operators given by
\be
        X=\sum_{i=0}^{d-1}\ket{i+ 1}\bra{i}\qquad\text{and}\qquad Z=\sum_{i=0}^{d-1}\omega^i\ket{i}\bra{i},
\ee
such that $\ket{d}\equiv\ket{0}$. The operators $A_x$ are unitary and satisfy properties \eqref{prop_Ax}. Together with the computational basis, they provide $d+1$ mutually unbiased basis in $\mathbb{C}^d$ for prime $d$ \cite{bandyopadhyay2002new}.

\subsection{Self-testing}

By observing the maximal violation of a tailored steering inequality, it is possible, in some cases, to certify the state shared by the parties and the measurements performed by the untrusted party, up to local unitary transformations. As we mentioned in the introduction, this method can be thought of as self-testing in the SDI scenario or, more specifically, a one-sided device independent scenario (1SDI) \cite{supic2016self_EPR}. In our scenario, Alice performs known projective measurements $A_x$ and Bob performs arbitrary measurements $B_y=\{B_{k|y}\}$ over an arbitrary state $\rho_\mc{AB}$ to obtain the behavior $\mathbf{p}$. Now, consider the reference measurements $\overline{B}_y=\{\overline{B}_{k|y}\}$ and the reference state $\ket{\overline{\psi}}_\mc{AB}$ which we would like to certify. It is said that the behavior $\mathbf{p}$ semi-device-independently certify the measurements $\overline{B}_y$ and the state $\ket{\overline{\psi}}_\mc{AB}$ if there exists a unitary operator $U_\mc{B}:\mc{H_B}\to\mc{H_B}$ such that
\begin{subequations}
    \begin{align}
        \left(\I_\mc{A}\otimes U_\mc{B}\right)\rho_\mc{AB} \left(\I_\mc{A}\otimes U_\mc{B}^\dagger\right) &=\proj{\overline{\psi}}_\mc{AB'}\otimes \rho_\mc{B''},\\
        U_\mc{B} B_{l|y} U_\mc{B}^\dagger &= \overline{B}_{l|y}\otimes \I_\mc{B''},
    \end{align}
\end{subequations}
where Bob's space decomposes as $\mc{H_B}=\mc{H_{B'}}\otimes \mc{H_{B''}}$ and the auxiliary state $\rho_\mc{B''}$ acts on $\mc{H_{B''}}$. 

\section{Results}\label{sec_results}

In the following sections, we specify the measurements that we want to certify on Bob's part of the system and the steering inequality tailored for them. Then we calculate the quantum bound, present the self-testing statement, and provide the classical bound for the case of qutrits.

\subsection{Steering inequality}

Let us start by defining the measurements we want to certify on Bob's side. The eigenvectors of the Heisenberg-Weyl operators $\{XZ^k\}_{k=0}^{d-1}$ together with the computational basis provide $d+1$ mutually unbiased basis in $\mathbb{C}^d$ for prime $d$ \cite{bandyopadhyay2002new}. Therefore, by taking Bob's reference measurements as linear combinations of $XZ^k$, that is, 
\be\label{Bbar_general}
\overline{B}_y=\gamma_{y,0} X+\gamma_{y,1} XZ+\gamma_{y,2}XZ^2+\ldots+\gamma_{y,d-1}XZ^{d-1},
\ee
where $\gamma_{y,k}\in\mathbb{C}$ for every $y,k\in\{0,\ldots,d-1\}$, we can cover a large set of generalized measurements in dimension $d$. To make $\overline{B}_y$ unitary and to have correct spectrum, i.e. $\overline{B}_y^d=\I_d$, Bob's reference measurements must have the form
\be\label{Bbar_main}
\overline{B}_y=\dfrac{1}{d}\sum_{k,j=0}^{d-1}e^{ \bbm{i}\phi_{j\oplus 1,y}}\omega^{-kj}XZ^k,
\ee
where $\{\phi_{j,y}\}_{j=0}^{d-1}$ is a set of any real numbers satisfying $\sum_{j=0}^{d-1}\phi_{j,y}=0$ and $\oplus$ stands for addition modulo $d$. By choosing different values of $\phi_{j,y}$ at will, we can select different measurements to certify. In what follows, we are going to construct steering inequalities tailored for each reference measurement as a function of $\{\phi_{j,y}\}$.

The steering inequality is constructed through the sum of the generalized correlators \eqref{gen_correlators}. However, to make this sum result in a real number and to include all outcomes in the inequality we must also sum over the powers of the measurement operators. To achieve that, let us artificially define the following operators:
\be\label{B(n)_main}
\overline{B}_y^{(n)}=\sum_{k=0}^{d-1}\left(\gamma_{yk}^{(n)}\right)^* \left(XZ^k\right)^n
\ee
with $n=1,\ldots,d-1$, 
where the coefficients $\gamma_{yk}^{(n)}$ have the form
\be\label{byk_main}
\gamma_{yk}^{(n)}\coloneqq \frac{1}{d}\omega^{kn(n-1)/2}\sum_{l=0}^{d-1}e^{-\bbm{i}\sum_{m=1}^{n}\phi_{l\oplus m,y}}\omega^{knl}
\ee
and for any measurement choice $y$, the angles $\phi_{j,y}$ satisfy 
\begin{equation}
    \sum_{j=0}^{d-1}\phi_{j,y}=0.
\end{equation}
Note that we constructed $\overline{B}_y^{(n)}$ using the complex conjugation of the $\gamma$ coefficients for mathematical convenience. After some algebra (see Appendix \ref{appendix_Steering_ope}), it is possible to show that
\be
\overline{B}_y^{(n)}=\overline{B}_y^n, \qquad \text{for }n=1,\ldots,d-1,
\ee
which satisfies
\be
\left(\overline{B}_y^{(n)}\right)^d=\left(\overline{B}_y^n\right)^d=\I_d, \qquad \text{for }n=1,\ldots,d-1.
\ee
Also, note that if we take $n=d$ in \eqref{B(n)_main}, we obtain $\overline{B}_y^{(d)}=d\I$. In addition, because $\overline{B}_y$ is unitary, its powers are also unitary, as well as, $\overline{B}_y^{(n)}$ for every $n=1,\ldots,d-1$.

We want to construct a steering inequality that is maximally violated by the maximally entangled state of two qudits
\be
\ket{\phi_d^+}=\frac{1}{\sqrt{d}}\sum_{i=0}^{d-1}\ket{ii}
\ee
and the measurements $A_x$ and $\overline{B}_y$. For that, we are going to use the stabilizer set of operators of the state $\ket{\phi_d^+}$, which act in the form
\be
\left(XZ^{-k}\right)^n \otimes \left(XZ^{k}\right)^n \ket{\psi}=\ket{\psi},
\ee
for $n\in\{1,\ldots,d-1\}$ and $k\in\{0,\ldots,d-1\}$. That means that $\ket{\psi}=\ket{\phi_d^+}$ is the unique state that satisfies the above system of equations. Now, we need to isolate $(XZ^k)^n$ in \eqref{B(n)_main} to construct generalized correlators. That can be done if we impose that
\be\label{b_delta_main}
\sum_{k=0}^{d-1}\left(\gamma_{y'k}^{(n)}\right)^*\gamma_{yk}^{(n)} =\delta_{yy'},\qquad\forall y,y'.
\ee
The above condition imposes that Bob's reference measurements $\{\overline{B}_{y=1},\ldots,\overline{B}_{y=d-1}\}$ are in function of the first reference measurement $\overline{B}_{y=0}$ in terms of the complete free set of numbers $\{\phi_{j,y=0}\}_{j=0}^{d-2}$. As a consequence, we also have that (see Appendix \ref{appendix_Steering_ope} for details)
\be
\sum_{y=0}^{d-1}\left(\gamma_{yk'}^{(n)}\right)^*\gamma_{yk}^{(n)} =\delta_{kk'},\qquad\forall k,k'.
\ee
If we multiply \eqref{B(n)_main} by $\gamma_{yk'}^{(n)}$ and sum over $y$ on both sides, we can use the above relation to obtain
\be
\left(XZ^k \right)^n=\sum_{y=0}^{d-1} \gamma_{yk}^{(n)}\overline{B}_y^{(n)}.
\ee
Now, we can specify the substitution rule
\be\label{substitution_rule_main}
\left(XZ^k\right)^n\longrightarrow \widetilde{B}_k^{(n)}\coloneqq\sum_{y=0}^{d-1}\gamma_{yk}^{(n)}B_{n|y},
\ee
where $B_y=\{B_{n|y}\}$, for $y=0,\ldots,d-1$ and $n=1,\ldots,d-1$, represent the $d$-outcome arbitrary measurements performed by Bob satisfying
\be\label{Bn^dagger=Bd-n_main}
B_{n|y}^\dagger=B_{d-n|y}=B_{-n|y}\qquad\text{and}\qquad B_{n|y}^\dagger B_{n|y}\leqslant \I.
\ee
From the substitution rule, we define our steering operator $\mc{S}$ as
\be\label{Steering_operator_main}
\mc{S}=\sum_{n=1}^{d-1}\sum_{k=0}^{d-1}  A_k^n\otimes \widetilde{B}_k^{(n)} .
\ee
Therefore, we propose a steering inequality as the sum of generalized correlators in the following form
\begin{eqnarray}\label{Steering_inequality_main}
&&\sum_{n=1}^{d-1}\sum_{k=0}^{d-1} \langle A_k^n\otimes\widetilde{B}_k^{(n)} \rangle \nonumber\\
&&=\sum_{n=1}^{d-1}\sum_{k,y=0}^{d-1}\gamma_{yk}^{(n)} \langle A_k^n\otimes B_{n|y} \rangle \leqslant \beta_C,
\end{eqnarray}
where $\beta_C$ is the classical bound to be determined as a function of $\{\phi_{j,y=0}\}_{j=0}^{d-2}$. Let us notice here that while it is in general a very difficult task to determine the value of $\beta_C$, below we show that for any of our steering functionals $\beta_C<\beta_Q$, meaning that the corresponding steering inequalities are all nontrivial. In fact, we show that our steering functionals enable self-testing maximally entangled states of two qudits, which would not be possible if $\beta_C=\beta_Q$.

\subsection{Example: qutrits}

To make the definitions of the previous section clearer, let us use qutrits as an example. In this case, Alice has three measurement options every round of the experiment, which are given by $A_0=X$, $A_1=XZ^{-1}$, and $A_2=XZ^{-2}$. On the other side, Bob's measurements $B_0$, $B_1$, and $B_2$ are unknown. Because the reference measurements $\overline{B}_y$ must be given by unitary operators that satisfy $\overline{B}_y^3=\I_3$, we must have
\be
\overline{B}_y=\sum_{k=0}^2 \left(\gamma_{yk}^{(1)}\right)^*XZ^{k}\equiv\begin{pmatrix}
 0 & 0 & e^{i \phi _{0,y}} \\
 e^{i \phi _{1,y}} & 0 & 0 \\
 0 & e^{-i \left(\phi _{0,y}+\phi _{1,y}\right)} & 0     
\end{pmatrix}.
\ee
Among the above operators, only those that satisfy condition \eqref{b_delta_main} can be self-tested by the maximal violation of \eqref{Steering_inequality_main}. For $y=0$, coefficients $\phi_{0,0}$ and $\phi_{1,0}$ are completely free. For $y=1,2$, however, condition \eqref{b_delta_main} imposes that $\{\phi_{0,1},\phi_{1,1},\phi_{0,2},\phi_{1,2}\}$ must be a function of $\{\phi_{0,0},\phi_{1,0}\}$. In Sec. \ref{sec_classical_bound}, we are going to provide numerical calculations of the classical bound $\beta_C$ as a function of $\{\phi_{0,0},\phi_{1,0}\}$.

\subsection{Quantum bound}

Quantum realizations can violate \eqref{Steering_inequality_main}. Here, we calculate the maximum violation by performing an SOS decomposition of $\mc{S}$. Indeed, 
\begin{align}\label{SOS_decomposition_main}
0 &\leqslant \dfrac{1}{2}\sum_{n=1}^{d-1}\sum_{k=0}^{d-1}\left[\I- A_k^n\otimes\widetilde{B}_k^{(n)}\right]^\dagger\left[\I- A_k^n\otimes\widetilde{B}_k^{(n)}\right]\nonumber\\
&\leqslant d(d-1)\I-\mc{S},
\end{align}
which implies that the maximum value achieved by $\mc{S}$ is given by $\beta_Q\coloneqq d(d-1)$ (see Appendix \ref{appendix_QB}). If Alice and Bob perform the measurements represented by
\be
A_x=XZ^{-x}\qquad\text{and}\qquad B_{n|y}=\sum_{k=0}^{d-1}\left(\gamma_{yk}^{(n)}\right)^* \left(XZ^k\right)^n,
\ee
for $x=0,\ldots,d-1$ and $y=0,\ldots,d-1$ respectively, then operator \eqref{Steering_operator_main} becomes a sum of stabilizing operators of the state $\ket{\phi_d^+}$. After taking the expected value, the quantum bound of $d(d-1)$ is achieved.

\subsection{Self-testing proof}

We start now by developing the self-testing statements regarding our steering operator $\mc{S}$ by considering first that the state $\rho_\mc{AB}=\ket{\psi}\bra{\psi}_\mc{AB}$ is pure and that the measurements $B_y$ performed by Bob are projective. Because of that, each of those measurements can be encoded in a single unitary observable $B_y\coloneqq B_{1|y}$ that satisfies $B_y^d=\I$. Consequently, they also satisfy $B_{n|y}=B_y^n$ and
\be
B_y^{d-n}=B_y^{-n}=\left(B_y^n\right)^\dagger.
\ee
From the above and from Fact \ref{fact_b} (see Appendix \ref{appendix_Steering_ope}), we can conclude that the operators $\widetilde{B}_k^{(n)}$ defined in \eqref{substitution_rule_main} must satisfy
\be
\widetilde{B}_k^{(d-n)}=\left(\widetilde{B}_k^{(n)}\right)^\dagger.
\ee
Now, consider the state and the measurements that give the maximum violation of \eqref{Steering_inequality_main}. We can infer from the SOS decomposition \eqref{SOS_decomposition_main} that such state and measurements must satisfy
\be\label{AB_tilde_main}
A_k^n\otimes \widetilde{B}_k^{(n)}\ket{\psi}_\mc{AB}=\ket{\psi}_\mc{AB}
\ee
for every $k=0,\ldots,d-1$ and $n=1,\ldots,d-1$. By implementing the notation $\widetilde{B}_k\coloneqq \widetilde{B}_k^{(1)}$, we can write
\be\label{ABtilde_main}
A_k\otimes \widetilde{B}_k\ket{\psi}_\mc{AB}=\ket{\psi}_\mc{AB}.
\ee

From the above, we can present the following theorem:
\begin{theorem}\label{theorem_1_main}
Assume that the steering inequality \eqref{Steering_inequality_main} is maximally violated by a state $\ket{\psi}_\mc{AB}\in\mathbb{C}^d\otimes\mc{H_B}$ and unitary $d$-outcome observables $A_k$ and $B_y$ ($k,y\in\{0,\ldots,d-1\}$) acting on, respectively, $\mathbb{C}^d$ and $\mc{H_B}$ such that the observables on Alice's trusted side are given by $A_k=XZ^{-k}$. Then, the following statement holds true: there exists a local unitary transformation on Bob's untrusted side, $U_\mc{B}: \mc{H_B}\to \mc{H_B}$, such that
\be\label{state_self_test_main}
(\I_\mc{A}\otimes U_\mc{B})\ket{\psi}_\mc{AB}=\ket{\phi_d^+}_\mc{AB}
\ee
and
\be\label{operator_self_test_main}
\forall y,\quad U_\mc{B}\mb{B}_y U_\mc{B}^\dagger=\overline{B}_y,
\ee
where $\mb{B}_y$ is $B_y$ projected onto the support of Bob's state $\rho_\mc{B}$ acting on $\mathbb{C}^d$ and $\overline{B}_y\coloneqq \overline{B}_y^{(1)}$ is given by \eqref{B(n)_main}.
\end{theorem}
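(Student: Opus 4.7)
The strategy I will follow is the standard three-step self-testing recipe: first extract operator identities on $\ket{\psi}$ from the saturated SOS, then transport Alice's Heisenberg--Weyl algebra to Bob's side, and finally invoke a representation-theoretic uniqueness argument to build $U_\mc{B}$ and recover both the state and the measurements.

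Saturation of \eqref{SOS_decomposition_main} forces each summand to annihilate $\ket{\psi}$, giving the stabilizer identities
\begin{equation}
A_k^n\otimes \widetilde{B}_k^{(n)}\ket{\psi}=\ket{\psi},\quad k\in\{0,\dots,d{-}1\},\ n\in\{1,\dots,d{-}1\},
\end{equation}
and in particular $A_k\otimes \widetilde{B}_k\ket{\psi}=\ket{\psi}$. Expanding $\widetilde{B}_k=\sum_y \gamma_{yk}^{(1)}B_{1|y}$ via \eqref{substitution_rule_main}, computing $\|A_k\otimes\widetilde{B}_k\ket{\psi}\|^2=1$, and summing over $k$ using the orthogonality $\sum_k (\gamma_{yk}^{(1)})^*\gamma_{y'k}^{(1)}=\delta_{yy'}$ of \eqref{b_delta_main} produces $\sum_y \Tr(B_{1|y}^\dagger B_{1|y}\rho_\mc{B})=d$; combined with $B_{n|y}^\dagger B_{n|y}\leqslant\I$ from \eqref{Bn^dagger=Bd-n_main}, this forces $B_{1|y}^\dagger B_{1|y}=\I$ on $\mathrm{supp}(\rho_\mc{B})$. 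Hence each $B_y$ restricts to a unitary $\mb{B}_y$ on the support, and an analogous calculation lifts the same property to the combinations $\widetilde{B}_k$.

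For the second step I would sandwich the stabilizer identities for two different indices to push Alice's algebra onto Bob's side. Using $A_kA_l=\omega^{l-k}A_lA_k$, which follows directly from $A_k=XZ^{-k}$, one obtains $\widetilde{B}_k\widetilde{B}_l=\omega^{k-l}\widetilde{B}_l\widetilde{B}_k$ on $\mathrm{supp}(\rho_\mc{B})$, together with $\widetilde{B}_k^{\,d}=\I$ coming from iterating the $n=1$ stabilizer $d$ times and using $A_k^d=\I$. These are precisely the defining relations of the Heisenberg--Weyl representation generated by $\{XZ^k\}$ on $\mathbb{C}^d$, which is irreducible. A standard Schur/Jordan-type argument then produces a unitary $U_\mc{B}:\mc{H_B}\to\mc{H_B}$ under which $\mathrm{supp}(\rho_\mc{B})$ is identified with $\mathbb{C}^d$ (the auxiliary multiplicity factor collapsing in the pure/projective regime of the theorem, since the Schmidt rank of $\ket{\psi}$ is at most $d$), with $U_\mc{B}\widetilde{B}_k U_\mc{B}^\dagger=XZ^k$.

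Finally, once the $\widetilde{B}_k$ are in canonical form, the stabilizer relations become $(A_k\otimes XZ^k)(\I_\mc{A}\otimes U_\mc{B})\ket{\psi}=(\I_\mc{A}\otimes U_\mc{B})\ket{\psi}$, whose unique common $+1$ eigenvector on $\mathbb{C}^d\otimes\mathbb{C}^d$ is $\ket{\phi_d^+}$, establishing \eqref{state_self_test_main}. Inverting the substitution rule \eqref{substitution_rule_main} through the other orthogonality in \eqref{b_delta_main} yields $\mb{B}_y=\sum_k(\gamma_{yk}^{(1)})^*\widetilde{B}_k$ on the support, so conjugation by $U_\mc{B}$ reproduces $\overline{B}_y$ of \eqref{Bbar_main}, giving \eqref{operator_self_test_main}. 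The main technical obstacle I anticipate is the second step: promoting the commutation identity from an equation on $\ket{\psi}$ alone to a genuine operator equation on $\mathrm{supp}(\rho_\mc{B})$ so that the Heisenberg--Weyl algebra closes cleanly, and then invoking irreducibility in a way that unambiguously fixes the dimension of the support to $d$ without leaving behind a non-trivial multiplicity factor.
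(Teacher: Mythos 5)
Your skeleton (SOS saturation $\Rightarrow$ stabilizer relations $\Rightarrow$ identification of state and observables) matches the paper's, but your middle step is a genuinely different route. The paper never transports the Heisenberg--Weyl commutation relations to Bob's side nor invokes irreducibility: it writes $\ket{\psi}_\mc{AB}=(P_\mc{A}\otimes U_\mc{B}^\dagger)\ket{\phi_d^+}$ via the Schmidt decomposition, uses $R\otimes Q\ket{\phi_d^+}=RQ^{T}\otimes\I\ket{\phi_d^+}$ to turn the stabilizer relations into $A_kP_\mc{A}\bigl(U_\mc{B}\widetilde{\mb{B}}_kU_\mc{B}^\dagger\bigr)^{T}=P_\mc{A}$, deduces $[A_k,P_\mc{A}]=0$ for all $k$, and concludes $P_\mc{A}\propto\I$ from a commutant lemma (Lemma 1 of the supplement of Ref.~\cite{sarkar2022certification}). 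That fixes the state and yields $U_\mc{B}\widetilde{\mb{B}}_kU_\mc{B}^\dagger=A_k^*$ with the correct phases in one pass; the final inversion through \eqref{b_delta_main} is the same as yours.

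As written, your proposal has two concrete gaps. First, your unitarity argument does not establish what is needed: in Theorem \ref{theorem_1_main} the $B_y$ are already assumed unitary, so $\Tr\bigl(B_{1|y}^\dagger B_{1|y}\rho_\mc{B}\bigr)=1$ is vacuous. The nontrivial claim is that the \emph{compression} $\mb{B}_y=\Pi_\mc{B}B_y\Pi_\mc{B}$ is unitary on $\mathrm{supp}(\rho_\mc{B})$, i.e.\ that $B_y$ has no off-diagonal blocks relative to $\Pi_\mc{B}$. The paper proves this by deriving $\I\otimes\mb{B}_y\ket{\psi}=\sum_k(\gamma_{yk}^{(1)})^*A_k^\dagger\otimes\I\ket{\psi}$ and explicitly checking $\sum_{k,k'}(\gamma_{yk}^{(1)})^*\gamma_{yk'}^{(1)}A_k^\dagger A_{k'}=\I$ using the specific form of the coefficients; some such computation is unavoidable, and it is precisely the ingredient you need to promote your on-state commutation identities to operator identities among compressions on the support (compressions of products are not products of compressions until the blocks are shown to vanish), which is the obstacle you yourself flag. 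Second, the representation-theoretic step cannot by itself give $U_\mc{B}\widetilde{\mb{B}}_kU_\mc{B}^\dagger=XZ^k$: the relations $\widetilde{\mb{B}}_k\widetilde{\mb{B}}_l=\omega^{k-l}\widetilde{\mb{B}}_l\widetilde{\mb{B}}_k$ and $\widetilde{\mb{B}}_k^{\,d}=\I$ are invariant under $\widetilde{\mb{B}}_k\mapsto\alpha_k\widetilde{\mb{B}}_k$ with $\alpha_k^d=1$, so irreducibility pins the generators down only up to such phases, and the consistent phase assignments correspond to Pauli translates $(\I\otimes P)\ket{\phi_d^+}$ of the target state. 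You must therefore fix the residual Pauli $P$ using the stabilizer relations and absorb it into $U_\mc{B}$ before concluding either \eqref{state_self_test_main} or \eqref{operator_self_test_main}; the order of quantifiers in your last step (first canonical observables, then unique eigenvector) hides this. Both gaps are repairable, but the repairs essentially reintroduce the paper's Step 1 computation and an extra phase-fixing argument.
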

\begin{proof} Here, we present only a sketch of the proof, which full version can be found in Appendix \ref{appendix_self-testing}. Note that we follow similar steps implemented in the proof of Theorem 1.1 of Ref. \cite{sarkar2022certification}. First, we prove that Bob's measurements split in $B_y=\mathbb{B}_y\oplus E_y$ where $\mb{B}_y\coloneqq \Pi_\mc{B} B_y \Pi_\mc{B}$ is the projection of $B_y$ onto the support of $\rho_\mc{B}$ and $E_i$ belongs to the complement of it. This is proved by implementing the projection $\Pi_\mc{B}$ onto \eqref{ABtilde_main} and using tricks to show that $\mathbb{B}_y$ is unitary. Then, we write $\ket{\psi}_\mc{AB}$ in its Schmidt decomposition form
\be\label{schmidt_decomposition_main}
\ket{\psi}_\mc{AB}=\sum_{i=0}^{d-1}\lambda_i\ket{u_i}\ket{v_i},
\ee
where $\{\ket{u_i}\}$ and $\{\ket{v_i}\}$ are orthonormal bases of $\mathbb{C}^d$ and $\mc{H}_\mc{B}$, respectively. Also, because $\rank(\rho_\mc{A})=d$, we have $\lambda_i> 0$, $\forall i$, such that $\sum_i \lambda_i^2=1$. Note that, there is a unitary transformation $U_\mc{B}$ that satisfies $U_\mc{B}\ket{v_i}=\ket{u_i^*}$, $\forall i$, where $^*$ denotes complex conjugation. Therefore, we can rewrite \eqref{schmidt_decomposition_main} as
\be
\ket{\psi}_\mc{AB}=\left(P_\mc{A}\otimes U_\mc{B}^\dagger \right)\ket{\phi_d^+},
\ee
where $P_\mc{A}$ is an operator that is diagonal in the basis $\{u_i\}$, with eigenvalues $\sqrt{d}\lambda_i$. From \eqref{ABtilde_main} and the above, it is possible to show that $P_\mc{A}$ is proportional to identity. Therefore, the above equation implies \eqref{state_self_test_main}. Again, from \eqref{ABtilde_main} and the fact that $P_\mc{A}$ is proportional to identity, we can also show that $U_\mc{B} \widetilde{\mb{B}}_k U_\mc{B}^\dagger =A_k^*$. From definition \eqref{substitution_rule_main}, that can be expressed as
\be
\sum_{y=0}^{d-1} \gamma_{yk}^{(1)} U_\mc{B} \mb{B}_y U_\mc{B}^\dagger =A_k^*.
\ee
We can now multiply both sides by $\left(\gamma_{y'k}^{(1)}\right)^*$, sum over $k$, and use \eqref{b_delta_main} to obtain
\be
\sum_{y=0}^{d-1} \delta_{yy'} U_\mc{B}\mb{B}_y U_\mc{B}^\dagger =\sum_{k=0}^{d-1} \left(\gamma_{y'k}^{(1)} \right)^* XZ^k.
\ee
Finally, from definition \eqref{B(n)_main}, the above equation results in
\be
U_\mc{B}\mb{B}_y U_\mc{B}^\dagger = \overline{B}_y,
\ee
which finishes the proof.
\end{proof}

Now, we are going to extend the above results to the case where we do not assume that (i) the measurements performed by Bob are projective and (ii) that the state shared by the parties is pure.

\begin{theorem}\label{Theorem_2_main}
    Assume that the steering inequality \eqref{Steering_inequality_main} is maximally violated by a state $\rho_\mc{AB}$ acting on $\mathbb{C}^d\otimes\mc{H_B}$ and unitary $d$-outcome observables $A_k$ and $B_y$ ($k,y\in\{0,\ldots,d-1\}$) acting on, respectively, $\mathbb{C}^d$ and $\mc{H_B}$ such that the observables on Alice's trusted side are given by $A_k=XZ^{-k}$. Then, the following three statements hold true:

    (i) Bob's measurements are projective, that is, all operators $B_{n|y}$ are unitary such that $B_{n|y}^d = \I$, 
    
    (ii) Bob's Hilbert space decomposes as $\mc{H_B}=(\mb{C}^d)_{\mc{B}'}\otimes \mc{H}_{\mc{B}''}$, and

    (iii) there exists a local unitary transformation on Bob's untrusted side, $U_\mc{B}: \mc{H_B}\to \mc{H_B}$, such that
\be\label{state_self_test2_main}
(\I_\mc{A}\otimes U_\mc{B})\rho_\mc{AB}(\I_\mc{A}\otimes U_\mc{B}^\dagger)=\ket{\phi_d^+}\bra{\phi_d^+}_\mc{AB'}\otimes\rho_{\mc{B}'}
\ee
and
\be\label{operator_self_test2_main}
\forall y,\quad U_\mc{B} B_y U_\mc{B}^\dagger=\overline{B}_y\otimes\I_\mc{B''},
\ee
where $\mc{B}''$ denotes Bob's auxiliary system and $\overline{B}_y\coloneqq \overline{B}_y^{(1)}$ is given by \eqref{B(n)_main}.
\end{theorem}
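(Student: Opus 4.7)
My plan is to reduce the general setting of Theorem~\ref{Theorem_2_main} to the restricted setting of Theorem~\ref{theorem_1_main} via two standard constructions: purifying the mixed state $\rho_\mc{AB}$ and Naimark-dilating Bob's POVMs. Let $\ket{\Psi}_\mc{ABE}\in\mc{H_A}\otimes\mc{H_B}\otimes\mc{H_E}$ be any purification of $\rho_\mc{AB}$, and let $\{\hat{B}_{n|y}\}$ denote a Naimark extension of each $\{B_{n|y}\}$ to a projective measurement on $\mc{H_B}\otimes\mc{H_F}$ for some auxiliary Hilbert space $\mc{H_F}$. These extensions preserve the joint probabilities $p(a,b|x,y)$, so the pure state $\ket{\Psi}_\mc{ABE}\otimes\ket{0}_\mc{F}$ together with Alice's observables $A_k$ and the dilated projective observables $\hat{B}_y\otimes\I_\mc{E}$ also saturates the steering inequality~\eqref{Steering_inequality_main}.

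\textbf{Invoking Theorem~\ref{theorem_1_main}.} On the dilated configuration Alice's trusted Hilbert space is still $\mb{C}^d$ while Bob's enlarged space $\widetilde{\mc{H}}_\mc{B}:=\mc{H_B}\otimes\mc{H_F}\otimes\mc{H_E}$ carries projective unitary observables. Theorem~\ref{theorem_1_main} then provides a unitary $\widetilde{U}$ on $\widetilde{\mc{H}}_\mc{B}$ and a decomposition $\widetilde{\mc{H}}_\mc{B}=(\mb{C}^d)_{\mc{B}'}\otimes\mc{H}_{\mc{B}''}$ such that
\be
(\I_\mc{A}\otimes\widetilde{U})\ket{\Psi}_\mc{ABE}\otimes\ket{0}_\mc{F}=\ket{\phi_d^+}_\mc{AB'}\otimes\ket{\chi}_\mc{B''}
\ee
and $\widetilde{U}(\hat{B}_y\otimes\I_\mc{E})\widetilde{U}^\dagger=\overline{B}_y\otimes\I_{\mc{B}''}$ for every $y$. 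Tracing out $\mc{H_E}$ and $\mc{H_F}$ on both sides recovers $\rho_\mc{AB}$ on the left and $\proj{\phi_d^+}_\mc{AB'}\otimes\rho_\mc{B''}$ on the right, with $\rho_\mc{B''}=\Tr_\mc{EF}\proj{\chi}$, which is precisely~\eqref{state_self_test2_main}. Claim~(i) is then a consequence: the transformed observables $\overline{B}_y\otimes\I_{\mc{B}''}$ are unitary with $d$-th power equal to $\I$, so restricting the conjugation identity back to $\mc{H_B}$ forces each $B_y$ to act unitarily on the support of $\rho_\mc{B}$ and, after an inessential extension to the complementary subspace, the original POVMs are projective.

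\textbf{The main obstacle.} The delicate step is showing that $\widetilde{U}$ can be chosen to factorise as $U_\mc{B}\otimes V_\mc{EF}$, so that the conjugation descends to a unitary $U_\mc{B}$ acting on Bob's original Hilbert space alone, as required by~\eqref{operator_self_test2_main}. Two ingredients make this possible: first, $\hat{B}_y\otimes\I_\mc{E}$ and $A_k$ act trivially on the purifying register $\mc{H_E}$, hence the self-testing structure across $\mc{A}|\widetilde{\mc{B}}$ decouples $\mc{H_E}$ from the relevant tensor factor; second, the saturation of~\eqref{Steering_inequality_main} combined with the stabilizer relations~\eqref{AB_tilde_main} lifted to the purification forces $\rho_\mc{A}$ to have full rank $d$, so the $(\mb{C}^d)_{\mc{B}'}$ factor produced by Theorem~\ref{theorem_1_main} must sit inside $\mc{H_B}$ alone. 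An argument analogous to the one used in Theorem~1.2 of Ref.~\cite{sarkar2022certification} then yields the desired tensor-product form of $\widetilde{U}$, delivering the decomposition~(ii) and the conjugation identity~(iii) and completing the proof.
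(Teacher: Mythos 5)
Your proposal follows a genuinely different route from the paper (purification plus Naimark dilation, then reduction to Theorem~\ref{theorem_1_main}), but as written it has two substantive gaps, and the second one is exactly the hard part of the theorem. First, you over-read what Theorem~\ref{theorem_1_main} delivers: its conclusion is $(\I_\mc{A}\otimes U_\mc{B})\ket{\psi}_\mc{AB}=\ket{\phi_d^+}_\mc{AB}$ with \emph{no} tensor-factor decomposition and no residual state $\ket{\chi}_{\mc{B}''}$, and its operator statement certifies only $\mb{B}_y$, i.e.\ $B_y$ \emph{compressed to the support of Bob's reduced state}. Applied to the purification $\ket{\Psi}_\mc{ABE}$ with Bob holding $\mc{H_B}\otimes\mc{H_F}\otimes\mc{H_E}$, that support is spanned by vectors of the form $\ket{f_i}\propto\sum_s\sqrt{p_s}(\cdots)\ket{s}_\mc{E}$, which are in general entangled across $\mc{B}|\mc{E}$. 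Your assertion that ``the $(\mb{C}^d)_{\mc{B}'}$ factor must sit inside $\mc{H_B}$ alone'' is therefore not automatic; it is equivalent to the statement that the local supports $V_s=\mathrm{supp}(\rho_\mc{B}^{(s)})$ of the eigenvectors of $\rho_\mc{AB}$ are mutually orthogonal. That orthogonality is the central new ingredient needed to pass from pure to mixed states, and you defer it entirely to ``an argument analogous to Ref.~\cite{sarkar2022certification}'' without carrying it out. The paper's proof makes this explicit: it decomposes $\rho_\mc{AB}$ into eigenvectors, applies Theorem~\ref{theorem_1_main} to each one separately, uses the direct-sum structure $B_y=\mathbb{B}_y^{(s)}\oplus E_y^{(s)}$ to prove the $V_s$ are orthogonal, and only then glues the per-eigenvector unitaries into a single $U_\mc{B}$ and identifies the block components $\widetilde{B}_{i,j,k}=\delta_{ij}A_k^*$.

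Second, your derivation of claim (i) is not sound as stated. The Naimark extensions $\hat{B}_{n|y}$ are projective by construction, so certifying them tells you that $B_{n|y}=\bra{0}_\mc{F}\hat{B}_{n|y}\ket{0}_\mc{F}$ is a \emph{compression} of a unitary; concluding that $B_{n|y}$ is itself unitary (on all of $\mc{H_B}$, as the theorem claims, not merely on the support of $\rho_\mc{B}$) requires an extra extremality argument that you do not supply. The paper avoids this entirely with a short direct argument: the SOS relations give $\widetilde{B}_k^{(n)}(\widetilde{B}_k^{(n)})^\dagger=\I_\mc{B}$, and summing over $k$ with the orthogonality condition \eqref{b_delta_main} yields $\sum_{y}B_{n|y}B_{n|y}^\dagger=d\,\I_\mc{B}$, which together with $B_{n|y}B_{n|y}^\dagger\leqslant\I_\mc{B}$ forces each $B_{n|y}$ to be unitary everywhere. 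I would recommend either adopting that direct argument for (i) and then supplying the support-orthogonality lemma explicitly for (ii)--(iii), or, if you want to keep the purification route, proving the factorization $\widetilde{U}=U_\mc{B}\otimes V_\mc{EF}$ rather than asserting it.
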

Before we present the proof of this theorem, let us stress again that the above statement demonstrates that all our steering inequalities are nontrivial as they enable self-testing maximally entangled states. Clearly, this would not be possible if the maximal quantum value of our functionals could be achieved by separable states.

\begin{proof} Here, we present only a sketch of the proof, which full version can be found in Appendix \ref{appendix_self-testing}. Note that we follow similar steps implemented in the proof of Theorem 1.2 of Ref. \cite{sarkar2022certification}. If we observe the maximal violation of inequality \eqref{Steering_inequality_main}, then we can infer from the SOS decomposition \eqref{SOS_decomposition_main} that
\be\label{ABtilde_rho=rho}
\left(A_k^n\otimes\widetilde{B}_k^{(n)}\right)\rho_\mc{AB}=\rho_\mc{AB}
\ee
for every $n\in\{1,\ldots,d-1\}$ and $k\in\{0,\ldots,d-1\}$. From \eqref{ABtilde_rho=rho} and \eqref{Bn^dagger=Bd-n_main}, we can do some manipulations using \eqref{b_delta_main} to prove that $\widetilde{B}_k^{(n)}$, as well as $B_{n|y}$, must be unitary. That means that Bob's measurements $N_{b|y}$ must be projective and, therefore, we can use $B_{n|y}=B_y^n$. To extend the proof of Theorem \ref{theorem_1_main} to mixed states, we start by writing $\rho_\mc{AB}$ in its Schmidt decomposition
\be\label{rho_decomposed}
\rho_\mc{AB}=\sum_{s=1}^kp_s\ket{\psi_s}\bra{\psi_s}_\mc{AB},
\ee
where $p_s\geqslant 0$ for every $s$ such that $\sum_s p_s=1$. If the steering inequality is violated by $\rho_\mc{AB}$, then every $\ket{\psi_s}$ must satisfy $(A_k^n\otimes\widetilde{B}_k^{(n)})\ket{\psi_s}_\mc{AB}=\ket{\psi_s}_\mc{AB}$. From Theorem \eqref{theorem_1_main}, we can conclude that there must be unitaries $U_\mc{B}^{(s)}$ such that
\begin{subequations}
    \begin{align}
        \left(\I_\mc{A}\otimes U_\mc{B}^{(s)}\right)\ket{\psi_s}_\mc{AB}&=\ket{\phi_d^+},\\
        U_\mc{B}^{(s)}\mathbb{B}_y^{(s)}\left(U_\mc{B}^{(s)}\right)^\dagger&=\overline{B}_y,
    \end{align}
\end{subequations}
for every $y$, where $\mathbb{B}_y^{(s)}\coloneqq\Pi_\mc{B}^s B_y \Pi_\mc{B}^s$ is the operator $B_y$ projected onto the support of Bob's local state $\rho_\mc{B}^{(s)}=\Tr_\mc{A}\left(\ket{\psi_s}\bra{\psi_s}_\mc{AB} \right)$. Similarly to the proof of Theorem \ref{theorem_1_main}, we can conclude that Bob's observables can be decomposed as
\be
B_y=\mathbb{B}_y^{(s)}\oplus E_y^{(s)},
\ee
where $\mathbb{B}_y^{(s)}$ are unitary operators such that $(\mathbb{B}_y^{(s)})^d=\I_d$ and $E_y^{(s)}$ are also unitaries acting on the complement of the local supports $V_s \equiv\text{supp}(\rho_\mc{B}^{(s)})$. Now, we can multiply the above equation by $\gamma_{yk}^{(1)}$ on both sides and take the summation over $y$ to obtain
    \be\label{Bk_direct_sum_main}
    \widetilde{B}_k=\widetilde{\mathbb{B}}_k^{(s)}\oplus \widetilde{E}_k^{(s)},
    \ee
where $\widetilde{\mathbb{B}}_k^{(s)}\coloneqq \sum_{y=0}^{d-1}\gamma_{yk}^{(1)}\mathbb{B}_y^{(s)}$, and similarly to $\widetilde{E}_k^{(s)}$. Because $\widetilde{B}_k=\widetilde{B}_k^{(1)}$ is unitary, we can use the same arguments from Theorem \ref{theorem_1_main} to show that $\widetilde{\mathbb{B}}_k^{(s)}$ is also unitary. From \eqref{Bk_direct_sum_main}, we can follow the same steps in the proof of Theorem 1.2 of Ref. \cite{sarkar2022certification} to conclude that the local supports $V_s$ are mutually orthogonal. That implies that Bob's local Hilbert space admits the decomposition $\mc{H_B}=V_1\oplus V_2\oplus \ldots\oplus V_{k}=(\mathbb{C}^d)_{\mc{B}'}\otimes \mc{H}_{\mc{B}''}$, where the second equality is due to the fact that $\dim V_s=d$ for every $s=1,\ldots,K$. It is possible to show that the local supports $V_s$ are orthogonal subspaces. Therefore, there must exist a single unitary $U_\mc{B}$ that transform $\ket{\psi_s}_\mc{AB'}$ into $\ket{\phi_d^+}$ plus some auxiliary pure state $\ket{s}_\mc{B''}$. Applying $U_\mc{B}$ to \eqref{rho_decomposed} provides \eqref{state_self_test2_main}. From the decomposition of $\mc{H_B}$, the operators $\widetilde{B}_k$ under the action of $U_\mc{B}$ must admit the decomposition
    \be\label{U_B_decomposed_main}
    U_\mc{B} \widetilde{B}_k U_\mc{B}^\dagger= \sum_{i,j}\widetilde{B}_{i,j,k}\otimes \ket{i}\bra{j}_\mc{B''}, 
    \ee
where $\widetilde{B}_{i,j,k}$ are matrices of dimension $d$ acting over $(\mathbb{C}^d)_\mc{B'}$. From \eqref{ABtilde_rho=rho} and the above, it can be show that $\widetilde{B}_{i,j,k}=\delta_{ij}A_k^*$. Therefore, we have $U_\mc{B}\widetilde{B}_k U_\mc{B}^\dagger=A_k^*\otimes\I_\mc{B''}$, which implies \eqref{operator_self_test2_main} as we wanted to prove.
\end{proof}

\subsection{Classical bound for the qutrit case}\label{sec_classical_bound}

To calculate the classical bound of \eqref{Steering_inequality_main}, first we need to provide solutions to \eqref{b_delta_main}. In turns out that the system of equations
can be solved analytically and all possible solutions can be brought to two 
possibilities, one of them being the following one,
\be
\begin{array}{c}
y=1:\left\{\begin{array}{l}
\phi_{0,1}=\phi_{0,0}-\dfrac{2\pi}{3},\\[1ex]
\phi_{1,1}=\phi_{1,0},
\end{array}\right.\\
y=2:
\left\{
\begin{array}{l}
\phi_{0,2}=\phi_{0,0}-\dfrac{2\pi}{3},\\[1ex]
\phi_{1,2}=\phi_{1,0}+\dfrac{2\pi}{3}.
\end{array}\right.
\end{array}
\label{conditions}
\ee
which on the level of observables translates to 
\begin{subequations}
\begin{align}\label{1}
\overline{B}_0 &\equiv\begin{pmatrix}
 0 & 0 & e^{\mathbbm{i} \phi _{0,0}} \\
 e^{\mathbbm{i} \phi _{1,0}} & 0 & 0 \\
 0 & e^{-\mathbbm{i} \left(\phi _{0,0}+\phi _{1,0}\right)} & 0 
\end{pmatrix},\\
\label{2}\overline{B}_1 &\equiv\begin{pmatrix}
 0 & 0 & \omega^2 e^{\mathbbm{i} \phi _{0,0}} \\
 e^{\mathbbm{i} \phi _{1,0}} & 0 & 0 \\
 0 & \omega e^{-\mathbbm{i} \left(\phi _{0,0}+\phi _{1,0}\right)} & 0 
\end{pmatrix},\\
\label{3}\overline{B}_2 &\equiv\begin{pmatrix}
 0 & 0 & \omega^2 e^{\mathbbm{i} \phi _{0,0}} \\
 \omega e^{\mathbbm{i} \phi _{1,0}} & 0 & 0 \\
 0 & e^{-\mathbbm{i} \left(\phi _{0,0}+\phi _{1,0}\right)} & 0 
\end{pmatrix},
\end{align}
\end{subequations}
where $\phi_{0,0},\phi_{1,0}\in\mathbb{R}$. The other solution is presented in Appendix \ref{appendix_Uniqueness}.
%
%
%

Once the numbers $\{\phi_{j,y}\}$ are specified, the classical bound can be numerically calculated as
\be\label{classical_bound}
\beta_C\left(\phi_{0,0},\phi_{1,0}\right)=\max_{\{a_0,a_1,a_2,b_0,b_1,b_2\}}\sum_{n=1}^2\sum_{k,y=0}^{2}\gamma_{yk}^{(n)}a_k^n b_y^n,
\ee
where $a_k,b_y\in\{1,\omega,\omega^2\}$, for every $k$ and $y$, are the outcomes of the measurements. 
%
In Fig. \ref{fig:plot} we plot the numerical evaluation of \eqref{classical_bound} as a function of $\{\phi_{0,0},\phi_{1,0}\}$ following solution \eqref{conditions}. 


For the specific values $\{\phi_{0,0},\phi_{1,0}\}=\{4\pi/9,-2\pi/9\}$, we regain the steering inequality derived in \cite{kaniewski2019maximal} for $d=3$ with classical bound $\beta_C=6\cos(\pi/9)\sim 5.63816$. 

Let us finally mention that finding analytically a general solution to the system \eqref{b_delta_main} is a difficult task and, in fact, no general method is known that does the job. Yet, a particular solution to these equations for prime $d$ was found in Ref. \cite{kaniewski2019maximal} and reads 
\begin{equation}
    \gamma_{yk}^{(n)}=\frac{\lambda_n^*}{\sqrt{d}}\omega^{-nyk}\omega^{-k(k+1)},
\end{equation}
where $\lambda_n$ is a complex coefficient that has a quite complicated structure and is explicitly defined in Eqs. (18), (19) and (20) of Ref. \cite{kaniewski2019maximal}. However, to find other solutions one can use
numerical methods.

\begin{figure}[h]
    \centering
    \includegraphics[width=\linewidth]{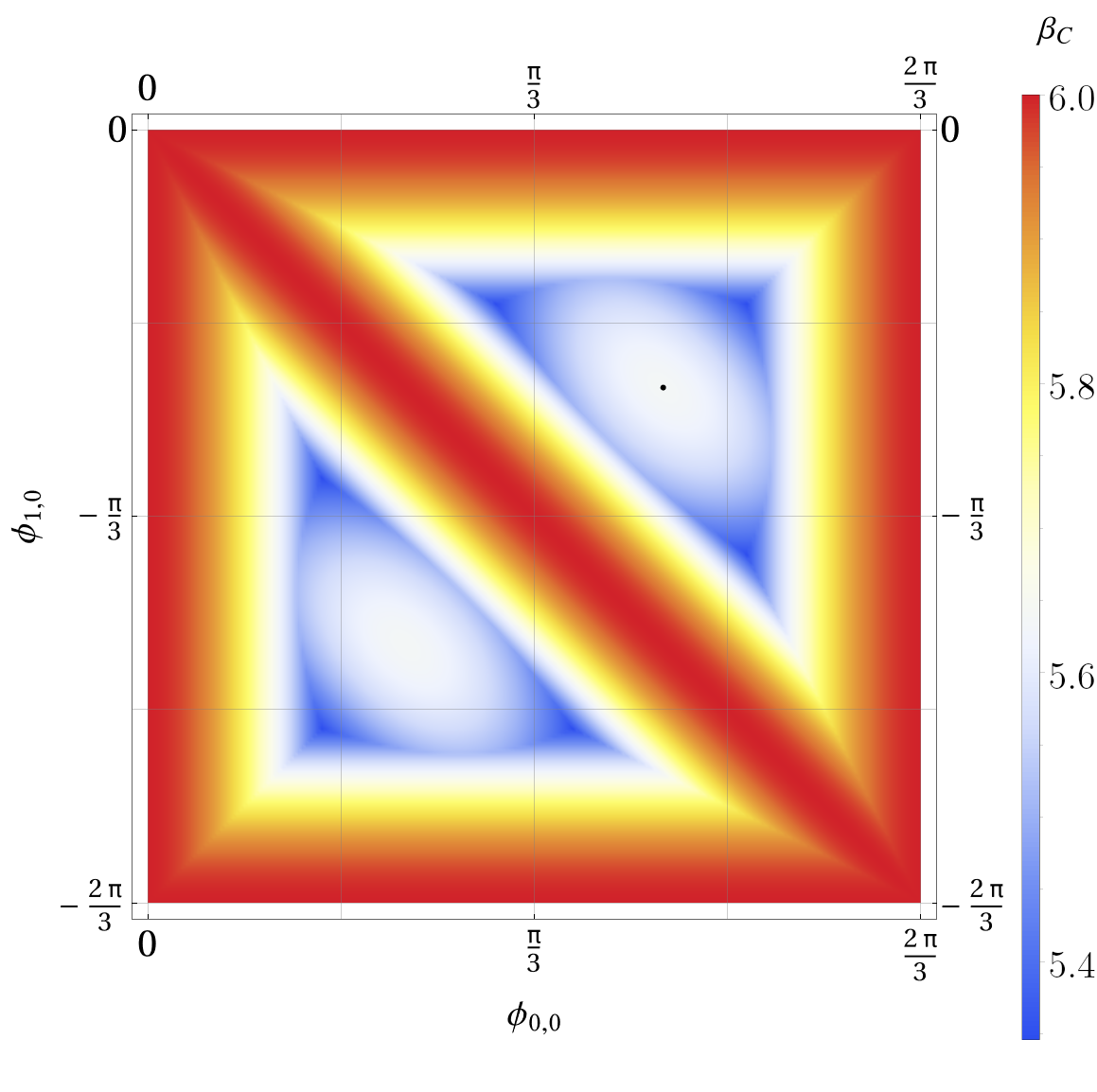}
    \caption{Classical bound as a function of $\phi_{0,0}$ and $\phi_{1,0}$ from solution \eqref{conditions}. Values range from $\sim 5.34604$ to $6$. The black dot represents the classical bound $\beta_C=6\cos(\pi/9)\sim 5.63816$ of the Steering inequality derived in Ref. \cite{kaniewski2019maximal} when $\{\phi_{0,0},\phi_{1,0}\}=\{4\pi/9,-2\pi/9\}$.}
    \label{fig:plot}
\end{figure}

\section{Robust self-testing}\label{sec_robustness}

If the maximal violation of the steering inequality \eqref{Steering_inequality_main} is not reached by an $\epsilon$, then the state and the measurements performed in the experiment will be close to the reference state and measurements by a factor as a function of $\epsilon$. Below, we provide the robustness analysis of our self-testing statement for the qutrit case, assuming that the measurements performed by Bob are projective and the state is pure. 
\begin{theorem}
For $d=3$, consider the unitary $d$-outcome observables $A_k=XZ^{-k}$ and $B_y$, for $k,y\in\{0,1,2\}$ acting on, respectively $\mathbb{C}^3$ and $\mc{H}_\mc{B}$. If the Steering inequality \eqref{Steering_inequality_main}
\be
\langle\mc{S}\rangle=\sum_{n=1}^{2}\sum_{k=0}^{2} \langle A_k^n\otimes\widetilde{B}_k^{(n)} \rangle \leqslant \beta_C,
\ee
is violated by a state $\ket{\psi}_\mc{AB} \in \mathbb{C}^{3} \otimes \mathcal{H}_{B} $ and observables $ B_{y} $ such that $\langle\mc{S}\rangle \geqslant 6 - \epsilon > \beta_C$, then there exists a unitary operation $ U_\mc{B}: \mathcal{H}_\mc{B} \to \mathcal{H}_\mc{B} $ such that
\begin{equation}\label{bound1}
    \left\|\left(\I_\mc{A}\otimes U_\mc{B}\right)\left( \I_\mc{A}\otimes B_y \right)\ket{\psi}-\I\otimes \overline{B}_y\ket{\phi_3^+}\right\|\leqslant 2\sqrt{3}(2\epsilon)^\frac{1}{4}
\end{equation}
and
\begin{equation}\label{bound2}
    \left\|B_y-\overline{B}_y\right\|_2\leqslant 12\left(2\epsilon \right)^\frac{1}{4},
\end{equation}
where $ k = 0, 1, 2 $ and $ \overline{B}_{y}$ are Bob's ideal observables given by \eqref{Bbar_main}, and $ \| \cdot \|_{2} $ stands for the Hilbert-Schmidt norm.
\end{theorem}
We left the proof of the robustness in Appendix \ref{robustness}. Our robustness result for $d=3$ already well illustrates the behaviour of error bounds as a function of $\epsilon$. The generalisation to higher $d$ will introduce some additional $d$-dependent coefficient while keeping similar behaviour in $\epsilon$.

\section{Discussion}\label{sec_discussion}

Recently, very general schemes for device-independent certification of quantum measurements have been introduced \cite{chen2023all,sarkar2023universal}. However, in these schemes, the generality comes at the price of considerably increasing the cost of implementing them in terms of the number of measurements that the parties must perform. One way to reduce that cost is to resort to semi-device-independent scenarios, such as that based on quantum steering, where it is assumed that one of the measuring devices is trusted and performs known measurements.

In fact, by making the scheme SDI, we can reduce the number of measurements required for the certification but still encompass a large class of measurement operators. Here, we constructed a family of steering inequalities \eqref{Steering_inequality_main} that are tailored to certain classes of projective $d$-outcome measurements on the untrusted side and a set of $d$ fixed and known measurements on the trusted one. 
One can think of our inequalities as a generalization of the Bell inequalities introduced in Ref. \cite{kaniewski2019maximal} to more general classes of measurements, which comes however at the cost of assuming that one of the measuring devices is trusted; this is illustrated on Fig. \ref{fig:plot} for a particular case of $d=3$. It is important to note that for a measure-zero set of points in Fig. \ref{fig:plot}, the classical bound reaches the value 6, which is also the quantum bound in dimension 3. In such a case, the steering inequality is trivial and, therefore, useless for self-testing. 

We then demonstrate that the new steering inequalities can be used for certification purposes and show that their maximal violation allows for SDI certification of the considered classes of measurements on the untrusted side as well as the maximally entangled state of two qudits.


An important open question concerns reducing the number of measurements from $d$ to 2, which is in fact the minimal number of measurements necessary to observe Bell-nonlocality or quantum steering. Finally, it would also be useful to find a way of extending our inequalities to the device-independent scenario.

\section{Acknowledgments}
This project was funded within the QuantERA II Programme (VERIqTAS project) that has received funding from the European Union's Horizon 2020 research and innovation programme under Grant Agreement No 101017733 and from the Polish National Science Center (projects No. 2021/03/Y/ST2/00175 and No. 2022/46/E/ST2/00115).
This project has received funding from the European Union's Horizon Europe research and innovation programme under grant agreement No 101080086NeQST.

\bibliography{bibliography.bib}

\begin{thebibliography}{31}%
\makeatletter
\providecommand \@ifxundefined [1]{%
 \@ifx{#1\undefined}
}%
\providecommand \@ifnum [1]{%
 \ifnum #1\expandafter \@firstoftwo
 \else \expandafter \@secondoftwo
 \fi
}%
\providecommand \@ifx [1]{%
 \ifx #1\expandafter \@firstoftwo
 \else \expandafter \@secondoftwo
 \fi
}%
\providecommand \natexlab [1]{#1}%
\providecommand \enquote  [1]{``#1''}%
\providecommand \bibnamefont  [1]{#1}%
\providecommand \bibfnamefont [1]{#1}%
\providecommand \citenamefont [1]{#1}%
\providecommand \href@noop [0]{\@secondoftwo}%
\providecommand \href [0]{\begingroup \@sanitize@url \@href}%
\providecommand \@href[1]{\@@startlink{#1}\@@href}%
\providecommand \@@href[1]{\endgroup#1\@@endlink}%
\providecommand \@sanitize@url [0]{\catcode `\\12\catcode `\$12\catcode
  `\&12\catcode `\#12\catcode `\^12\catcode `\_12\catcode `\%12\relax}%
\providecommand \@@startlink[1]{}%
\providecommand \@@endlink[0]{}%
\providecommand \url  [0]{\begingroup\@sanitize@url \@url }%
\providecommand \@url [1]{\endgroup\@href {#1}{\urlprefix }}%
\providecommand \urlprefix  [0]{URL }%
\providecommand \Eprint [0]{\href }%
\providecommand \doibase [0]{https://doi.org/}%
\providecommand \selectlanguage [0]{\@gobble}%
\providecommand \bibinfo  [0]{\@secondoftwo}%
\providecommand \bibfield  [0]{\@secondoftwo}%
\providecommand \translation [1]{[#1]}%
\providecommand \BibitemOpen [0]{}%
\providecommand \bibitemStop [0]{}%
\providecommand \bibitemNoStop [0]{.\EOS\space}%
\providecommand \EOS [0]{\spacefactor3000\relax}%
\providecommand \BibitemShut  [1]{\csname bibitem#1\endcsname}%
\let\auto@bib@innerbib\@empty
\bibitem [{\citenamefont {Brunner}\ \emph {et~al.}(2014)\citenamefont
  {Brunner}, \citenamefont {Cavalcanti}, \citenamefont {Pironio}, \citenamefont
  {Scarani},\ and\ \citenamefont {Wehner}}]{brunner2014bell}%
  \BibitemOpen
  \bibfield  {author} {\bibinfo {author} {\bibfnamefont {N.}~\bibnamefont
  {Brunner}}, \bibinfo {author} {\bibfnamefont {D.}~\bibnamefont {Cavalcanti}},
  \bibinfo {author} {\bibfnamefont {S.}~\bibnamefont {Pironio}}, \bibinfo
  {author} {\bibfnamefont {V.}~\bibnamefont {Scarani}},\ and\ \bibinfo {author}
  {\bibfnamefont {S.}~\bibnamefont {Wehner}},\ }\bibfield  {title} {\bibinfo
  {title} {Bell nonlocality},\ }\href
  {https://doi.org/10.1103/RevModPhys.86.419} {\bibfield  {journal} {\bibinfo
  {journal} {Rev. Mod. Phys.}\ }\textbf {\bibinfo {volume} {86}},\ \bibinfo
  {pages} {419} (\bibinfo {year} {2014})}\BibitemShut {NoStop}%
\bibitem [{\citenamefont {Ac{\ifmmode\acute{\imath}\else\'{\i}\fi}n}\ \emph
  {et~al.}(2007)\citenamefont {Ac{\ifmmode\acute{\imath}\else\'{\i}\fi}n},
  \citenamefont {Brunner}, \citenamefont {Gisin}, \citenamefont {Massar},
  \citenamefont {Pironio},\ and\ \citenamefont {Scarani}}]{Acin2007device}%
  \BibitemOpen
  \bibfield  {author} {\bibinfo {author} {\bibfnamefont {A.}~\bibnamefont
  {Ac{\ifmmode\acute{\imath}\else\'{\i}\fi}n}}, \bibinfo {author}
  {\bibfnamefont {N.}~\bibnamefont {Brunner}}, \bibinfo {author} {\bibfnamefont
  {N.}~\bibnamefont {Gisin}}, \bibinfo {author} {\bibfnamefont
  {S.}~\bibnamefont {Massar}}, \bibinfo {author} {\bibfnamefont
  {S.}~\bibnamefont {Pironio}},\ and\ \bibinfo {author} {\bibfnamefont
  {V.}~\bibnamefont {Scarani}},\ }\bibfield  {title} {\bibinfo {title}
  {{Device-Independent Security of Quantum Cryptography against Collective
  Attacks}},\ }\href {https://doi.org/10.1103/PhysRevLett.98.230501} {\bibfield
   {journal} {\bibinfo  {journal} {Phys. Rev. Lett.}\ }\textbf {\bibinfo
  {volume} {98}},\ \bibinfo {pages} {230501} (\bibinfo {year}
  {2007})}\BibitemShut {NoStop}%
\bibitem [{\citenamefont {Mayers}\ and\ \citenamefont
  {Yao}(1998)}]{mayers1998quantum}%
  \BibitemOpen
  \bibfield  {author} {\bibinfo {author} {\bibfnamefont {D.}~\bibnamefont
  {Mayers}}\ and\ \bibinfo {author} {\bibfnamefont {A.}~\bibnamefont {Yao}},\
  }\bibfield  {title} {\bibinfo {title} {Quantum cryptography with imperfect
  apparatus},\ }in\ \href {https://doi.org/10.1109/SFCS.1998.743501} {\emph
  {\bibinfo {booktitle} {Proceedings 39th Annual Symposium on Foundations of
  Computer Science (Cat. No.98CB36280)}}}\ (\bibinfo {year} {1998})\ p.\
  \bibinfo {pages} {503}\BibitemShut {NoStop}%
\bibitem [{\citenamefont {Mayers}\ and\ \citenamefont
  {Yao}(2004)}]{mayers2004self}%
  \BibitemOpen
  \bibfield  {author} {\bibinfo {author} {\bibfnamefont {D.}~\bibnamefont
  {Mayers}}\ and\ \bibinfo {author} {\bibfnamefont {A.}~\bibnamefont {Yao}},\
  }\bibfield  {title} {\bibinfo {title} {Self testing quantum apparatus},\
  }\href {10.5555/2011827.2011830} {\bibfield  {journal} {\bibinfo  {journal}
  {Quantum Info. Comput.}\ }\textbf {\bibinfo {volume} {4}},\ \bibinfo {pages}
  {273} (\bibinfo {year} {2004})}\BibitemShut {NoStop}%
\bibitem [{\citenamefont {{\v{S}}upi{\'c}}\ and\ \citenamefont
  {Bowles}(2020)}]{supic2020self}%
  \BibitemOpen
  \bibfield  {author} {\bibinfo {author} {\bibfnamefont {I.}~\bibnamefont
  {{\v{S}}upi{\'c}}}\ and\ \bibinfo {author} {\bibfnamefont {J.}~\bibnamefont
  {Bowles}},\ }\bibfield  {title} {\bibinfo {title} {Self-testing of quantum
  systems: a review},\ }\href {10.22331/q-2020-09-30-337} {\bibfield  {journal}
  {\bibinfo  {journal} {Quantum}\ }\textbf {\bibinfo {volume} {4}},\ \bibinfo
  {pages} {337} (\bibinfo {year} {2020})}\BibitemShut {NoStop}%
\bibitem [{\citenamefont {McKague}\ \emph {et~al.}(2012)\citenamefont
  {McKague}, \citenamefont {Yang},\ and\ \citenamefont
  {Scarani}}]{mckague2012robust}%
  \BibitemOpen
  \bibfield  {author} {\bibinfo {author} {\bibfnamefont {M.}~\bibnamefont
  {McKague}}, \bibinfo {author} {\bibfnamefont {T.~H.}\ \bibnamefont {Yang}},\
  and\ \bibinfo {author} {\bibfnamefont {V.}~\bibnamefont {Scarani}},\
  }\bibfield  {title} {\bibinfo {title} {Robust self-testing of the singlet},\
  }\href {https://doi.org/10.1088/1751-8113/45/45/455304} {\bibfield  {journal}
  {\bibinfo  {journal} {J. Phys. A: Math. Theo.}\ }\textbf {\bibinfo {volume}
  {45}},\ \bibinfo {pages} {455304} (\bibinfo {year} {2012})}\BibitemShut
  {NoStop}%
\bibitem [{\citenamefont {Reichardt}\ \emph {et~al.}(2013)\citenamefont
  {Reichardt}, \citenamefont {Unger},\ and\ \citenamefont
  {Vazirani}}]{Reichardt2013classical}%
  \BibitemOpen
  \bibfield  {author} {\bibinfo {author} {\bibfnamefont {B.~W.}\ \bibnamefont
  {Reichardt}}, \bibinfo {author} {\bibfnamefont {F.}~\bibnamefont {Unger}},\
  and\ \bibinfo {author} {\bibfnamefont {U.}~\bibnamefont {Vazirani}},\
  }\bibfield  {title} {\bibinfo {title} {Classical command of quantum
  systems},\ }\href {https://doi.org/10.1038/nature12035} {\bibfield  {journal}
  {\bibinfo  {journal} {Nature}\ }\textbf {\bibinfo {volume} {496}},\ \bibinfo
  {pages} {456} (\bibinfo {year} {2013})}\BibitemShut {NoStop}%
\bibitem [{\citenamefont {McKague}(2014)}]{McKague2014self}%
  \BibitemOpen
  \bibfield  {author} {\bibinfo {author} {\bibfnamefont {M.}~\bibnamefont
  {McKague}},\ }\bibfield  {title} {\bibinfo {title} {{Self-Testing Graph
  States}},\ }in\ \href {https://doi.org/10.1007/978-3-642-54429-3_7} {\emph
  {\bibinfo {booktitle} {Theory of Quantum Computation, Communication, and
  Cryptography}}}\ (\bibinfo  {publisher} {Springer},\ \bibinfo {address}
  {Berlin, Germany},\ \bibinfo {year} {2014})\ p.\ \bibinfo {pages}
  {104}\BibitemShut {NoStop}%
\bibitem [{\citenamefont {Wu}\ \emph {et~al.}(2014)\citenamefont {Wu},
  \citenamefont {Cai}, \citenamefont {Yang}, \citenamefont {Le}, \citenamefont
  {Bancal},\ and\ \citenamefont {Scarani}}]{Wu2014robust}%
  \BibitemOpen
  \bibfield  {author} {\bibinfo {author} {\bibfnamefont {X.}~\bibnamefont
  {Wu}}, \bibinfo {author} {\bibfnamefont {Y.}~\bibnamefont {Cai}}, \bibinfo
  {author} {\bibfnamefont {T.~H.}\ \bibnamefont {Yang}}, \bibinfo {author}
  {\bibfnamefont {H.~N.}\ \bibnamefont {Le}}, \bibinfo {author} {\bibfnamefont
  {J.-D.}\ \bibnamefont {Bancal}},\ and\ \bibinfo {author} {\bibfnamefont
  {V.}~\bibnamefont {Scarani}},\ }\bibfield  {title} {\bibinfo {title} {{Robust
  self-testing of the three-qubit $W$ state}},\ }\href
  {https://doi.org/10.1103/PhysRevA.90.042339} {\bibfield  {journal} {\bibinfo
  {journal} {Phys. Rev. A}\ }\textbf {\bibinfo {volume} {90}},\ \bibinfo
  {pages} {042339} (\bibinfo {year} {2014})}\BibitemShut {NoStop}%
\bibitem [{\citenamefont {Bamps}\ and\ \citenamefont
  {Pironio}(2015)}]{Bamps2015sos}%
  \BibitemOpen
  \bibfield  {author} {\bibinfo {author} {\bibfnamefont {C.}~\bibnamefont
  {Bamps}}\ and\ \bibinfo {author} {\bibfnamefont {S.}~\bibnamefont
  {Pironio}},\ }\bibfield  {title} {\bibinfo {title} {{Sum-of-squares
  decompositions for a family of Clauser-Horne-Shimony-Holt-like inequalities
  and their application to self-testing}},\ }\href
  {https://doi.org/10.1103/PhysRevA.91.052111} {\bibfield  {journal} {\bibinfo
  {journal} {Phys. Rev. A}\ }\textbf {\bibinfo {volume} {91}},\ \bibinfo
  {pages} {052111} (\bibinfo {year} {2015})}\BibitemShut {NoStop}%
\bibitem [{\citenamefont {Wang}\ \emph {et~al.}(2016)\citenamefont {Wang},
  \citenamefont {Wu},\ and\ \citenamefont {Scarani}}]{Wang2016all}%
  \BibitemOpen
  \bibfield  {author} {\bibinfo {author} {\bibfnamefont {Y.}~\bibnamefont
  {Wang}}, \bibinfo {author} {\bibfnamefont {X.}~\bibnamefont {Wu}},\ and\
  \bibinfo {author} {\bibfnamefont {V.}~\bibnamefont {Scarani}},\ }\bibfield
  {title} {\bibinfo {title} {{All the self-testings of the singlet for two
  binary measurements}},\ }\href
  {https://doi.org/10.1088/1367-2630/18/2/025021} {\bibfield  {journal}
  {\bibinfo  {journal} {New J. Phys.}\ }\textbf {\bibinfo {volume} {18}},\
  \bibinfo {pages} {025021} (\bibinfo {year} {2016})}\BibitemShut {NoStop}%
\bibitem [{\citenamefont
  {{\ifmmode\check{S}\else\v{S}\fi}upi{\ifmmode\acute{c}\else\'{c}\fi}}\ \emph
  {et~al.}(2016)\citenamefont
  {{\ifmmode\check{S}\else\v{S}\fi}upi{\ifmmode\acute{c}\else\'{c}\fi}},
  \citenamefont {Augusiak}, \citenamefont {Salavrakos},\ and\ \citenamefont
  {Ac{\ifmmode\acute{\imath}\else\'{\i}\fi}n}}]{Supic2016self-testing}%
  \BibitemOpen
  \bibfield  {author} {\bibinfo {author} {\bibfnamefont {I.}~\bibnamefont
  {{\ifmmode\check{S}\else\v{S}\fi}upi{\ifmmode\acute{c}\else\'{c}\fi}}},
  \bibinfo {author} {\bibfnamefont {R.}~\bibnamefont {Augusiak}}, \bibinfo
  {author} {\bibfnamefont {A.}~\bibnamefont {Salavrakos}},\ and\ \bibinfo
  {author} {\bibfnamefont {A.}~\bibnamefont
  {Ac{\ifmmode\acute{\imath}\else\'{\i}\fi}n}},\ }\bibfield  {title} {\bibinfo
  {title} {{Self-testing protocols based on the chained Bell inequalities}},\
  }\href {https://doi.org/10.1088/1367-2630/18/3/035013} {\bibfield  {journal}
  {\bibinfo  {journal} {New J. Phys.}\ }\textbf {\bibinfo {volume} {18}},\
  \bibinfo {pages} {035013} (\bibinfo {year} {2016})}\BibitemShut {NoStop}%
\bibitem [{\citenamefont {Coladangelo}\ \emph {et~al.}(2017)\citenamefont
  {Coladangelo}, \citenamefont {Goh},\ and\ \citenamefont
  {Scarani}}]{Coladangelo2017all}%
  \BibitemOpen
  \bibfield  {author} {\bibinfo {author} {\bibfnamefont {A.}~\bibnamefont
  {Coladangelo}}, \bibinfo {author} {\bibfnamefont {K.~T.}\ \bibnamefont
  {Goh}},\ and\ \bibinfo {author} {\bibfnamefont {V.}~\bibnamefont {Scarani}},\
  }\bibfield  {title} {\bibinfo {title} {{All pure bipartite entangled states
  can be self-tested}},\ }\href {https://doi.org/10.1038/ncomms15485}
  {\bibfield  {journal} {\bibinfo  {journal} {Nat. Commun.}\ }\textbf {\bibinfo
  {volume} {8}},\ \bibinfo {pages} {1} (\bibinfo {year} {2017})}\BibitemShut
  {NoStop}%
\bibitem [{\citenamefont {Kaniewski}\ \emph {et~al.}(2019)\citenamefont
  {Kaniewski}, \citenamefont {{\v{S}}upi{\'c}}, \citenamefont {Tura},
  \citenamefont {Baccari}, \citenamefont {Salavrakos},\ and\ \citenamefont
  {Augusiak}}]{kaniewski2019maximal}%
  \BibitemOpen
  \bibfield  {author} {\bibinfo {author} {\bibfnamefont {J.}~\bibnamefont
  {Kaniewski}}, \bibinfo {author} {\bibfnamefont {I.}~\bibnamefont
  {{\v{S}}upi{\'c}}}, \bibinfo {author} {\bibfnamefont {J.}~\bibnamefont
  {Tura}}, \bibinfo {author} {\bibfnamefont {F.}~\bibnamefont {Baccari}},
  \bibinfo {author} {\bibfnamefont {A.}~\bibnamefont {Salavrakos}},\ and\
  \bibinfo {author} {\bibfnamefont {R.}~\bibnamefont {Augusiak}},\ }\bibfield
  {title} {\bibinfo {title} {Maximal nonlocality from maximal entanglement and
  mutually unbiased bases, and self-testing of two-qutrit quantum systems},\
  }\href {https://doi.org/10.22331/q-2019-10-24-198} {\bibfield  {journal}
  {\bibinfo  {journal} {Quantum}\ }\textbf {\bibinfo {volume} {3}},\ \bibinfo
  {pages} {198} (\bibinfo {year} {2019})}\BibitemShut {NoStop}%
\bibitem [{\citenamefont {Man{\ifmmode\check{c}\else\v{c}\fi}inska}\ \emph
  {et~al.}(2021)\citenamefont {Man{\ifmmode\check{c}\else\v{c}\fi}inska},
  \citenamefont {Prakash},\ and\ \citenamefont
  {Schafhauser}}]{Mancinska2021constant}%
  \BibitemOpen
  \bibfield  {author} {\bibinfo {author} {\bibfnamefont {L.}~\bibnamefont
  {Man{\ifmmode\check{c}\else\v{c}\fi}inska}}, \bibinfo {author} {\bibfnamefont
  {J.}~\bibnamefont {Prakash}},\ and\ \bibinfo {author} {\bibfnamefont
  {C.}~\bibnamefont {Schafhauser}},\ }\bibfield  {title} {\bibinfo {title}
  {{Constant-sized robust self-tests for states and measurements of unbounded
  dimension}},\ }\href {https://doi.org/10.48550/arXiv.2103.01729} {\bibfield
  {journal} {\bibinfo  {journal} {arXiv:2103.01729}\ } (\bibinfo {year}
  {2021})}\BibitemShut {NoStop}%
\bibitem [{\citenamefont {Tavakoli}\ \emph {et~al.}(2021)\citenamefont
  {Tavakoli}, \citenamefont {Farkas}, \citenamefont {Rosset}, \citenamefont
  {Bancal},\ and\ \citenamefont {Kaniewski}}]{Tavakoli2021mutually}%
  \BibitemOpen
  \bibfield  {author} {\bibinfo {author} {\bibfnamefont {A.}~\bibnamefont
  {Tavakoli}}, \bibinfo {author} {\bibfnamefont {M.}~\bibnamefont {Farkas}},
  \bibinfo {author} {\bibfnamefont {D.}~\bibnamefont {Rosset}}, \bibinfo
  {author} {\bibfnamefont {J.-D.}\ \bibnamefont {Bancal}},\ and\ \bibinfo
  {author} {\bibfnamefont {J.}~\bibnamefont {Kaniewski}},\ }\bibfield  {title}
  {\bibinfo {title} {{Mutually unbiased bases and symmetric informationally
  complete measurements in Bell experiments}},\ }\href
  {https://doi.org/10.1126/sciadv.abc3847} {\bibfield  {journal} {\bibinfo
  {journal} {Sci. Adv.}\ }\textbf {\bibinfo {volume} {7}},\ \bibinfo {pages}
  {eabc3847} (\bibinfo {year} {2021})}\BibitemShut {NoStop}%
\bibitem [{\citenamefont {Sarkar}\ \emph {et~al.}(2021)\citenamefont {Sarkar},
  \citenamefont {Saha}, \citenamefont {Kaniewski},\ and\ \citenamefont
  {Augusiak}}]{Sarkar2021npj}%
  \BibitemOpen
  \bibfield  {author} {\bibinfo {author} {\bibfnamefont {S.}~\bibnamefont
  {Sarkar}}, \bibinfo {author} {\bibfnamefont {D.}~\bibnamefont {Saha}},
  \bibinfo {author} {\bibfnamefont {J.}~\bibnamefont {Kaniewski}},\ and\
  \bibinfo {author} {\bibfnamefont {R.}~\bibnamefont {Augusiak}},\ }\bibfield
  {title} {\bibinfo {title} {{Self-testing quantum systems of arbitrary local
  dimension with minimal number of measurements}},\ }\href
  {https://doi.org/10.1038/s41534-021-00490-3} {\bibfield  {journal} {\bibinfo
  {journal} {npj Quantum Inf.}\ }\textbf {\bibinfo {volume} {7}},\ \bibinfo
  {pages} {1} (\bibinfo {year} {2021})}\BibitemShut {NoStop}%
\bibitem [{\citenamefont
  {{\ifmmode\check{S}\else\v{S}\fi}upi{\ifmmode\acute{c}\else\'{c}\fi}}\ \emph
  {et~al.}(2023)\citenamefont
  {{\ifmmode\check{S}\else\v{S}\fi}upi{\ifmmode\acute{c}\else\'{c}\fi}},
  \citenamefont {Bowles}, \citenamefont {Renou}, \citenamefont
  {Ac{\ifmmode\acute{\imath}\else\'{\i}\fi}n},\ and\ \citenamefont
  {Hoban}}]{Supic2023quantum}%
  \BibitemOpen
  \bibfield  {author} {\bibinfo {author} {\bibfnamefont {I.}~\bibnamefont
  {{\ifmmode\check{S}\else\v{S}\fi}upi{\ifmmode\acute{c}\else\'{c}\fi}}},
  \bibinfo {author} {\bibfnamefont {J.}~\bibnamefont {Bowles}}, \bibinfo
  {author} {\bibfnamefont {M.-O.}\ \bibnamefont {Renou}}, \bibinfo {author}
  {\bibfnamefont {A.}~\bibnamefont
  {Ac{\ifmmode\acute{\imath}\else\'{\i}\fi}n}},\ and\ \bibinfo {author}
  {\bibfnamefont {M.~J.}\ \bibnamefont {Hoban}},\ }\bibfield  {title} {\bibinfo
  {title} {{Quantum networks self-test all entangled states}},\ }\href
  {https://doi.org/10.1038/s41567-023-01945-4} {\bibfield  {journal} {\bibinfo
  {journal} {Nat. Phys.}\ }\textbf {\bibinfo {volume} {19}},\ \bibinfo {pages}
  {670} (\bibinfo {year} {2023})}\BibitemShut {NoStop}%
\bibitem [{\citenamefont {Das}\ \emph {et~al.}(2022)\citenamefont {Das},
  \citenamefont {Maity}, \citenamefont {Saha},\ and\ \citenamefont
  {Majumdar}}]{Das2022robust}%
  \BibitemOpen
  \bibfield  {author} {\bibinfo {author} {\bibfnamefont {D.}~\bibnamefont
  {Das}}, \bibinfo {author} {\bibfnamefont {A.~G.}\ \bibnamefont {Maity}},
  \bibinfo {author} {\bibfnamefont {D.}~\bibnamefont {Saha}},\ and\ \bibinfo
  {author} {\bibfnamefont {A.~S.}\ \bibnamefont {Majumdar}},\ }\bibfield
  {title} {\bibinfo {title} {{Robust certification of arbitrary outcome quantum
  measurements from temporal correlations}},\ }\href
  {https://doi.org/10.22331/q-2022-05-19-716} {\bibfield  {journal} {\bibinfo
  {journal} {Quantum}\ }\textbf {\bibinfo {volume} {6}},\ \bibinfo {pages}
  {716} (\bibinfo {year} {2022})}\BibitemShut {NoStop}%
\bibitem [{\citenamefont {Chen}\ \emph {et~al.}(2024)\citenamefont {Chen},
  \citenamefont {Man{\v{c}}inska},\ and\ \citenamefont
  {Vol{\v{c}}i{\v{c}}}}]{chen2023all}%
  \BibitemOpen
  \bibfield  {author} {\bibinfo {author} {\bibfnamefont {R.}~\bibnamefont
  {Chen}}, \bibinfo {author} {\bibfnamefont {L.}~\bibnamefont
  {Man{\v{c}}inska}},\ and\ \bibinfo {author} {\bibfnamefont {J.}~\bibnamefont
  {Vol{\v{c}}i{\v{c}}}},\ }\bibfield  {title} {\bibinfo {title} {All real
  projective measurements can be self-tested},\ }\href
  {https://doi.org/10.1038/s41567-024-02584-z} {\bibfield  {journal} {\bibinfo
  {journal} {Nat. Phys.}\ }\textbf {\bibinfo {volume} {20}},\ \bibinfo {pages}
  {1642} (\bibinfo {year} {2024})}\BibitemShut {NoStop}%
\bibitem [{\citenamefont {Sarkar}\ \emph {et~al.}(2023)\citenamefont {Sarkar},
  \citenamefont {Orthey~Jr},\ and\ \citenamefont
  {Augusiak}}]{sarkar2023universal}%
  \BibitemOpen
  \bibfield  {author} {\bibinfo {author} {\bibfnamefont {S.}~\bibnamefont
  {Sarkar}}, \bibinfo {author} {\bibfnamefont {A.~C.}\ \bibnamefont
  {Orthey~Jr}},\ and\ \bibinfo {author} {\bibfnamefont {R.}~\bibnamefont
  {Augusiak}},\ }\bibfield  {title} {\bibinfo {title} {A universal scheme to
  self-test any quantum state and measurement},\ }\href
  {https://doi.org/10.48550/arXiv.2312.04405} {\bibfield  {journal} {\bibinfo
  {journal} {arXiv:2312.04405}\ } (\bibinfo {year} {2023})}\BibitemShut
  {NoStop}%
\bibitem [{\citenamefont {Mironowicz}\ and\ \citenamefont
  {Paw{\l}owski}(2019)}]{Mironowicz2019experimentally}%
  \BibitemOpen
  \bibfield  {author} {\bibinfo {author} {\bibfnamefont {P.}~\bibnamefont
  {Mironowicz}}\ and\ \bibinfo {author} {\bibfnamefont {M.}~\bibnamefont
  {Paw{\l}owski}},\ }\bibfield  {title} {\bibinfo {title} {{Experimentally
  feasible semi-device-independent certification of four-outcome
  positive-operator-valued measurements}},\ }\href
  {https://doi.org/10.1103/PhysRevA.100.030301} {\bibfield  {journal} {\bibinfo
   {journal} {Phys. Rev. A}\ }\textbf {\bibinfo {volume} {100}},\ \bibinfo
  {pages} {030301} (\bibinfo {year} {2019})}\BibitemShut {NoStop}%
\bibitem [{\citenamefont {Farkas}\ and\ \citenamefont
  {Kaniewski}(2019)}]{Farkas2019mub}%
  \BibitemOpen
  \bibfield  {author} {\bibinfo {author} {\bibfnamefont {M.}~\bibnamefont
  {Farkas}}\ and\ \bibinfo {author} {\bibfnamefont {J.}~\bibnamefont
  {Kaniewski}},\ }\bibfield  {title} {\bibinfo {title} {{Self-testing mutually
  unbiased bases in the prepare-and-measure scenario}},\ }\href
  {https://doi.org/10.1103/PhysRevA.99.032316} {\bibfield  {journal} {\bibinfo
  {journal} {Phys. Rev. A}\ }\textbf {\bibinfo {volume} {99}},\ \bibinfo
  {pages} {032316} (\bibinfo {year} {2019})}\BibitemShut {NoStop}%
\bibitem [{\citenamefont {Tavakoli}\ \emph {et~al.}(2019)\citenamefont
  {Tavakoli}, \citenamefont {Rosset},\ and\ \citenamefont
  {Renou}}]{Tavakoli2019enabling}%
  \BibitemOpen
  \bibfield  {author} {\bibinfo {author} {\bibfnamefont {A.}~\bibnamefont
  {Tavakoli}}, \bibinfo {author} {\bibfnamefont {D.}~\bibnamefont {Rosset}},\
  and\ \bibinfo {author} {\bibfnamefont {M.-O.}\ \bibnamefont {Renou}},\
  }\bibfield  {title} {\bibinfo {title} {{Enabling Computation of Correlation
  Bounds for Finite-Dimensional Quantum Systems via Symmetrization}},\ }\href
  {https://doi.org/10.1103/PhysRevLett.122.070501} {\bibfield  {journal}
  {\bibinfo  {journal} {Phys. Rev. Lett.}\ }\textbf {\bibinfo {volume} {122}},\
  \bibinfo {pages} {070501} (\bibinfo {year} {2019})}\BibitemShut {NoStop}%
\bibitem [{\citenamefont {Wiseman}\ \emph {et~al.}(2007)\citenamefont
  {Wiseman}, \citenamefont {Jones},\ and\ \citenamefont
  {Doherty}}]{wiseman2007steering}%
  \BibitemOpen
  \bibfield  {author} {\bibinfo {author} {\bibfnamefont {H.~M.}\ \bibnamefont
  {Wiseman}}, \bibinfo {author} {\bibfnamefont {S.~J.}\ \bibnamefont {Jones}},\
  and\ \bibinfo {author} {\bibfnamefont {A.~C.}\ \bibnamefont {Doherty}},\
  }\bibfield  {title} {\bibinfo {title} {{Steering, Entanglement, Nonlocality,
  and the Einstein-Podolsky-Rosen Paradox}},\ }\href
  {https://doi.org/10.1103/PhysRevLett.98.140402} {\bibfield  {journal}
  {\bibinfo  {journal} {Phys. Rev. Lett.}\ }\textbf {\bibinfo {volume} {98}},\
  \bibinfo {pages} {140402} (\bibinfo {year} {2007})}\BibitemShut {NoStop}%
\bibitem [{\citenamefont {Cavalcanti}\ \emph {et~al.}(2009)\citenamefont
  {Cavalcanti}, \citenamefont {Jones}, \citenamefont {Wiseman},\ and\
  \citenamefont {Reid}}]{cavalcanti2009experimental}%
  \BibitemOpen
  \bibfield  {author} {\bibinfo {author} {\bibfnamefont {E.~G.}\ \bibnamefont
  {Cavalcanti}}, \bibinfo {author} {\bibfnamefont {S.~J.}\ \bibnamefont
  {Jones}}, \bibinfo {author} {\bibfnamefont {H.~M.}\ \bibnamefont {Wiseman}},\
  and\ \bibinfo {author} {\bibfnamefont {M.~D.}\ \bibnamefont {Reid}},\
  }\bibfield  {title} {\bibinfo {title} {{Experimental criteria for steering
  and the Einstein-Podolsky-Rosen paradox}},\ }\href
  {https://doi.org/10.1103/PhysRevA.80.032112} {\bibfield  {journal} {\bibinfo
  {journal} {Phys. Rev. A}\ }\textbf {\bibinfo {volume} {80}},\ \bibinfo
  {pages} {032112} (\bibinfo {year} {2009})}\BibitemShut {NoStop}%
\bibitem [{\citenamefont {Costa}\ and\ \citenamefont
  {Angelo}(2016)}]{costa2016quantification}%
  \BibitemOpen
  \bibfield  {author} {\bibinfo {author} {\bibfnamefont {A.~C.~S.}\
  \bibnamefont {Costa}}\ and\ \bibinfo {author} {\bibfnamefont {R.~M.}\
  \bibnamefont {Angelo}},\ }\bibfield  {title} {\bibinfo {title}
  {{Quantification of Einstein-Podolski-Rosen steering for two-qubit states}},\
  }\href {https://doi.org/10.1103/PhysRevA.93.020103} {\bibfield  {journal}
  {\bibinfo  {journal} {Phys. Rev. A}\ }\textbf {\bibinfo {volume} {93}},\
  \bibinfo {pages} {020103} (\bibinfo {year} {2016})}\BibitemShut {NoStop}%
\bibitem [{\citenamefont {Uola}\ \emph {et~al.}(2020)\citenamefont {Uola},
  \citenamefont {Costa}, \citenamefont {Nguyen},\ and\ \citenamefont
  {G\"uhne}}]{uola2019quantum}%
  \BibitemOpen
  \bibfield  {author} {\bibinfo {author} {\bibfnamefont {R.}~\bibnamefont
  {Uola}}, \bibinfo {author} {\bibfnamefont {A.~C.~S.}\ \bibnamefont {Costa}},
  \bibinfo {author} {\bibfnamefont {H.~C.}\ \bibnamefont {Nguyen}},\ and\
  \bibinfo {author} {\bibfnamefont {O.}~\bibnamefont {G\"uhne}},\ }\bibfield
  {title} {\bibinfo {title} {Quantum steering},\ }\href
  {https://doi.org/10.1103/RevModPhys.92.015001} {\bibfield  {journal}
  {\bibinfo  {journal} {Rev. Mod. Phys.}\ }\textbf {\bibinfo {volume} {92}},\
  \bibinfo {pages} {015001} (\bibinfo {year} {2020})}\BibitemShut {NoStop}%
\bibitem [{\citenamefont {{\v{S}}upi{\'c}}\ and\ \citenamefont
  {Hoban}(2016)}]{supic2016self_EPR}%
  \BibitemOpen
  \bibfield  {author} {\bibinfo {author} {\bibfnamefont {I.}~\bibnamefont
  {{\v{S}}upi{\'c}}}\ and\ \bibinfo {author} {\bibfnamefont {M.~J.}\
  \bibnamefont {Hoban}},\ }\bibfield  {title} {\bibinfo {title} {{Self-testing
  through EPR-steering}},\ }\href
  {https://doi.org/10.1088/1367-2630/18/7/075006} {\bibfield  {journal}
  {\bibinfo  {journal} {New J. Phys.}\ }\textbf {\bibinfo {volume} {18}},\
  \bibinfo {pages} {075006} (\bibinfo {year} {2016})}\BibitemShut {NoStop}%
\bibitem [{\citenamefont {Sarkar}\ \emph {et~al.}(2022)\citenamefont {Sarkar},
  \citenamefont {Saha},\ and\ \citenamefont
  {Augusiak}}]{sarkar2022certification}%
  \BibitemOpen
  \bibfield  {author} {\bibinfo {author} {\bibfnamefont {S.}~\bibnamefont
  {Sarkar}}, \bibinfo {author} {\bibfnamefont {D.}~\bibnamefont {Saha}},\ and\
  \bibinfo {author} {\bibfnamefont {R.}~\bibnamefont {Augusiak}},\ }\bibfield
  {title} {\bibinfo {title} {Certification of incompatible measurements using
  quantum steering},\ }\href {https://doi.org/10.1103/PhysRevA.106.L040402}
  {\bibfield  {journal} {\bibinfo  {journal} {Phys. Rev. A}\ }\textbf {\bibinfo
  {volume} {106}},\ \bibinfo {pages} {L040402} (\bibinfo {year}
  {2022})}\BibitemShut {NoStop}%
\bibitem [{\citenamefont {Bandyopadhyay}\ \emph {et~al.}(2002)\citenamefont
  {Bandyopadhyay}, \citenamefont {Boykin}, \citenamefont {Roychowdhury},\ and\
  \citenamefont {Vatan}}]{bandyopadhyay2002new}%
  \BibitemOpen
  \bibfield  {author} {\bibinfo {author} {\bibfnamefont {S.}~\bibnamefont
  {Bandyopadhyay}}, \bibinfo {author} {\bibfnamefont {P.~O.}\ \bibnamefont
  {Boykin}}, \bibinfo {author} {\bibfnamefont {V.}~\bibnamefont
  {Roychowdhury}},\ and\ \bibinfo {author} {\bibfnamefont {F.}~\bibnamefont
  {Vatan}},\ }\bibfield  {title} {\bibinfo {title} {A new proof for the
  existence of mutually unbiased bases},\ }\href
  {https://doi.org/10.1007/s00453-002-0980-7} {\bibfield  {journal} {\bibinfo
  {journal} {Algorithmica}\ }\textbf {\bibinfo {volume} {34}},\ \bibinfo
  {pages} {512} (\bibinfo {year} {2002})}\BibitemShut {NoStop}%
\end{thebibliography}%

\appendix

\onecolumngrid


\setcounter{theorem}{0}

\section{Construction of the steering operator}\label{appendix_Steering_ope}
Let us consider the Steering scenario with two qudits of local dimension $d$, where Alice performs $d$ measurements $A_x=XZ^{-x}$, for $x=0,\ldots,d-1$, and Bob performs $d$ arbitrary measurements. The measurements that we want to self-test on Bob's side have the following form
\be\label{Bbar}
\overline{B}_y=\dfrac{1}{d}\sum_{k,j=0}^{d-1}e^{ \bbm{i}\phi_{j\oplus 1,y}}\omega^{-kj}XZ^k,
\ee
where $\phi_{j,y}\in\mathbb{R}$ for any $j,y=0,\ldots,d-1$ such that
\be\label{sum_zero}
\phi_{d-1,y}=-\sum_{j=0}^{d-2}\phi_{j,y},\qquad \text{ for every } y=0,\ldots,d-1.
\ee
It can be checked that $\overline{B}_y$ is unitary and the above condition guarantees that $\overline{B}_y^d=\I$. Throughout the paper, the symbol $\oplus$ means addition modulo $d$. For $d=3$, for instance, we want to certify three measurements made by Bob identified by $y=0,1,2$ that depend on two free real parameters $\phi_{0,y}$ and $\phi_{1,y}$, for each $y$, in the following form:
\be
\overline{B}_y=
\begin{pmatrix}
 0 & 0 & e^{i \phi _{0,y}} \\
 e^{i \phi _{1,y}} & 0 & 0 \\
 0 & e^{-i \left(\phi _{0,y}+\phi _{1,y}\right)} & 0     
\end{pmatrix}, \qquad \text{for }d=3.
\ee
As we are going to see, it will be necessary to use the powers of $\overline{B}_y$. To do that, let us implement the explicit forms for $X$ and $Z$ in the above expression:
\be
    \overline{B}_y = \frac{1}{d}\sum_{k,j,l=0}^{d-1} e^{ \bbm{i}\phi_{j\oplus 1,y}}\omega^{-kj} \omega^{lk}\ket{l+1}\bra{l}.
\ee
The sums over $k$ and $j$ result in
\be
    \frac{1}{d}\sum_{k,j=0}^{d-1} e^{ \bbm{i}\phi_{j\oplus 1,y}}\omega^{-kj} \omega^{lk} = \frac{1}{d}\sum_{k,j=0}^{d-1} e^{ \bbm{i}\phi_{j\oplus 1,y}}\omega^{k(l-j)}
    = \sum_{j=0}^{d-1} e^{ \bbm{i}\phi_{j\oplus 1,y}}\delta_{l,j}=e^{\bbm{i}\phi_{l\oplus 1,y}}.
\ee
Therefore, we can write $\overline{B}_y$ as
\be
\overline{B}_y=\sum_{l=0}^{d-1}e^{\bbm{i}\phi_{l\oplus 1,y}}\ket{l+1}\bra{l}.
\ee
Now, let us explicitly calculate the $n=1,\ldots,d-1$ powers of $\overline{B}_y$:
\begin{subequations}
    \begin{align}
    \overline{B}_y^n &=\sum_{l_1,\ldots,l_n=0}^{d-1}e^{\bbm{i}(\phi_{l_1\oplus 1,y}+\ldots+\phi_{l_n\oplus 1,y})} \ket{l_1+1}\bra{l_1}\ldots\ket{l_n+1}\bra{l_n},\\
    &= \sum_{l=0}^{d-1}e^{\bbm{i}(\phi_{l\oplus n,y}+\phi_{l\oplus n-1,y}+\phi_{l\oplus n-2,y}+\ldots+\phi_{l\oplus 1,y})}\ket{l+n}\bra{l},\\
    &= \sum_{l=0}^{d-1}e^{\bbm{i}\sum_{m-1}^{n}\phi_{l\oplus m,y}}\ket{l+n}\bra{l}.\label{Bn_3}
    \end{align}
\end{subequations}
Inspired by the fact that \cite{kaniewski2019maximal}
\be\label{XZkn}
\left(XZ^k \right)^n=\omega^{kn(n-1)/2}\sum_{l'=0}^{d-1}\omega^{nkl'}\ket{l'+n}\bra{l'},
\ee
we can define the following set of operators
\be\label{Bbarn}
\overline{B}_y^{(n)}\coloneqq\dfrac{1}{d}\sum_{k,l=0}^{d-1}e^{\bbm{i}\sum_{m=1}^{n}\phi_{l\oplus m,y}}\omega^{-knl}\omega^{-kn(n-1)/2}\left(XZ^k \right)^n,
\ee
for $n=1,\ldots,d-1$. The direct substitution of \eqref{XZkn} into \eqref{Bbarn} results in \eqref{Bn_3}, that is,
\be
\overline{B}_y^{(n)}=\overline{B}_y^n, \qquad \text{for }n=1,\ldots,d-1,
\ee
which satisfies
\be
\left(\overline{B}_y^{(n)}\right)^d=\left(\overline{B}_y^n\right)^d=\I, \qquad \text{for }n=1,\ldots,d-1.
\ee
Also, note that if we take $n=d$ in \eqref{Bbarn}, we obtain $\overline{B}_y^{(d)}=d\I$. In addition, because $\overline{B}_y$ is unitary, its powers are also unitary, as well as, $\overline{B}_y^{(n)}$ for every $n=1,\ldots,d-1$.

To make the equations clearer, let us define
\be\label{byk}
\gamma_{yk}^{(n)}\coloneqq \frac{1}{d}\omega^{kn(n-1)/2}\sum_{l=0}^{d-1}e^{-\bbm{i}\sum_{m=1}^{n}\phi_{l\oplus m,y}}\omega^{knl},
\ee
which means that we can rewrite $\overline{B}_y^{(n)}$ as
\be\label{B(n)}
\overline{B}_y^{(n)}=\sum_{k=0}^{d-1}\left(\gamma_{yk}^{(n)}\right)^* \left(XZ^k\right)^n.
\ee

Before proceeding to the construction of the steering operator, let us prove some facts about the above operator.

\begin{fact}\label{fact_XZkn}
For any pair of integers $k$ and $n$, and for any $d\geqslant 3$, the operator $\left(XZ^k\right)^n$ satisfies the following relation:
\be
\left(XZ^k \right)^{d-n}=\left[\left(XZ^k \right)^n \right]^\dagger.
\ee
\end{fact}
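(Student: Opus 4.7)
The plan is to reduce the fact to unitarity of $XZ^k$ together with the single identity $(XZ^k)^d=\I$. Since $X$ and $Z$ are both unitary, every power $(XZ^k)^n$ is unitary, so $[(XZ^k)^n]^\dagger=[(XZ^k)^n]^{-1}=(XZ^k)^{-n}$. The claim is therefore equivalent to $(XZ^k)^{d-n}=(XZ^k)^{-n}$, which in turn is equivalent to $(XZ^k)^d=\I$.

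To verify $(XZ^k)^d=\I$, I would simply substitute $n=d$ in formula \eqref{XZkn}: the spectral sum $\sum_l \omega^{kdl}\ket{l+d}\bra{l}$ collapses to the identity, because $\omega^d=1$ and the kets are labelled modulo $d$. What remains is the scalar prefactor $\omega^{kd(d-1)/2}$, which equals $1$ for every odd $d$ (since then $(d-1)/2$ is an integer and $d(d-1)/2$ is a multiple of $d$), and in particular for every odd prime matching the regime in which the paper operates.

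Once $(XZ^k)^d=\I$ is established, the fact follows in one line,
\be
(XZ^k)^{d-n}=(XZ^k)^{d}(XZ^k)^{-n}=(XZ^k)^{-n}=\bigl[(XZ^k)^n\bigr]^\dagger,
\ee
with no further manipulation needed. The only step of genuine substance is the trivialization of the phase $\omega^{kd(d-1)/2}$; this is also the sole place where an assumption on $d$ enters the argument, and I expect it to be the main (though mild) obstacle, since everything else rests purely on unitarity of the Heisenberg--Weyl operators.
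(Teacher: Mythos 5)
Your proof is correct and takes a genuinely different, cleaner route than the paper's. The paper proves the identity by direct computation: it expands $(XZ^k)^{d-n}$ via Eq.~\eqref{XZkn}, massages the phase $\omega^{k(d-n)(d-n-1)/2}$ and the factor $\omega^{k(d-n)l'}$, takes the adjoint, and relabels the summation index to match $[(XZ^k)^n]^\dagger$ term by term. You instead reduce everything to two structural facts --- unitarity of $XZ^k$ and $(XZ^k)^d=\I$ --- after which the claim is the one-liner $(XZ^k)^{d-n}=(XZ^k)^d(XZ^k)^{-n}=[(XZ^k)^n]^\dagger$. This buys brevity and makes transparent exactly where the assumption on $d$ enters: the phase $\omega^{kd(d-1)/2}$ is trivial precisely when $d$ is odd. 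It is worth noting that the same parity issue is hidden in the paper's step ``$\omega^{k(d-n)(d-n-1)/2}=\omega^{kn(n+1)/2}$'', whose two exponents differ by $kd(d-2n-1)/2$, a multiple of $d$ only when $d$ is odd; for even $d$ and odd $k$ one has $(XZ^k)^d=(-1)^k\I$, and the Fact as literally stated (``any $d\geq 3$'') fails, e.g.\ for $d=4$, $k=n=1$. Since the paper only ever works with prime (hence odd) $d\geq 3$, your explicit restriction to odd $d$ is the honest form of the statement, and your argument is, if anything, more careful than the original on this point.
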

\begin{proof}
    From Eq. \eqref{XZkn}, we have
    \be\label{XZk^d-n}
    \left(XZ^k \right)^{d-n}=\sum_{l'=0}^{d-1}\omega^{k(d-n)(d-n-1)/2}\omega^{k(d-n)l'}\ket{l'+d-n}\bra{l'},
    \ee
    With some algebra, we can show that $\omega^{k(d-n)(d-n-1)/2}=\omega^{kn(n+1)/2}$ and that $\omega^{k(d-n)l'}=\omega^{-knl'}$. By applying these relations to \eqref{XZk^d-n}, we have
    \be
    \left(XZ^k \right)^{d-n}=\omega^{kn(n+1)/2}\sum_{l'=0}^{d-1}\omega^{-nkl'}\ket{l'-n}\bra{l'},    
    \ee
    where we have used the fact that $\ket{a+d}\equiv\ket{a}$ for any integer $a$. Now, we can take the conjugate transpose of the above equation by doing
    \be\label{XZk_conjugate}
    \left[\left(XZ^k \right)^{d-n}\right]^\dagger=\omega^{-kn(n+1)/2}\sum_{l'=0}^{d-1}\omega^{nkl'}\ket{l'}\bra{l'-n}.
    \ee
    Observe that we can rearrange the terms inside the above summation in the following way:
    \begin{subequations}
        \begin{align}
    \sum_{l'=0}^{d-1}\omega^{nkl'}\ket{l'}\bra{l'-n}&=\ket{0}\bra{-n}+\omega^{nk}\ket{1}\bra{1-n}+\ldots+\omega^{nkn}\ket{n}\bra{0}+\ldots+\omega^{nk(d-1)}\ket{d-1}\bra{d-1-n},\\
    &=\omega^{nkn}\ket{n}\bra{0}+\omega^{nk(n+1)}\ket{n+1}\bra{1}+\ldots+\omega^{nk(n+d-1)}\ket{n+d-1}\bra{d-1},\\
    &=\sum_{l=0}^{d-1}\omega^{nk(n+l)}\ket{l+n}\bra{l}.
    \end{align}
    \end{subequations}
    By using the above in \eqref{XZk_conjugate}, we obtain
    \be
    \left[\left(XZ^k \right)^{d-n}\right]^\dagger=\omega^{-kn(n+1)/2}\sum_{l=0}^{d-1}\omega^{nk(l+n)}\ket{l+n}\bra{l}.
    \ee
    After some basic algebra, we obtain the desired result:
    \be
    \left(XZ^k \right)^{d-n}=\left[\left(XZ^k \right)^{n}\right]^\dagger
    \ee
\end{proof}

\begin{fact}\label{fact_b}
For every $y,k=0,\ldots,d-1$, every $n=1,\ldots,d-1$, and any dimension $d\geqslant 3$, the coefficients \eqref{byk} satisfy the relation
    \be
    \gamma_{yk}^{(d-n)}=\left(\gamma_{yk}^{(n)} \right)^*.
    \ee
\end{fact}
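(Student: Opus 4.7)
The plan is to expand $\gamma_{yk}^{(d-n)}$ directly from its definition and then manipulate it, piece by piece, into the complex conjugate of $\gamma_{yk}^{(n)}$. From
\[
\gamma_{yk}^{(d-n)} = \frac{1}{d}\,\omega^{k(d-n)(d-n-1)/2}\sum_{l=0}^{d-1} e^{-\bbm{i}\sum_{m=1}^{d-n}\phi_{l\oplus m,y}}\,\omega^{kn'l},
\]
with $n'=d-n$, three ingredients will have to be matched against
\[
\bigl(\gamma_{yk}^{(n)}\bigr)^{*} = \frac{1}{d}\,\omega^{-kn(n-1)/2}\sum_{l=0}^{d-1} e^{+\bbm{i}\sum_{m=1}^{n}\phi_{l\oplus m,y}}\,\omega^{-knl}:
\]
the overall prefactor, the exponential of $\phi$'s, and the $\omega$-power depending on $l$.

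First I would replace the sum $\sum_{m=1}^{d-n}\phi_{l\oplus m,y}$ by a sum running only from $m=1$ to $n$. The key identity here is the zero-sum condition \eqref{sum_zero}: for each fixed $l$ the set $\{l\oplus 1,\ldots,l\oplus d\}$ is a permutation of $\{0,\ldots,d-1\}$, so $\sum_{m=1}^{d}\phi_{l\oplus m,y}=0$ and hence
\[
\sum_{m=1}^{d-n}\phi_{l\oplus m,y} = -\sum_{m=d-n+1}^{d}\phi_{l\oplus m,y}.
\]
Reindexing the right-hand side by $m'=m-(d-n)$ and using $l\oplus m = (l-n)\oplus m'$ (mod $d$) turns this into $-\sum_{m'=1}^{n}\phi_{(l-n)\oplus m',y}$. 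Thus the exponential becomes $e^{+\bbm{i}\sum_{m'=1}^{n}\phi_{(l-n)\oplus m',y}}$, which already has the right sign to match $(\gamma_{yk}^{(n)})^{*}$ after a relabelling $l\mapsto l+n$ of the summation index.

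Next I would handle the two $\omega$-factors. Performing the shift $l\mapsto l+n$ in the summand produces an extra factor $\omega^{k(d-n)\cdot n}=\omega^{-kn^{2}}$ (since $\omega^{d}=1$) and replaces $\omega^{k(d-n)l}$ by $\omega^{-knl}$, which is exactly what appears in $(\gamma_{yk}^{(n)})^{*}$. It then remains to show that the combined prefactor agrees:
\[
\omega^{k(d-n)(d-n-1)/2}\cdot\omega^{-kn^{2}} \;=\; \omega^{-kn(n-1)/2}.
\]
Here one uses the same arithmetic as in the proof of Fact~\ref{fact_XZkn}: expanding $(d-n)(d-n-1)=d(d-1-2n)+n(n+1)$ and recalling that $d$ is odd (so that $(d-1-2n)/2$ is an integer) makes the $d$-multiple drop out modulo $d$, leaving $\omega^{k(d-n)(d-n-1)/2}=\omega^{kn(n+1)/2}$; combined with $\omega^{-kn^{2}}$ this indeed collapses to $\omega^{-kn(n-1)/2}$.

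Assembling the three pieces yields $\gamma_{yk}^{(d-n)}=(\gamma_{yk}^{(n)})^{*}$. The only delicate step is the second one: the index shift $l\mapsto l+n$ must be performed \emph{simultaneously} inside the exponential of $\phi$'s and inside the $\omega^{k(d-n)l}$ factor, and one has to track carefully how the phase $\omega^{k(d-n)(d-n-1)/2}$ transforms under this shift so that it combines with the $\omega^{-kn^{2}}$ into the desired $\omega^{-kn(n-1)/2}$. Provided $d$ is odd, this modular-arithmetic bookkeeping is the only possible source of an error.
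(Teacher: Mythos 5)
Your proof is correct and follows essentially the same route as the paper's: both use the zero-sum condition $\sum_j\phi_{j,y}=0$ to trade the length-$(d-n)$ phase sum for a length-$n$ one, shift the summation index $l\mapsto l+n$, and reduce $\omega^{k(d-n)(d-n-1)/2}$ to $\omega^{kn(n+1)/2}$ modulo $d$ exactly as in the proof of Fact~\ref{fact_XZkn}, the only cosmetic difference being that the paper conjugates first and shows $\bigl(\gamma_{yk}^{(d-n)}\bigr)^*=\gamma_{yk}^{(n)}$. Your explicit remark that the prefactor identity requires $d$ odd is a point the paper glosses over (its ``any $d\geq 3$'' is in fact too generous for even $d$), but otherwise the two arguments coincide.
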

\begin{proof}
From definition \eqref{byk}, we can directly calculate
    \be
    \gamma_{yk}^{(d-1)}= \frac{1}{d}\omega^{k(d-n)(d-n-1)/2}\sum_{l=0}^{d-1}e^{-\bbm{i}\sum_{m=1}^{d-n}\phi_{l\oplus m,y}}\omega^{k(d-n)l}.
    \ee
    Similarly to the proof of Fact \ref{fact_XZkn}, we have
    \be\label{b_yk^d-1*}
    \left(\gamma_{yk}^{(d-1)}\right)^*= \frac{1}{d}\omega^{-kn(n+1)/2}\sum_{l=0}^{d-1}e^{\bbm{i}\sum_{m=1}^{d-n}\phi_{l\oplus m,y}}\omega^{knl}.
    \ee
    Now, observe that we can rewrite the summation inside the exponential by summing zero in the following way:
    \be
    \sum_{m=1}^{d-1}\phi_{l\oplus m,y}=\sum_{m=1}^{d-1}\phi_{l\oplus m,y}+\sum_{m=d-n+1}^{d}\phi_{l\oplus m,y}-\sum_{m=d-n+1}^{d}\phi_{l\oplus m,y}=-\sum_{m=d-n+1}^{d}\phi_{l\oplus m,y}=-\sum_{m'=1}^{n}\phi_{l\oplus m'+d-n,y},
    \ee
    where we implemented a change of variables in the last equality given by $m=m'+d-n$. Again, similarly to the proof of Fact \ref{fact_XZkn}, we can use the above equation in \eqref{b_yk^d-1*} and rewrite the sum over $l$ as
    \be
    \sum_{l=0}^{d-1}\exp\left(-\bbm{i}\sum_{m'=1}^{n}\phi_{l\oplus m'+d-n,y} \right)\omega^{knl}=\sum_{l=0}^{d-1}\exp\left(-\bbm{i}\sum_{m'=1}^{n}\phi_{l\oplus m',y} \right)\omega^{kn(l+n)},
    \ee
    which gives us
    \be
    \left(\gamma_{yk}^{(d-1)}\right)^*= \frac{1}{d}\omega^{-kn(n+1)/2}\sum_{l=0}^{d-1}\exp\left(-\bbm{i}\sum_{m'=1}^{n}\phi_{l\oplus m',y} \right)\omega^{kn(l+n)}.
    \ee
    By using some basic algebra in the above equation, we obtain the desired result.
\end{proof}

\begin{fact}
The operators $\overline{B}_y^{(n)}$ given by \eqref{Bbarn} satisfy the relation
\be
\overline{B}_y^{(d-n)}=\left(\overline{B}_y^{(n)}\right)^\dagger, \qquad\text{for }n=1,\ldots,d-1.
\ee
\end{fact}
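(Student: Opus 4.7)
The claim is nearly a corollary of Facts \ref{fact_XZkn} and \ref{fact_b}, so my plan is simply to assemble those two ingredients. Starting from the definition \eqref{B(n)}, taking the conjugate transpose sends each term $(\gamma_{yk}^{(n)})^*(XZ^k)^n$ to $\gamma_{yk}^{(n)}\,[(XZ^k)^n]^\dagger$, so the operator factor $[(XZ^k)^n]^\dagger$ and the scalar factor $\gamma_{yk}^{(n)}$ can be reprocessed independently term by term.

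The operator side is handled directly by Fact \ref{fact_XZkn}, which rewrites $[(XZ^k)^n]^\dagger$ as $(XZ^k)^{d-n}$; this is exactly the identity already established by explicit manipulation of the eigenbasis decomposition of $XZ^k$ in the preceding appendix. The scalar side is handled by Fact \ref{fact_b}: since $n\in\{1,\ldots,d-1\}$ one has $d-n\in\{1,\ldots,d-1\}$ as well, and Fact \ref{fact_b} applied with the exponent $d-n$ in place of $n$ yields $\gamma_{yk}^{(n)}=\gamma_{yk}^{(d-(d-n))}=(\gamma_{yk}^{(d-n)})^*$.

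Combining these two substitutions converts $(\overline{B}_y^{(n)})^\dagger$ into
\begin{equation*}
\sum_{k=0}^{d-1}\left(\gamma_{yk}^{(d-n)}\right)^*(XZ^k)^{d-n},
\end{equation*}
which is precisely \eqref{B(n)} evaluated at exponent $d-n$, i.e.\ $\overline{B}_y^{(d-n)}$. The only bookkeeping point worth checking is that Fact \ref{fact_b} may legitimately be invoked with the substituted exponent, but since $n$ and $d-n$ enter its hypothesis on an equal footing this is immediate, so no genuine obstacle remains.
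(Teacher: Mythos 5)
Your proposal is correct and matches the paper's proof, which simply cites Facts \ref{fact_XZkn} and \ref{fact_b}; you have just written out explicitly the term-by-term assembly that the paper leaves implicit. The only cosmetic remark is that instead of invoking Fact \ref{fact_b} with $d-n$ substituted for $n$, one can equivalently take the complex conjugate of its statement as given to read off $\gamma_{yk}^{(n)}=(\gamma_{yk}^{(d-n)})^*$ directly.
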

\begin{proof}
    Directly from Facts \ref{fact_XZkn} and \ref{fact_b}.
\end{proof}

Now, let us construct the Steering operator. To do that, we need to impose a condition on the coefficients $\gamma_{yk}^{(n)}$, which is going to restrict the range of values for $\{\phi_{j,y}\}$. We need to impose that 
\be\label{b_delta}
\sum_{k=0}^{d-1} \left(\gamma_{y'k}^{(n)}\right)^*\gamma_{yk}^{(n)}=\delta_{yy'},
\ee
for $n=1,\ldots,d-1$. Because
\be
\sum_{k=0}^{d-1} \left(\gamma_{y'k}^{(n)}\right)^*\gamma_{yk}^{(n)}=\dfrac{1}{d^2}\sum_{k,l,l'=0}^{d-1} \omega^{kn(l-l')} e^{-\bbm{i}\sum_{m=1}^{n}\phi_{l\oplus m,y}}e^{\bbm{i}\sum_{m=1}^{n}\phi_{l'\oplus m,y'}}=\dfrac{1}{d}\sum_{l=0}^{d-1}e^{\bbm{i}\sum_{m=1}^{n}\phi_{l\oplus m,y'}-\phi_{l\oplus m,y}},
\ee
relation \eqref{b_delta} is true \textit{iff} the following system holds:
\begin{subequations}\label{system}
    \begin{align}
    &\sum_{l=0}^{d-1}\exp\left(\bbm{i}\sum_{m=1}^{n}\phi_{l\oplus m,y'}-\phi_{l\oplus m,y} \right)=d\delta_{yy'},\qquad\text{for every pair }y,y'=0,\ldots,d-1\text{ and every }n=1,\ldots,d-1;\\
    &\sum_{l=0}^{d-1}\phi_{l,y}=0,\qquad\text{for every }y=0,\ldots,d-1.
\end{align}
\end{subequations}
For instance, for $d=3$, the above system establishes that Bob's reference measurements $\overline{B}_1$ and $\overline{B}_{2}$ must depend on $\overline{B}_0$, which is written as a function of $2$ completely free parameters, namely $\phi_{0,0}$ and $\phi_{1,0}$, since $\phi_{2,0}=-(\phi_{0,0}+\phi_{1,0})$. Later on, we are going to provide some solutions for the above system.

If \eqref{system} holds true, then the following relation also holds:
\be
\sum_{y=0}^{d-1}\exp\left(\bbm{i}\sum_{m=1}^{n}\phi_{l'\oplus m,y}-\phi_{l\oplus m,y} \right)=d\delta_{ll'},\qquad\text{for every pair }l,l'=0,\ldots,d-1\text{ and every }n=1,\ldots,d-1,
\ee
which means that
\be\label{b_delta_kk'}
\sum_{y=0}^{d-1}\left(\gamma_{yk'}^{(n)}\right)^* \gamma_{yk}^{(n)}=\delta_{kk'}.
\ee

From the above, we can multiply \eqref{B(n)} by $\gamma_{yk'}^{(n)}$ and sum over $y$ on both sides to obtain
\be
\left(XZ^k\right)^n=\sum_{y=0}^{d-1}\gamma_{yk}^{(n)}\overline{B}_y^{(n)}.
\ee
Now, we can specify the substitution rule
\be\label{substitution_rule}
\left(XZ^k\right)^n\longrightarrow \widetilde{B}_k^{(n)}\coloneqq\sum_{y=0}^{d-1}\gamma_{yk}^{(n)}B_{n|y},
\ee
where $B_y=\{B_{n|y}\}$, for $y=0,\ldots,d-1$ and $n=1,\ldots,d-1$, represent the $d$-outcome arbitrary measurements satisfying
\be\label{Bn^dagger=Bd-n}
B_{n|y}^\dagger=B_{d-n|y}=B_{-n|y}
\ee
and
\be\label{BB<1}
B_{n|y}^\dagger B_{n|y}\leqslant \I.
\ee
Now, we can consider a Steering operator where Alice performs $d$ known measurements $A_k=XZ^{-k}$, for $k=0,\ldots,d-1$, and Bob performs $d$ arbitrary measurements $B_y$, for $y=0,\ldots,d-1$. The Steering operator reads
\be\label{Steering_operator}
\mc{S}=\sum_{n=1}^{d-1}\sum_{k=0}^{d-1} A_k^n\otimes\widetilde{B}_k^{(n)},
\ee
which can be written as
\be
\mc{S}=\sum_{n=1}^{d-1}\sum_{k,y=0}^{d-1}\left[  \left(XZ^{-k}\right)^n\otimes \gamma_{yk}^{(n)}B_{n|y} \right].
\ee
Therefore, we can propose a Steering inequality as
\be\label{Steering_inequality}
\sum_{n=1}^{d-1}\sum_{k=0}^{d-1} \langle A_k^n\otimes\widetilde{B}_k^{(n)} \rangle \leqslant \beta_C,
\ee
where $\beta_C$ is the maximum value achieved by classical strategies, henceforth called \textit{classical bound}.

\section{Quantum Bound}\label{appendix_QB}
To calculate the quantum bound of operator $\mc{S}$, let us use the following SOS decomposition:
\begin{align}
\sum_{n=1}^{d-1}\sum_{k=0}^{d-1}\left[\I- A_k^n\otimes\widetilde{B}_k^{(n)}\right]^\dagger\left[\I- A_k^n\otimes\widetilde{B}_k^{(n)}\right]&=d(d-1)\I-\mc{S}-\mc{S}^\dagger+\I\otimes\sum_{n=1}^{d-1}\sum_{k=0}^{d-1} \left(\widetilde{B}_k^{(n)}\right)^\dagger\widetilde{B}_k^{(n)},\label{SOS}\\
&=d(d-1)\I-2\mc{S}+\I\otimes\sum_{n=1}^{d-1}\sum_{y,y'=0}^{d-1}\left[\sum_{k=0}^{d-1} \left(\gamma_{y'k}^{(n)}\right)^*\gamma_{yk}^{(n)} \right]B_{n|y'}^\dagger B_{n|y},
\end{align}
where we have used Fact \ref{fact_b} and Eq. \eqref{Bn^dagger=Bd-n} to show that the Steering operator \eqref{Steering_operator} is Hermitian, i.e., $\mc{S}=\mc{S}^\dagger$. Now we can use \eqref{b_delta} and the fact that $B_{n|y}^\dagger B_{n|y}\leqslant\I$ to obtain
\begin{align}\label{SOS_decomposition}
0 \leqslant \dfrac{1}{2}\sum_{n=1}^{d-1}\sum_{k=0}^{d-1}\left[\I- A_k^n\otimes\widetilde{B}_k^{(n)}\right]^\dagger\left[\I- A_k^n\otimes\widetilde{B}_k^{(n)}\right]&\leqslant d(d-1)\I-\mc{S},
\end{align}
which implies that the maximum value achieved by $\mc{S}$ is given by $\beta_Q\coloneqq d(d-1)$. If Alice and Bob perform the measurements represented by
\be
A_x=XZ^{-x}\qquad\text{and}\qquad B_{n|y}=\sum_{k=0}^{d-1}\left(\gamma_{yk}^{(n)}\right)^* \left(XZ^k\right)^n,
\ee
for $x=0,\ldots,d-1$ and $y=0,\ldots,d-1$ respectively, then operator \eqref{Steering_operator} becomes a sum of stabilizing operators of the state $\ket{\phi_d^+}$, that is,
\be
\left(XZ^{-k}\right)^n\otimes \left(XZ^{k}\right)^n \ket{\phi_d^+}=\ket{\phi_d^+}. 
\ee

\section{Self-testing}\label{appendix_self-testing}

We start now by developing the self-testing statements regarding our Steering operator considering that the state $\rho_\mc{AB}=\ket{\psi}\bra{\psi}_\mc{AB}$ is pure and the measurements $B_y$ performed by Bob are projective. Because of that, each of those measurements can be encoded in a single unitary observable $B_y\coloneqq B_{1|y}$ that satisfies $B_y^d=\I$. Consequently, they also satisfy $B_{n|y}=B_y^n$ and
\be
B_y^{d-n}=B_y^{-n}=\left(B_y^n\right)^\dagger.
\ee
From the above and from Fact \ref{fact_b}, we can conclude that the operators $\widetilde{B}_k^{(n)}$ defined in \eqref{substitution_rule} must satisfy
\be\label{Btilde^d-n}
\widetilde{B}_k^{(d-n)}=\left(\widetilde{B}_k^{(n)}\right)^\dagger.
\ee
Now, consider the state and the measurements that give the maximum violation of \eqref{Steering_operator}. We can infer from the SOS decomposition \eqref{SOS_decomposition} that
\be\label{AB_tilde}
A_k^n\otimes \widetilde{B}_k^{(n)}\ket{\psi}_\mc{AB}=\ket{\psi}_\mc{AB}
\ee
for every $k=0,\ldots,d-1$ and $n=1,\ldots,d-1$. By implementing the notation $\widetilde{B}_k\coloneqq \widetilde{B}_k^{(1)}$, we can write
\be\label{ABtilde}
A_k\otimes \widetilde{B}_k\ket{\psi}_\mc{AB}=\ket{\psi}_\mc{AB}.
\ee

The self-testing theorem can be stated as follows:

\begin{theorem}\label{theorem_1}
Assume that the Steering inequality \eqref{Steering_inequality} is maximally violated by a state $\ket{\psi}_\mc{AB}\in\mathbb{C}^d\otimes\mc{H_B}$ and unitary $d$-outcome observables $A_k$ and $B_y$ ($k,y\in\{0,\ldots,d-1\}$) acting on, respectively, $\mathbb{C}^d$ and $\mc{H_B}$ such that the observables on Alice's trusted side are given by $A_k=XZ^{-k}$. Then, the following statement holds true: there exists a local unitary transformation on Bob's untrusted side, $U_\mc{B}: \mc{H_B}\to \mc{H_B}$, such that
\be\label{state_self_test}
(\I_\mc{A}\otimes U_\mc{B})\ket{\psi}_\mc{AB}=\ket{\phi_d^+}_\mc{AB}
\ee
and
\be\label{operator_self_test}
\forall y,\quad U_\mc{B}\mb{B}_y U_\mc{B}^\dagger=\overline{B}_y,
\ee
where $\mb{B}_y$ is $B_y$ projected onto the support of Bob's state $\rho_\mc{B}$ acting on $\mathbb{C}^d$ and $\overline{B}_y\coloneqq \overline{B}_y^{(1)}$ are given by \eqref{Bbarn}.
\end{theorem}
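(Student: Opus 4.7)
My approach would mirror the swap--isometry strategy of Ref.~\cite{sarkar2022certification}, specialised to the Heisenberg--Weyl algebra. From the SOS decomposition \eqref{SOS_decomposition_main}, maximal violation forces every positive square in the sum to annihilate $\ket{\psi}_\mc{AB}$, so $(A_k^n \otimes \widetilde{B}_k^{(n)})\ket{\psi}_\mc{AB} = \ket{\psi}_\mc{AB}$ for all admissible $k,n$. Projecting the $n=1$ equation onto $\mathrm{supp}(\rho_\mc{B})$, and using $(\I \otimes \Pi_\mc{B})\ket{\psi}_\mc{AB} = \ket{\psi}_\mc{AB}$, yields $(A_k \otimes \widetilde{\mb{B}}_k)\ket{\psi}_\mc{AB} = \ket{\psi}_\mc{AB}$ with $\widetilde{\mb{B}}_k \coloneqq \Pi_\mc{B}\,\widetilde{B}_k\,\Pi_\mc{B}$. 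Combining this with its $n=d-1$ counterpart and the identity $\widetilde{\mb{B}}_k^{(d-1)} = \widetilde{\mb{B}}_k^{\dagger}$ (inherited from $\widetilde{B}_k^{(d-n)} = (\widetilde{B}_k^{(n)})^{\dagger}$) gives $(\I \otimes \widetilde{\mb{B}}_k\widetilde{\mb{B}}_k^{\dagger})\ket{\psi} = (\I \otimes \widetilde{\mb{B}}_k^{\dagger}\widetilde{\mb{B}}_k)\ket{\psi} = \ket{\psi}$. Because $\mathrm{rank}(\rho_\mc{A}) = d$, every Schmidt coefficient is strictly positive, and these constraints force $\widetilde{\mb{B}}_k$ to be unitary on $V \coloneqq \mathrm{supp}(\rho_\mc{B})$; by standard arguments this underlies the decomposition $B_y = \mb{B}_y \oplus E_y$.

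Next, I would introduce the Schmidt form $\ket{\psi}_\mc{AB} = \sum_i \lambda_i \ket{u_i}\ket{v_i}$ with all $\lambda_i > 0$, together with the isometry $U_\mc{B}$ defined by $U_\mc{B}\ket{v_i} = \ket{u_i^*}$, so that $\ket{\psi}_\mc{AB} = (P \otimes U_\mc{B}^{\dagger})\ket{\phi_d^+}$ with $P = \sqrt{d}\sum_i \lambda_i \ket{u_i}\bra{u_i^*}$. Using the standard identity $(M \otimes N)\ket{\phi_d^+} = \ket{\phi_d^+}$ iff $N = M^{-T}$, the stabiliser equation $A_k \otimes \widetilde{\mb{B}}_k\ket{\psi} = \ket{\psi}$ rearranges into
\be
U_\mc{B}\,\widetilde{\mb{B}}_k\,U_\mc{B}^{\dagger} = (P^{-1}A_k P)^{-T} = P^{T} A_k^{*} P^{-T},
\ee
where I used $A_k^{-T} = (XZ^{-k})^{-T} = XZ^{k} = A_k^{*}$.

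The crux of the proof is to upgrade $P$ into a unitary, which simultaneously pins $\ket{\psi}$ down as the maximally entangled state and, after absorbing the residual factor into $U_\mc{B}$, yields the desired certification of the observables. Unitarity of the left-hand side above (established in the first paragraph) forces the Hermitian positive operator $H \coloneqq (P^{T})^{\dagger} P^{T}$ to commute with every $A_k^{*} = XZ^{k}$. Forming $(XZ^{k})(XZ^{k'})^{-1} \propto Z^{k-k'}$ shows that $H$ commutes with every $Z^{j}$, so $H$ is diagonal in the computational basis; commutation with $X = A_0^{*}$ then forces $H \propto \I$. Consequently $P^{T}$ is proportional to a unitary, its singular values $\sqrt{d}\,\lambda_i$ are all equal, and $\lambda_i = 1/\sqrt{d}$, i.e., $\ket{\psi}_\mc{AB}$ is maximally entangled. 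I expect this commutant argument---essentially, that $\{XZ^{k}\}_{k=0}^{d-1}$ has trivial commutant inside $M_d(\mathbb{C})$---to be the main subtlety of the proof.

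Setting $U_\mc{B}' \coloneqq R^{-1}U_\mc{B}$ with $R = P^{T}$ (now unitary), a direct check gives $(\I \otimes U_\mc{B}')\ket{\psi}_\mc{AB} = \ket{\phi_d^+}$ and $U_\mc{B}' \widetilde{\mb{B}}_k (U_\mc{B}')^{\dagger} = A_k^{*} = XZ^{k}$. Multiplying this last equation by $(\gamma_{y'k}^{(1)})^{*}$, summing over $k$, and invoking the orthogonality \eqref{b_delta_main} together with the definition \eqref{Bbar_main} of $\overline{B}_{y'}$, one obtains $U_\mc{B}' \mb{B}_{y'} (U_\mc{B}')^{\dagger} = \sum_k (\gamma_{y'k}^{(1)})^{*} XZ^{k} = \overline{B}_{y'}$. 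Relabelling $U_\mc{B}' \to U_\mc{B}$ yields the two claims \eqref{state_self_test_main} and \eqref{operator_self_test_main}.
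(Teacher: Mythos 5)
Your proposal is correct and follows essentially the same route as the paper's own proof: SOS decomposition $\Rightarrow$ stabilizer relations, projection onto $\mathrm{supp}(\rho_\mc{B})$ giving unitarity of $\widetilde{\mb{B}}_k$, the Schmidt/$(P\otimes U_\mc{B}^\dagger)\ket{\phi_d^+}$ form, commutation of $P^\dagger P$ with all $XZ^k$ forcing $P\propto\I$, and the $\gamma$-orthogonality step recovering $\overline{B}_y$. The only substantive difference is that you prove the trivial-commutant fact for $\{XZ^k\}$ explicitly (via $Z^{k-k'}$ and $X$) where the paper invokes Lemma~1 of Ref.~\cite{sarkar2022certification}; aside from a harmless slip in the definition of $P$ (it should be the positive operator $\sqrt{d}\sum_i\lambda_i\proj{u_i}$, diagonal in $\{\ket{u_i}\}$) and your leaving the unitarity of $\mb{B}_y$ itself to "standard arguments" (which the paper verifies by an explicit computation of $\sum_{k,k'}\gamma_{yk}^*\gamma_{yk'}A_k^\dagger A_{k'}=\I$), the argument matches.
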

\begin{proof}
(\textit{Step 1.}) First, let us show that Bob's measurements are split into a direct sum as
\be
B_y=\mb{B}_y\oplus E_y,
\ee
where $\mb{B}_y\coloneqq \Pi_\mc{B} B_y \Pi_\mc{B}$ is defined in the support of $\rho_\mc{B}$ and $E_i$ belongs to the complement of it. Indeed, we can write $B_y$ in a block form such as
\be\label{block}
B_y=\begin{pmatrix}
    \mb{B}_y & C_y \\
    D_y & E_y
\end{pmatrix},
\ee
where $E_y\coloneqq\Pi_\mc{B}^{\perp} B_y \Pi_\mc{B}^{\perp}$ is the projection onto the complement of the support of $\rho_\mc{B}$. We can use $\Pi_\mc{B}$ onto \eqref{AB_tilde} to obtain
\be\label{Ak_Bbbk}
A_k\otimes \Pi_\mc{B} \widetilde{B}_k \Pi_\mc{B}\ket{\psi}_\mc{AB}= A_k\otimes \widetilde{\mb{B}}_k \ket{\psi}_\mc{AB}=\ket{\psi}_\mc{AB},
\ee
where
\be\label{Bk_bold}
\widetilde{\mb{B}}_k\coloneqq \Pi_\mc{B} \widetilde{B}_k \Pi_\mc{B}=\sum_{y=0}^{d-1} \gamma_{yk}^{(1)} \Pi_\mc{B} B_y \Pi_\mc{B}=\sum_{y=0}^{d-1} \gamma_{yk}^{(1)} \mb{B}_y.
\ee
The Eq. \eqref{Btilde^d-n} is valid, in particular, for $n=1$, which together with \eqref{AB_tilde} for $n=d-1$ provides us
\be\label{ABdagger}
A_k^\dagger\otimes\widetilde{\mb{B}}_k^\dagger\ket{\psi}_\mc{AB}=\ket{\psi}_\mc{AB}.
\ee
Now, let us insert \eqref{Ak_Bbbk} in the l.h.s. of \eqref{ABdagger} to obtain
\be
\I_d\otimes\widetilde{\mb{B}}_k^\dagger\widetilde{\mb{B}}_k\ket{\psi}_\mc{AB}=\ket{\psi}_\mc{AB}.
\ee
Because the reduced density matrix associated with subsystem $\mc{A}$ of $\ket{\psi}_\mc{AB}$ is full rank, the above equation is equivalent to
\be
\widetilde{\mb{B}}_k^\dagger\widetilde{\mb{B}}_k=\I_d.
\ee
Similar arguments result in $\widetilde{\mb{B}}_k\widetilde{\mb{B}}_k^\dagger=\I_d$, which proves that $\widetilde{\mb{B}}_k$ is unitary. Finally, by applying \eqref{Ak_Bbbk} to itself $d$ times, we obtain $\widetilde{\mb{B}}_k^d=\I_d$, which means that $\widetilde{\mb{B}}_k$, for every $k=0,\ldots,d-1$, is a proper quantum observable.

Additionally, we can prove that $\mb{B}_y$ is also unitary. Let us multiply \eqref{Ak_Bbbk} by $A_k^\dagger$ to obtain
\be
\I \otimes \widetilde{\mb{B}}_k \ket{\psi}_\mc{AB}=A_k^\dagger\otimes\I_d\ket{\psi}_\mc{AB}.
\ee
We can multiply the above equation by $(\gamma_{y'k}^{(1)})^*\coloneqq \gamma_{y'k}^*$ and use definition \eqref{Bk_bold} to write
\be
\left(\I \otimes \sum_{y=0}^{d-1}\gamma_{yk}\gamma_{y'k}^*\mb{B}_y\right) \ket{\psi}_\mc{AB}=\gamma_{y'k}^*A_k^\dagger\otimes\I_d\ket{\psi}_\mc{AB}.
\ee
By summing over $k$ on both sides and from the fact that
\be
\sum_{k=0}^{d-1} \gamma_{yk}^{(n)}\left(\gamma_{y'k}^{(n)}\right)^*=\delta_{yy'},
\ee
we have
\be
\I\otimes\mb{B}_y\ket{\psi}_\mc{AB}=\sum_{k=0}^{d-1}\gamma_{yk}^*A_k^\dagger\otimes\I_d\ket{\psi}_\mc{AB},
\ee
where we replaced $y$ by $y'$ for easy of reading. Now, let us multiply the above equation by its own Hermitian conjugate to obtain
\be
\I\otimes\mb{B}_y\mb{B}_y^\dagger\ket{\psi}_\mc{AB}=\sum_{k,k'=0}^{d-1}\gamma_{yk}^*\gamma_{yk'}A_k^\dagger A_{k'}\otimes\I_d\ket{\psi}_\mc{AB}.
\ee
On the r.h.s. of the above we can explicitly calculate the terms
\be
\sum_{k,k'=0}^{d-1}\gamma_{yk}^*\gamma_{yk'}A_k^\dagger A_{k'}=\sum_{k,k'=0}^{d-1}\gamma_{yk}^*\gamma_{yk'}Z^{k-k'}=\sum_{j,k,k'=0}^{d-1}\gamma_{yk}^*\gamma_{yk'}\omega^{j(k-k')}\ket{j}\bra{j}.
\ee
To prove that the above diagonal operator is equal to the identity, we can directly calculate its coefficients
\be
\sum_{k,k'=0}^{d-1}\gamma_{yk}^*\gamma_{yk'}\omega^{j(k-k')}=\left|\sum_{k=0}^{d-1} \gamma_{yk}^*\omega^{jk} \right|^2=\left|\sum_{k,l=0}^{d-1}\frac{1}{d}e^{\mathbbm{i}\phi_{l\oplus 1,y}}\omega^{k(j-l)} \right|^2=\left| e^{\mathbbm{i}\phi_{j\oplus 1},y} \right|^2=1, \qquad \forall j,y.
\ee
Therefore $\mb{B}_y\mb{B}_y^\dagger=\I_d$. Similar arguments result in $\mb{B}_y^\dagger\mb{B}_y = \I_d$, which means that $\mb{B}_y$ must be unitary. Since $B_y$ is unitary as well, we can see from the block decomposition \eqref{block} that $C_y=D_y=0$.

\textit{(Step 2.)} Now, let us prove the self-testing statement. First, let us write $\ket{\psi}_\mc{AB}$ in its Schmidt decomposition form
\be\label{schmidt_decomposition}
\ket{\psi}_\mc{AB}=\sum_{i=0}^{d-1}\lambda_i\ket{u_i}\ket{v_i},
\ee
where $\{\ket{u_i}\}$ and $\{\ket{v_i}\}$ are orthonormal bases of $\mathbb{C}^d$ and $\mc{H}_\mc{B}$, respectively. Also, because $\rank(\rho_\mc{A})=d$, we have $\lambda_i> 0$, $\forall i$, such that $\sum_i \lambda_i^2=1$. Note that, there is a unitary transformation $U_\mc{B}$ that satisfies $U_\mc{B}\ket{v_i}=\ket{u_i^*}$, $\forall i$, where $^*$ denotes complex conjugation. Therefore, we can rewrite \eqref{schmidt_decomposition} as
\be
\ket{\psi}_\mc{AB}=\left(P_\mc{A}\otimes U_\mc{B}^\dagger \right)\dfrac{1}{\sqrt{d}}\sum_{i=0}^{d-1}\ket{u_i}\ket{u_i^*},
\ee
where $P_\mc{A}$ is an operator that is diagonal in the basis $\{u_i\}$, with eigenvalues $\sqrt{d}\lambda_i$. Note also that because the operators $A_i=XZ^{-i}$ form a set of genuinely incompatible measurements, we have that $\rank(\rho_\mc{A}) = d$ (see Supplemental Material of Ref. \cite{sarkar2022certification} for proofs). Therefore, we can write $\ket{\psi}_\mc{AB}$ in terms of a maximally entangled state $\ket{\phi_d^+}$ as
\be\label{PUphi}
\ket{\psi}_\mc{AB}=\left(P_\mc{A}\otimes U_\mc{B}^\dagger \right)\ket{\phi_d^+}.
\ee
Now, we can employ the above equation in \eqref{Ak_Bbbk} to obtain
\be\label{APBUdagger}
\left(A_k P_\mc{A} \otimes U_\mc{B}\widetilde{\mb{B}}_k U_\mc{B}^\dagger\right) \ket{\phi^+_d}_\mc{AB}=\left(P_\mc{A}\otimes \I \right)\ket{\phi_d^+}.
\ee
A quite useful mathematical fact is the following: If $R$ and $Q$ are any two $d\times d$ matrices, then $R\otimes Q\ket{\phi_d^+}=RQ^T\otimes \I\ket{\phi_d^+}$. By employing this fact to \eqref{APBUdagger} and omitting the state, we obtain
\be\label{APUBUtranspose}
A_k P_\mc{A}\left(U_\mc{B}\widetilde{\mb{B}}_k U_\mc{B}^\dagger \right)^\text{T}=P_\mc{A}.
\ee
If we combine the above equation with its Hermitian conjugate
\be
\left(U_\mc{B}\widetilde{\mb{B}}_k U_\mc{B}^\dagger \right)^* P_\mc{A} A_k^\dagger =P_\mc{A},
\ee
we get
\be
A_k P_\mc{A}^2A_k^\dagger=P_A^2,
\ee
where we have used the fact that $\widetilde{\mb{B}}_k$ is unitary and $P_\mc{A}$ is self-adjoint. The above equation is equivalent to $[A_k,P_\mc{A}^2]=0$, because every $A_k$ is unitary. Also, since $P_\mc{A}$ is positive semidefinite, this further imply that $[A_k,P_\mc{A}]=0$. By Lemma 1 of the Supplemental Material of Ref. \cite{sarkar2022certification}, we can conclude that $P_\mc{A}$ is proportional to identity. By using this fact in \eqref{PUphi}, we immediately obtain \eqref{state_self_test}.

Now, let us self-test the measurements performed by Bob. From the fact that $P_\mc{A}$ is proportional to identity, we can conclude from \eqref{APUBUtranspose} that
\be
A_k\left(U_\mc{B}\widetilde{\mb{B}}_k U_\mc{B}^\dagger \right)^\text{T}=\I.
\ee
By applying $A_k^\dagger$ on both sides of the above equation and taking the transpose, we obtain
\be
U_\mc{B} \widetilde{\mb{B}}_k U_\mc{B}^\dagger =A_k^*.
\ee
From definition \eqref{substitution_rule}, the above equation can be expressed as
\be\label{sum_byk_UBU}
\sum_{y=0}^{d-1} \gamma_{yk}^{(1)} U_\mc{B} \mb{B}_y U_\mc{B}^\dagger =A_k^*.
\ee
We can now multiply both sides by $\left(\gamma_{y'k}^{(1)}\right)^*$, sum over $k$, and use Eq. \eqref{b_delta} to obtain
\be
\sum_{y=0}^{d-1} \delta_{yy'} U_\mc{B}\mb{B}_y U_\mc{B}^\dagger =\sum_{k=0}^{d-1} \left(\gamma_{y'k}^{(1)} \right)^* XZ^k.
\ee
Finally, from definition \eqref{Bbar}, the above equation results in
\be
U_\mc{B}\mb{B}_y U_\mc{B}^\dagger = \overline{B}_y,
\ee
which finishes the proof.
\end{proof}

Now, we are going to extend the above results to the case where we do not assume that (i) the measurements performed by Bob are projective and (ii) that the state shared by the parties is pure.

\begin{theorem}\label{Theorem_2}
    Assume that the Steering inequality \eqref{Steering_inequality} is maximally violated by a state $\rho_\mc{AB}$ acting on $\mathbb{C}^d\otimes\mc{H_B}$ and unitary $d$-outcome observables $A_k$ and $B_y$ ($k,y\in\{0,\ldots,d-1\}$) acting on, respectively, $\mathbb{C}^d$ and $\mc{H_B}$ such that the observables on Alice's trusted side are given by $A_k=XZ^{-k}$. Then, the following three statements hold true:

    (i) Bob's measurements are projective, that is, all operators $B_{n|y}$ are unitary such that $B_{n|y}^d = \I$, 
    
    (ii) Bob's Hilbert space decomposes as $\mc{H_B}=(\mb{C}^d)_{\mc{B}'}\otimes \mc{H}_{\mc{B}''}$, and

    (iii) there exists a local unitary transformation on Bob's untrusted side, $U_\mc{B}: \mc{H_B}\to \mc{H_B}$, such that
\be\label{state_self_test2}
(\I_\mc{A}\otimes U_\mc{B})\rho_\mc{AB}(\I_\mc{A}\otimes U_\mc{B}^\dagger)=\ket{\phi_d^+}\bra{\phi_d^+}_\mc{AB'}\otimes\rho_{\mc{B}'}
\ee
and
\be\label{operator_self_test2}
\forall y,\quad U_\mc{B} B_y U_\mc{B}^\dagger=\overline{B}_y\otimes\I_\mc{B''},
\ee
where $\mc{B}''$ denotes Bob's auxiliary system and $\overline{B}_y\coloneqq \overline{B}_y^{(1)}$ are given by \eqref{Bbarn}.
\end{theorem}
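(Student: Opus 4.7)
The plan is to follow the sketch indicated in the statement and reduce the mixed/non-projective case to the pure/projective case of Theorem \ref{theorem_1_main} applied componentwise. First, I would exploit the SOS decomposition \eqref{SOS_decomposition_main}: since it writes $d(d-1)\I-\mc{S}$ as a sum of terms of the form $T_{k,n}^\dagger T_{k,n}$ with $T_{k,n}=\I-A_k^n\otimes\widetilde{B}_k^{(n)}$, maximal violation forces $T_{k,n}\rho_\mc{AB}=0$ for every admissible $k,n$, i.e.,
\begin{equation}
    (A_k^n\otimes \widetilde{B}_k^{(n)})\,\rho_\mc{AB}=\rho_\mc{AB}.
\end{equation}

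To establish point (i), I would take the Hermitian conjugate of this relation, invoke Fact \ref{fact_b} to identify $\widetilde{B}_k^{(d-n)}=(\widetilde{B}_k^{(n)})^\dagger$, and compose to obtain $(\I\otimes \widetilde{B}_k^{(n)\dagger}\widetilde{B}_k^{(n)})\rho_\mc{AB}=\rho_\mc{AB}$. Because Alice's observables $A_k=XZ^{-k}$ are genuinely incompatible, her marginal $\rho_\mc{A}$ has full rank $d$, so this yields $\widetilde{B}_k^{(n)\dagger}\widetilde{B}_k^{(n)}=\I$ on the support of $\rho_\mc{B}$, hence $\widetilde{B}_k^{(n)}$ is unitary there. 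Inverting the substitution rule \eqref{substitution_rule_main} via \eqref{b_delta_main}, i.e. multiplying by $(\gamma_{yk}^{(n)})^*$ and summing over $k$, one recovers $B_{n|y}$ as a unitary. Combined with $B_{n|y}^\dagger=B_{-n|y}$ from \eqref{Bn^dagger=Bd-n_main}, this forces Bob's POVM elements $N_{b|y}$ to be rank-one projectors summing to the identity and $B_{n|y}=B_y^n$ with $B_y^d=\I$, proving projectivity.

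Next I would decompose $\rho_\mc{AB}=\sum_{s=1}^K p_s\proj{\psi_s}_\mc{AB}$ with $p_s>0$. Since the stabilizer operator $A_k^n\otimes\widetilde{B}_k^{(n)}$ acts as the identity on $\rho_\mc{AB}$ and is a unitary with eigenvalues on the unit circle, each pure component must individually satisfy $(A_k^n\otimes\widetilde{B}_k^{(n)})\ket{\psi_s}=\ket{\psi_s}$. Invoking Theorem \ref{theorem_1_main} for each $s$, I obtain unitaries $U_\mc{B}^{(s)}$ such that $(\I_\mc{A}\otimes U_\mc{B}^{(s)})\ket{\psi_s}=\ket{\phi_d^+}$ and $U_\mc{B}^{(s)}\mathbb{B}_y^{(s)}(U_\mc{B}^{(s)})^\dagger=\overline{B}_y$, where $\mathbb{B}_y^{(s)}=\Pi_\mc{B}^{s}B_y\Pi_\mc{B}^{s}$ acts on $V_s\coloneqq\mathrm{supp}(\rho_\mc{B}^{(s)})$, with $\dim V_s=d$. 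From Step 1 of the proof of Theorem \ref{theorem_1_main}, each $B_y$ splits as $B_y=\mathbb{B}_y^{(s)}\oplus E_y^{(s)}$ with respect to $V_s$ and its complement, and so does $\widetilde{B}_k=\widetilde{\mathbb{B}}_k^{(s)}\oplus \widetilde{E}_k^{(s)}$.

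The main obstacle, as anticipated, is showing that the local supports $V_s$ are pairwise orthogonal; once this is established the rest is routine. For this I would mimic the argument of the proof of Theorem 1.2 of Ref.~\cite{sarkar2022certification}: assuming a nontrivial overlap between $V_s$ and $V_{s'}$, one derives a contradiction from the simultaneous requirement that $\widetilde{\mathbb{B}}_k^{(s)}$ and $\widetilde{\mathbb{B}}_k^{(s')}$ both be unitary on their respective $d$-dimensional supports while jointly fitting inside a single unitary $\widetilde{B}_k$ on their union, exploiting that the two block actions are connected via conjugation to the same operator $A_k^*$. Granting the orthogonality, $\bigoplus_{s=1}^{K}V_s=(\mathbb{C}^d)_\mc{B'}\otimes\mc{H}_\mc{B''}$ with $\dim\mc{H}_\mc{B''}=K$, and a single global unitary $U_\mc{B}$ can be assembled that sends $\ket{\psi_s}_\mc{AB}$ to $\ket{\phi_d^+}_\mc{AB'}\otimes\ket{s}_\mc{B''}$, yielding \eqref{state_self_test2_main} with $\rho_\mc{B''}=\sum_s p_s\proj{s}$.

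Finally, for the observables, I would use the block form \eqref{U_B_decomposed_main} induced by this decomposition and insert it into $(A_k\otimes\widetilde{B}_k)\rho_\mc{AB}=\rho_\mc{AB}$. Comparing diagonal and off-diagonal blocks in the $\mc{B''}$ register and using that $(\I_\mc{A}\otimes U_\mc{B})\rho_\mc{AB}(\I_\mc{A}\otimes U_\mc{B}^\dagger)$ is the maximally entangled state tensored with $\rho_\mc{B''}$, one deduces $\widetilde{B}_{i,j,k}=\delta_{ij}A_k^*$, i.e. $U_\mc{B}\widetilde{B}_k U_\mc{B}^\dagger=A_k^*\otimes \I_\mc{B''}$. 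Inverting the substitution \eqref{substitution_rule_main} by multiplying with $(\gamma_{yk}^{(1)})^*$ and summing over $k$, as in Theorem \ref{theorem_1_main}, converts this into the desired \eqref{operator_self_test2_main}.
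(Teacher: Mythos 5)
Your overall route coincides with the paper's: extract the stabilizer relations from the SOS decomposition, prove unitarity of the $\widetilde{B}_k^{(n)}$ and hence projectivity of Bob's measurements, eigendecompose $\rho_\mc{AB}$, apply Theorem \ref{theorem_1_main} componentwise, invoke the orthogonality-of-supports argument of Sarkar \emph{et al.} to get the tensor factorization $\mc{H_B}=(\mathbb{C}^d)_{\mc{B}'}\otimes\mc{H}_{\mc{B}''}$, and finish with the block decomposition $U_\mc{B}\widetilde{B}_kU_\mc{B}^\dagger=\sum_{i,j}\widetilde{B}_{i,j,k}\otimes\ket{i}\bra{j}$. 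Steps 2--4 of your sketch are sound and match the paper, including the (shared) deferral of the support-orthogonality argument to the cited reference.

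There is, however, one step in your point (i) that fails as written: you claim that ``inverting the substitution rule \eqref{substitution_rule_main} via \eqref{b_delta_main}, i.e.\ multiplying by $(\gamma_{yk}^{(n)})^*$ and summing over $k$, one recovers $B_{n|y}$ as a unitary.'' The inversion gives $B_{n|y}=\sum_k(\gamma_{yk}^{(n)})^*\widetilde{B}_k^{(n)}$, but a linear combination of unitaries is not in general unitary, so unitarity of the $\widetilde{B}_k^{(n)}$ does not transfer to $B_{n|y}$ by inversion alone. The paper's argument at this point is different and essential: expand $\widetilde{B}_k^{(n)}(\widetilde{B}_k^{(n)})^\dagger=\I$ using the substitution rule, sum over $k$ and use \eqref{b_delta} to obtain $\sum_{y}B_{n|y}B_{n|y}^\dagger=d\,\I_\mc{B}$, and then invoke the operator inequality $B_{n|y}B_{n|y}^\dagger\leqslant\I_\mc{B}$ from \eqref{BB<1} to force each of the $d$ terms to saturate, i.e.\ $B_{n|y}B_{n|y}^\dagger=\I_\mc{B}$ (and similarly for the reversed product). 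You need this saturation argument; without it projectivity is not established. Two further minor points: the conclusion is that the $N_{b|y}$ are projectors summing to the identity, not necessarily \emph{rank-one} projectors (their ranks equal $\dim\mc{H_B}/d$ after the tensor factorization); and when you pass from $(A_k^n\otimes\widetilde{B}_k^{(n)})\rho_\mc{AB}=\rho_\mc{AB}$ to the same relation on each $\ket{\psi_s}$, you should specify that the $\ket{\psi_s}$ are eigenvectors of $\rho_\mc{AB}$, since for a generic non-orthogonal pure-state decomposition the componentwise statement does not immediately follow.
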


\begin{proof}
    \textit{(Step 1.)} Let us start by proving that if the Steering inequality is maximally violated, then Bob's measurements must be projective. From the SOS decomposition \eqref{SOS_decomposition}, if the Steering inequality is maximally violated by a set of operators $B_{n|y}$ and a state $\rho_\mc{AB}$, then
    \be\label{AkBkrho}
    \left(A_k^n\otimes \widetilde{B}_{k}^{(n)}\right)\rho_\mc{AB}=\rho_\mc{AB},
    \ee
    for $n=1,\ldots,d-1$. In particular,
    \be
    \left(A_k^{d-n}\otimes \widetilde{B}_{k}^{(d-n)}\right)\rho_\mc{AB}=\rho_\mc{AB}
    \ee
    also holds. From \eqref{Bn^dagger=Bd-n} and Fact \ref{fact_b}, we can see that
    \be
    \widetilde{B}_k^{(d-n)}=\left(\widetilde{B}_k^{(n)} \right)^\dagger
    \ee
    is also valid even if we do not assume that the measurements are projective. The three equations above give us
    \be
    \left[\I_\mc{A}\otimes \widetilde{B}_k^{(n)}\left(\widetilde{B}_k^{(n)}\right)^\dagger\right]\rho_\mc{AB}=\rho_\mc{AB},
    \ee
    which immediately implies that
    \be\label{Btilde_unitary}
    \widetilde{B}_k^{(n)}\left(\widetilde{B}_k^{(n)}\right)^\dagger=\I_\mc{B}.
    \ee
    Similar arguments lead to $\left(\widetilde{B}_k^{(n)}\right)^\dagger\widetilde{B}_k^{(n)}=\I_\mc{B}$, so we can conclude that every $\widetilde{B}_k^{(n)}$ is a unitary operator. From definition \eqref{substitution_rule}, the above equation becomes
    \be
    \sum_{y,y'=0}^{d-1}\gamma_{yk}^{(n)}\left(\gamma_{y'k}^{(n)} \right)^* B_{n|y} B_{n|y'}^\dagger = \I_\mc{B}.
    \ee
    If we some over $k$ on both sides of the above equation, we can use \eqref{b_delta} to obtain
    \be\label{BB=d1}
    \sum_{y=0}^{d-1} B_{n|y} B_{n|y}^\dagger = d\I_\mc{B}.
    \ee
    Because $B_{n|y}B_{n|y}^\dagger\leqslant \I_\mc{B}$ for any $n$ and any $y$, then \eqref{BB=d1} implies that
    \be
    B_{n|y} B_{n|y}^\dagger = \I_\mc{B}.
    \ee
    Similar arguments can be used to see that $B_{n|y}^\dagger B_{n|y} = \I_\mc{B}$, which allows us to conclude that every operator $B_{n|y}$ is unitary. In that case, Bob's measurements $\{N_{b|y}\}$ must be projective. As in the previous theorem, we can now use the notation $B_{n|y}=B_y^n$.
    
    \textit{(Step 2.)} To prove the main result, let us suppose that the state $\rho_\mc{AB}$ that maximally violates the Steering inequality \eqref{Steering_inequality} admits the decomposition
    \be\label{decomposition_psi_s}
    \rho_\mc{AB}=\sum_{s=1}^K p_s\ket{\psi_s}\bra{\psi_s}_\mc{AB},
    \ee
    where $p_s\geqslant 0$ for every $s$ such that $\sum_s p_s=1$ and $K$ is a positive integer. Without loss of generality, we can assume that $\ket{\psi_s}$ are the eigenvectors of $\rho_\mc{AB}$ and, therefore, $\braket{\psi_i}{\psi_j}=\delta_{ij}$. Because $\rho_\mc{AB}$ maximally violates \eqref{Steering_inequality}, every state $\ket{\psi_s}_\mc{AB}$ must maximally violate it too. Therefore,
    \be\label{AkBk_psi_s}
    \left(A_k^n\otimes \widetilde{B}_k^{(n)} \right)\ket{\psi_s}_\mc{AB}=\ket{\psi_s}_\mc{AB},
    \ee
    for every $k=0,\ldots,d-1$ and $n=1,\ldots,d-1$. From Theorem \ref{theorem_1}, there must be a unitary $U_\mc{B}^{(s)}$ such that
    \be\label{Upsi_s=phi}
    \left(\I_\mc{A}\otimes U_\mc{B}^{(s)}\right)\ket{\psi_s}_\mc{AB}=\ket{\phi_d^+},
    \ee
    for every $s$. Note that the unitaries $U_\mc{B}^{(s)}$ are not necessarily equal for different $s$.

    Let us now proceed similarly to the previous proof and write the state $\ket{\psi_s}_\mc{AB}$ in terms of the maximally entangled state $\ket{\phi_d^+}$ in the following way
    \be
    \ket{\psi_s}_\mc{AB}=\I_\mc{A}\otimes \left(U_\mc{B}^{(s)}\right)^\dagger\dfrac{1}{\sqrt{d}}\sum_i\ket{u_i}\ket{u_i^*}=\dfrac{1}{\sqrt{d}}\sum_i \ket{u_i}\ket{f_i^s},
    \ee
    where the states $\ket{f_i^s}=\left(U_\mc{B}^{(s)}\right)^\dagger\ket{u_i^*}$ form an orthonormal basis for every $s$. In addition, we can also conclude from Theorem \ref{theorem_1} that
    \be
    U_\mc{B}^{(s)}\mathbb{B}_y^{(s)} \left( U_\mc{B}^{(s)}\right)^\dagger=\overline{B}_y,
    \ee
    for every $y$, where $\mathbb{B}_y^{(s)}\coloneqq\Pi_\mc{B}^s B_y \Pi_\mc{B}^s$ is the operator $B_y$ projected onto the support of Bob's local state $\rho_\mc{B}^{(s)}=\Tr_\mc{A}\left(\ket{\psi_s}\bra{\psi_s}_\mc{AB} \right)$. This support is characterized as
    \be
    \text{supp}\left(\rho_\mc{B}^{(s)} \right)\equiv V_s=\text{span}\left\{\ket{f_0^s},\ldots,\ket{f_{d-1}^s} \right\}\subset \mc{H_B}.
    \ee
    Similarly to the proof of Theorem \ref{theorem_1}, we can now affirm that Bob's observables $B_y$ can be decomposed as
    \be
    B_y=\mathbb{B}_y^{(s)}\oplus E_y^{(s)},
    \ee
    where $\mathbb{B}_y^{(s)}$ are unitary operators such that $\left(\mathbb{B}_y^{(s)}\right)^d=\I_d$ and $E_y^{(s)}$ are also unitaries acting on the complement of $V_s$. Now, we can multiply the above equation by $\gamma_{yk}^{(1)}$ on both sides and take the summation over $y$ to obtain
    \be\label{Bk_direct_sum}
    \widetilde{B}_k=\widetilde{\mathbb{B}}_k^{(s)}\oplus \widetilde{E}_k^{(s)},
    \ee
    where
    \be
    \widetilde{\mathbb{B}}_k^{(s)}\coloneqq \sum_{y=0}^{d-1}\gamma_{yk}^{(1)}\mathbb{B}_y^{(s)},
    \ee
    and similarly to $\widetilde{E}_k^{(s)}$. Note that we already proved that $\widetilde{B}_k=\widetilde{B}_k^{(1)}$ is unitary in \eqref{Btilde_unitary}. In addition, we can use the same arguments from Theorem \ref{theorem_1} in \eqref{AkBk_psi_s} to show that $\widetilde{\mathbb{B}}_k^{(s)}$ is also unitary. From \eqref{Bk_direct_sum}, we can follow the same steps in the proof of Theorem 1.2 of Ref. \cite{sarkar2022certification} to conclude that the local supports $V_s$ are mutually orthogonal. That implies that Bob's local Hilbert space admits the following decomposition
    \be\label{orthogonal_subspaces}
    \mc{H_B}=V_1\oplus V_2\oplus \ldots\oplus V_{k}=(\mathbb{C}^d)_{\mc{B}'}\otimes \mc{H}_{\mc{B}''},
    \ee
    where the second equality is due to the fact that $\dim V_s=d$ for every $s=1,\ldots,K$.

    \textit{(Step 3.)} Now, let us prove equations \eqref{state_self_test2} and \eqref{operator_self_test2} by following the same procedure done in the proof of Theorem 1.2 of Ref. \cite{sarkar2022certification}. For every $s$, the states $\{\ket{f_k^s}\}_i$ span orthogonal subspaces. For this reason, we can affirm that there exists a single unitary $U_\mc{B}:\mc{H_B}\to\mc{H_B}$ such that
    \be
    U_\mc{B}\ket{f_i^s}=\ket{i}_\mc{B'}\otimes\ket{s}_\mc{B''},
    \ee
    for $i=0,\ldots,d-1$ and $s=0,\ldots,K$. Therefore, we conclude that
    \be
    (\I_\mc{A}\otimes U_\mc{B})\ket{\psi_s}_\mc{AB}=\ket{\phi_s^+}_\mc{AB'}\otimes \ket{s}_\mc{B''}.
    \ee
    Now, we can apply $U_\mc{B}$ on \eqref{decomposition_psi_s} to obtain
    \be\label{selftest_mixed_state}
    (\I_\mc{A}\otimes U_\mc{B})\rho_\mc{AB}(\I_\mc{A}\otimes U_\mc{B}^\dagger)=\ket{\phi_d^+}\bra{\phi_d^+}_\mc{AB'}\otimes \rho_\mc{B''},
    \ee
    where $\rho_\mc{B''}=\sum_s p_s\ket{s}\bra{s}_\mc{B''}$, as we wanted to prove.

    From \eqref{orthogonal_subspaces}, we can conclude that the operators $\widetilde{B}_k$ admit the following decomposition under the action of $U_\mc{B}$
    \be\label{U_B_decomposed}
    U_\mc{B} \widetilde{B}_k U_\mc{B}^\dagger= \sum_{i,j}\widetilde{B}_{i,j,k}\otimes \ket{i}\bra{j}_\mc{B''}, 
    \ee
    where $\widetilde{B}_{i,j,k}$ are matrices of dimension $d$ acting over $(\mathbb{C}^d)_\mc{B'}$ and $\ket{s}_\mc{B''}$ is the computational basis of $\mc{H}_\mc{B''}$ and, at the same time, the eigenbasis of $\rho_\mc{B''}$. From \eqref{AkBkrho} for $n=1$, we can write that
    \be
    \left[A_k\otimes \left(U_\mc{B}\widetilde{B}_k U_\mc{B}^\dagger\right)\right] U_\mc{B}\rho_\mc{AB} U_\mc{B}^\dagger=U_\mc{B}\rho_\mc{AB} U_\mc{B}^\dagger.
    \ee
    By applying \eqref{selftest_mixed_state} and \eqref{U_B_decomposed} in the above, we obtain
    \be\label{ABpsi_pij}
    \sum_{i,j} \left(A_k \otimes \widetilde{B}_{i,j,k}\right)\left(\ket{\phi_d^+}\bra{\phi_d^+}_\mc{AB'}\otimes p_j\ket{i}\bra{j}\right)=\ket{\phi_d^+}\bra{\phi_d^+}_\mc{AB'}\otimes \rho_\mc{B''}.
    \ee
    The above matrix equation can only be satisfied if $\left(A_k \otimes \widetilde{B}_{i,j,k}\right)\ket{\phi_d^+}=0$ for $i\neq j$. Since $A_k$ is invertible, we must have $\widetilde{B}_{i,j,k}=0$ for $i\neq j$. Thus, the remaining diagonal terms of \eqref{ABpsi_pij} must satisfy
    \be
    \left(A_k\otimes\widetilde{B}_{i,i,k}\right)\ket{\phi_d^+}=\ket{\phi_d^+}.
    \ee
    We know that given matrices $Q$ and $R$, we always have $Q\otimes R\ket{\psi_d^+}=QR^T\otimes \I\ket{\phi_d^+}$. Therefore, we can conclude from the above equation that $A_k\left(\widetilde{B}_{i,i,k}\right)^T=\I_d$, which immediately implies that $\widetilde{B}_{i,i,k}=A_k^*$. Having said that, equation \eqref{U_B_decomposed} becomes
    \be
    U_\mc{B} \widetilde{B}_k U_\mc{B}^\dagger= A_k^*\otimes \sum_i\ket{i}\bra{i}_\mc{B''}=A_k^*\otimes \I_\mc{B''}.
    \ee
    Finally, we can apply the same procedure that is done in \eqref{sum_byk_UBU} to obtain
    \be
    U_\mc{B} B_y U_\mc{B}^\dagger=\overline{B}_y\otimes \I_\mc{B''},
    \ee
    as we wanted to prove.
\end{proof}

\section{Solutions for the qutrit case}\label{appendix_Uniqueness}

Here we prove that in the three-dimensional case ($d=3$) the solutions 
to the systems of equations \eqref{b_delta_main} can be found analytically 
and is of the form \eqref{conditions}. 

The equations \eqref{b_delta_main} can be thought of as the orthogonality 
conditions for three three-dimensional vectors, each composed of phases. 
Let us restate them as
\begin{equation}
    \sum_{i=0}^2\mathrm{exp}[{\mathrm{i}(\phi_{i,0}-\phi_{i,1})}]=0,\qquad
    \sum_{i=0}^2\mathrm{exp}[{\mathrm{i}(\phi_{i,2}-\phi_{i,0})}]=0,\qquad
    \sum_{i=0}^2\mathrm{exp}[{\mathrm{i}(\phi_{i,1}-\phi_{i,2})}]=0.
\end{equation}
Denoting then $p_i=\phi_{i,0}-\phi_{i,1}$, $q_i=\phi_{i,0}-\phi_{i,2}$ and
$r_i=\phi_{i,1}-\phi_{i,2}$, we can further simplify these equations to 
\begin{equation}\label{equations2}
    \sum_{i=0}^2\mathrm{exp}(\mathrm{i}p_i)=\sum_{i=0}^2\mathrm{exp}(\mathrm{i}q_i)=\sum_{i=0}^2\mathrm{exp}(\mathrm{i}r_i)=0,
\end{equation}
where, the new variables $p_i$, $r_i$ and $q_i$ satisfy additionally 
\begin{equation}\label{addconditions}
    p_i+q_i+r_i=0\qquad (i=0,1,2).
\end{equation}

Let us now find solutions to the first of these equations and rewrite it as
\begin{equation}\label{equation3}
  \mathrm{exp}(\mathrm{i}p_0)+\mathrm{exp}(\mathrm{i}p_1)  =-\mathrm{exp}(\mathrm{i}p_3).
\end{equation}
Let us then multiply this equation by its conjugation to obtain
\begin{equation}
  \left|\mathrm{exp}(\mathrm{i}p_0)+\mathrm{exp}(\mathrm{i}p_1)\right|^2  =1.
\end{equation}
This can further be rewritten as $\cos(p_0-p_1)=-1/2$ which has countably many solutions, which under exponentiation are equivalent to are $p_0=p_1\pm 2\pi/3$. Let us additionally notice that since the angles $\phi_{i,y}$ must satisfy $\sum_{i}\phi_{i,y}=0$ for any $y$, then we must have
\begin{equation}
    \sum_{i=0}^2p_i=0.
\end{equation}
It is not difficult to see that the above implies that the possible solutions for $p_0$
are $p_1=0,\pm 2\pi/3$. Let us then notice that due to the fact that the equations 
\eqref{equations2} can always be multiplied by $\omega$ or $\omega^2$, we can fix $p_0$
to be, say $p_0=2\pi/3$. Taking all this into account, we obtain two possible solutions of Eq. (\ref{equation3}) which, up to exponentiations, can be stated as
\begin{equation}
    p_0=2\pi/3,\qquad p_1=0,\qquad p_2=-2\pi/3 \qquad\mathrm{or}\qquad p_0=2\pi/3,\qquad p_1=-2\pi/3,\qquad p_2=0.
\end{equation}

The same arguments can be applied to the second equation in \eqref{equations2}, however, 
here we choose $q_0=-2\pi/3$, which gives
\begin{equation}
    q_0=-2\pi/3,\qquad q_1=2\pi/3,\qquad q_2=0 \qquad\mathrm{or}\qquad q_0=-2\pi/3,\qquad q_1=0,\qquad q_2=2\pi/3.
\end{equation}
To finally fix $r_i$ we must take into account Eq. (\ref{addconditions}), but also the fact that 
$r_0+r_1+r_2=0$. All this implies that the possible solutions are 
\begin{equation}\label{first}
    p_0=2\pi/3,\quad p_1=0,\quad p_2=-2\pi/3,\qquad q_0=-2\pi/3,\quad q_1=2\pi/3,\quad q_2=0,\qquad r_0=0,\quad r_1=-2\pi/3,\quad r_2=2\pi/3.
\end{equation}
or
\begin{equation}\label{second}
  p_0=2\pi/3,\quad p_1=-2\pi/3,\quad p_2=0,\qquad  q_0=-2\pi/3,\quad q_1=0,\quad q_2=2\pi/3,\qquad r_0=0,\quad r_1=2\pi/3,\quad r_2=-2\pi/3.
\end{equation}
On the level of the angles $\phi_{i,y}$, the first of these two solutions corresponds to 
Eq. (\ref{conditions}) and gives rise to the observables presented in Eqs. (\ref{1}), (\ref{2}) and (\ref{3}), whereas the second one to 
\be
\begin{array}{c}
y=1:\left\{\begin{array}{l}
\phi_{0,1}=\phi_{0,0}-\dfrac{2\pi}{3},\\[1ex]
\phi_{1,1}=\phi_{1,0}+\dfrac{2\pi}{3},
\end{array}\right.\\
y=2:
\left\{
\begin{array}{l}
\phi_{0,2}=\phi_{0,0}-\dfrac{2\pi}{3},\\[1ex]
\phi_{1,2}=\phi_{1,0},
\end{array}\right.
\end{array}
\label{conditions23}
\ee
which gives rise to slightly modified observables of the following form
\begin{subequations}
\begin{align}
\overline{B}_1 &\equiv\begin{pmatrix}
 0 & 0 & \omega^2 e^{\mathbbm{i} \phi _{0,0}} \\
 \omega e^{\mathbbm{i} \phi _{1,0}} & 0 & 0 \\
 0 & e^{-\mathbbm{i} \left(\phi _{0,0}+\phi _{1,0}\right)} & 0 
\end{pmatrix},\\
\overline{B}_2 &\equiv\begin{pmatrix}
 0 & 0 & \omega^2 e^{\mathbbm{i} \phi _{0,0}} \\
  e^{\mathbbm{i} \phi _{1,0}} & 0 & 0 \\
 0 & \omega e^{-\mathbbm{i} \left(\phi _{0,0}+\phi _{1,0}\right)} & 0 
\end{pmatrix}
\end{align}
\end{subequations}
with $\overline{B}_0$ being the same as in the previous case.

%
%

\section{Robust self-testing}\label{robustness}

Let us now prove that when the maximal violation of the inequality is not attained by an $\epsilon$, then the measurements performed by Bob, i.e. $B_y$, are close to the reference measurements $\overline{B}_y$ by a factor as a function of $\epsilon$. For simplicity, we will assume that the state is pure and the measurements performed by Bob are projective. 

To prove the robustness of our self-testing statement for the qutrit case, we will make use of the fact that
\begin{equation}\label{rownanie}
    \widetilde{B}_k^{(n)}=\widetilde{B}_k^n, \quad \forall n\in\{1,2\}\quad\text{and}\quad d=3,
\end{equation}
which is true from the fact that $\widetilde{B}_k^{(2)}=\left(\widetilde{B}_k^{(1)}\right)^\dagger$ and $\left(\widetilde{B}_k^{(2)} \right)^3=\I$. We can conjecture that the above relation is also valid for any prime $d>3$:
\begin{conjecture}\label{conjecture}
    For any prime $d$, we have that $\widetilde{B}_k^{(n)}=\widetilde{B}_k^n$, for every $n\in\{1,\ldots,d-1\}$. Note that this statement is true for $d=3$. 
\end{conjecture}

Having said that, we proceed to the robustness analysis by assuming the above conjecture, keeping in mind that (\ref{rownanie}) holds true for
$d=3$.

\begin{theorem}
Consider the unitary $d$-outcome observables $A_k=XZ^{-k}$ and $B_y$, for $k,y\in{0,\ldots,d-1}$ acting on, respectively $\mathbb{C}^d$ and $\mc{H}_\mc{B}$. If the Steering inequality \eqref{Steering_inequality}
\be
\langle\mc{S}\rangle=\sum_{n=1}^{d-1}\sum_{k=0}^{d-1} \langle A_k^n\otimes\widetilde{B}_k^{(n)} \rangle \leqslant \beta_C,
\ee
is violated by a state $\ket{\psi}_\mc{AB} \in \mathbb{C}^{d} \otimes \mathcal{H}_{B} $ and observables $ B_{y} $ such that $\langle\mc{S}\rangle \geqslant d(d-1) - \epsilon > \beta_C$, then there exists a unitary operation $ U_\mc{B}: \mathcal{H}_\mc{B} \to \mathcal{H}_\mc{B} $ such that
\begin{equation}\label{distanceUBpsi}
    \left\|\left(\I_\mc{A}\otimes U_\mc{B}\right)\left( \I_\mc{A}\otimes B_y \right)\ket{\psi}-\I\otimes \overline{B}_y\ket{\phi_d^+}\right\|\leqslant 2\sqrt{d}(2\epsilon)^\frac{1}{4}
\end{equation}
and
\begin{equation}\label{distanceBBbar}
    \left\|B_y-\overline{B}_y\right\|_2\leqslant 4d\left(2\epsilon \right)^\frac{1}{4},
\end{equation}
where $ k = 0, \ldots, d-1 $ and $ \overline{B}_{y}$ are Bob's ideal observables given by \eqref{Bbar}, and $ \| \cdot \|_{2} $ stands for the Hilbert-Schmidt norm.
\end{theorem}
\begin{proof}
    This proof follows similar steps to Ref. \cite{sarkar2022certification}. We write them all here for completion, adapted to the operators \eqref{Bbar}. From the violation of the Steering inequality, we have that
\begin{equation}\label{Re(psi)}
    \left|\bra{\psi} A_k^n\otimes \widetilde{B}_k^{(n)}\ket{\psi} \right| \geqslant \text{Re} \left( \bra{\psi} A_k^n\otimes \widetilde{B}_k^{(n)}\ket{\psi} \right) \geqslant 1-\epsilon,
\end{equation}
for $k=0,\ldots,d-1$ and $n=1,\ldots,d-1$. Now, let us consider Alice's observables $A_k$ and rewrite them in terms of their own eigenvectors $\ket{q_j}$
\be
A_k=\sum_j \omega^j \ket{q_j^{(k)}}\bra{q_j^{(k)}},
\ee
where
\be
\ket{q_j^{(k)}}=\sum_{m=0}^{d-1} \omega^{-km(m-1)/2-jm}\ket{m}.
\ee
For a given $k$, we can write the operators $X$, $Z^{-l}$ and their product in the basis $\{\ket{q_j^{(k)}}\}_j$ as
\begin{equation}
    X=\sum_{j=0}^{d-1} \omega^j\ket{q_j^{(k)}}\bra{q_{j+k}^{(k)}},\qquad Z^{-l}=\sum_{j=0}^{d-1} \ket{q_{j+l}^{(k)}}\bra{q_j^{(k)}}, \qquad XZ^{-l}=\sum_{j=0}^{d-1} \omega^{j+l-k}\ket{q_{j+l-k}^{(k)}}\bra{q_j^{(k)}}.
\end{equation}
Now, let us consider a particular basis $\{|q_j^{(k)}\rangle\}_{j}$ for a fixed $k$, and let us rewrite the state $\ket{\psi}$ in the basis $\{\ket{q_j^{(k)}}\ket{b_j}\}_j$ as
\begin{equation}\label{psi_qb}
    \ket{\psi}=\sum_{i=0}^{d-1}\alpha_i \ket{q_i^{(k)}}\ket{b_i},
\end{equation}
where $\alpha_i$ are nonnegative real numbers that satisfy $\sum_{i=0}^{d-1}\alpha^i=1$ and $\ket{b_i}$ are not necessarily orthogonal vectors in space $\mc{H_B}$. Now, observe that
\begin{equation}\label{XZ-l}
    \left(XZ^{-l} \right)^n \ket{q_j^{(k)}}=\omega^{nj+n(n+1)(l-k)/2} \ket{q_{j+n(l-k)}^{(k)}}
\end{equation}
for $k,l=0,\ldots,d-1$ and $n=1,\ldots,d-1$. By using Eqs. \eqref{psi_qb} 
and \eqref{XZ-l} in Eq. \eqref{Re(psi)} we have   
\begin{equation}\label{Re(Btilde(n))}
    \sum_{i=0}^{d-1}\alpha_{i+n(l-k)}\alpha_i\text{Re}\left( \omega^{ni+n(n+1)(l-k)/2} \bra{b_{i+n(l-k)}} \widetilde{B}_k^{(n)}\ket{b_i} \right)\geqslant 1-\epsilon.
\end{equation}
From Conjecture \ref{conjecture}, we have that
\begin{equation}
    \widetilde{B}_k^{(n)}=\widetilde{B}_k^n, \quad \forall n\in\{1,\ldots,d-1 \}.
\end{equation}
From the above, Eq. \eqref{Re(Btilde(n))} becomes
\begin{equation}\label{Re(Btilde^n)}
    \sum_{i=0}^{d-1}\alpha_{i+n(l-k)}\alpha_i\text{Re}\left( \omega^{ni+n(n+1)(l-k)/2} \bra{b_{i+n(l-k)}} \widetilde{B}_k^n\ket{b_i} \right)\geqslant 1-\epsilon.
\end{equation}

Let us now consider some specific cases of the above inequality to get some bounds regarding the coefficients $\alpha_i$ of the state \eqref{psi_qb}.
If $k=l+1$, we obtain
\begin{equation}\label{alpha_i+n}
    \sum_{i=0}^{d-1}\alpha_{i+n}\alpha_i\text{Re}\left( \omega^{ni+n(n+1)/2} \bra{b_{i+n}} \widetilde{B}_k^n\ket{b_i} \right)\geqslant 1-\epsilon.
\end{equation}
Because $\widetilde{B}_k$ is unitary and $\ket{b_i}$ are normalized, we have that
\be
\left| \omega^{ni+n(n+1)/2} \bra{b_{i+n}} \widetilde{B}_k^n\ket{b_i} \right| \leqslant 1.
\ee
Since every complex number $z$ satisfies $\text{Re}(z)\leqslant |z|$, the above equation and \eqref{alpha_i+n} imply that
\be
\sum_{i=0}^{d-1} \alpha_{i+n}\alpha_i \geqslant 1-\epsilon, \qquad \forall n\in\{0,\ldots,d-1\}.
\ee
Note that if we sum over $n$ in the above inequality, we obtain
\be\label{sum_alpha_i}
\sum_i \alpha_i \geqslant \sqrt{d}\sqrt{1-\epsilon}.
\ee
Some extra relations can be derived from the above inequality, such as \cite{sarkar2022certification}
\be\label{<alpha_i}
\frac{1}{\sqrt{d}} - \sqrt{2\epsilon} \leq \alpha_{i} \leq \frac{1}{\sqrt{d}} + \sqrt{2\epsilon}
\ee
and
\be
\frac{1}{d} - \sqrt{2\epsilon} \leq \alpha_{i} \alpha_{i + j} \leq \frac{1}{d} + \sqrt{2\epsilon}.
\ee

Now, let us show that Bob's observables are going to be close to the reference observables. For that, let us consider Eq. \eqref{Re(Btilde^n)} when $l=k$, that is
\begin{equation}
    \sum_{i=0}^{d-1}\alpha_i^2\text{Re}\left( \omega^{ni} \bra{b_{i}} \widetilde{B}_k^n\ket{b_i} \right)\geqslant 1-\epsilon.
\end{equation}
Since $\text{Re}\left( \omega^{ni} \bra{b_{i}} \widetilde{B}_k^n\ket{b_i} \right)\leqslant 1$, we can separate one arbitrary term from the sum in the above inequality as the following
\begin{equation}
    \alpha_j^2\text{Re}\left( \omega^{nj} \bra{b_{j}} \widetilde{B}_k^n\ket{b_j} \right)+\sum_{i=0,\ i\neq j}^{d-1}\alpha_i^2\geqslant 1-\epsilon,
\end{equation}
for any $j$. Since $\alpha_j^2+\sum_{i=0,\ i\neq j}^{d-1}\alpha_i^2=1$, the above inequality becomes
\be
    \alpha_j^2\left[\text{Re}\left( \omega^{nj} \bra{b_{j}} \widetilde{B}_k^n\ket{b_j} \right)-1\right]+1\geqslant 1-\epsilon
\ee
By rearranging the terms, we obtain
\be\label{alpha_j[]}
    \alpha_j^2\left[1-\text{Re}\left( \omega^{nj} \bra{b_{j}} \widetilde{B}_k^n\ket{b_j} \right)\right]\leqslant\epsilon,
\ee
for $k,j=0,\ldots,d-1$ and $n=1,\ldots,d-1$. Now, we can use the left inequality of \eqref{<alpha_i} in \eqref{alpha_j[]} to obtain
\be\label{Re()>1-2dsqrt2e}
\text{Re}\left( \omega^{ni} \bra{b_{i}} \widetilde{B}_k^n\ket{b_i} \right)\geqslant 1-2d\sqrt{2\epsilon}.
\ee
Let us now use the eigendecomposition of $\widetilde{B}_k$ with outcomes $\omega^j$ such that
\be\label{eigendecomposition}
\widetilde{B}_k=\sum_{j=0}^{d-1}\omega^j P_{j|k},
\ee
where $P_{j|k}$ are orthogonal projectors. Now we can insert the above equation into \eqref{Re()>1-2dsqrt2e} to obtain
\begin{equation}
\sum_{j=0}^{d-1} \text{Re} \left( \omega^{n(j+i)} \langle b_{i} | P_{j|k} |b_{i} \rangle \right) \geqslant 1 - 2d\sqrt{2\epsilon}.
\end{equation}
Since $\sum_{n=0}^{d-1}\omega^{n(i+j)}=d\delta_{j,-i}$, we can sum over $n$ on both sides to get
\begin{equation}\label{<bPb>}
\bra{b_{j}} P_{-j|k} \ket{b_{j}} \geqslant 1 - 2d\sqrt{2\epsilon}.
\end{equation}
The above inequality means that, for a fixed $k$,  each vector $\ket{b_{j}}$ is close to the subspace corresponding to outcome $\omega^{-j}$ of $\widetilde{B}_k$ by a small amount $2d\sqrt{2\epsilon}$. For convenience, we are going to use the notation $\ket{u_{j}} = P_{-j|k} \ket{b_{j}}$ and $\ket{\overline{u}_{j}} = \ket{u_{j}} / ||\,\ket{u_{j}}\,||$. Because all $P_{j|k}$ are pairwise orthogonal projectors, then $\ket{\overline{u}_{j}}$ are also orthogonal. In addition, from \eqref{<bPb>} we have that $ ||\, \ket{u_j}\, || \geqslant 1 - 2d\sqrt{2\epsilon} $.

From the above discussion, we can rewrite \eqref{eigendecomposition} as
\begin{equation}\label{Bk=uu}
\widetilde{B}_k = \sum_{j=0}^{d-1} \omega^{-j} \ket{\overline{u}_{j}}\bra{\overline{u}_j} \oplus \widetilde{B}_{k}',
\end{equation}
where $\widetilde{B}_{k}'$ are operators whose support is orthogonal to every $\ket{u_j}$. Since $\ket{\overline{u}_j}$ are pair-wise orthogonal, there exists a unitary transformation $U_\mc{B}$ such that $U_\mc{B} \ket{\overline{u}_j} = \ket{q_{j}^{(k)}}$. By applying $U_\mc{B}\left(\cdot \right) U_\mc{B}^\dagger$ on both sides of \eqref{Bk=uu} we obtain
\begin{equation}
U_\mc{B} \widetilde{B}_{k} U_\mc{B}^{\dagger} = \sum_{j=0}^{d-1} \omega^{-j} \ket{q_j^{(k)}}\bra{q_j^{(k)}}\oplus \widetilde{B}_{k}''=A_k^*\oplus \widetilde{B}_k''.
\end{equation}
Similarly to the previous theorem, the above equation implies that
\be
U_\mc{B} B_y U_\mc{B}^{\dagger} = \overline{B}_y \oplus \overline{B}_y''.
\ee
We still need another ingredient before starting to work with the norms. Let us denote $\ket{b_j'} = U_\mc{B} \ket{b_{j}}$ and observe that \eqref{<bPb>} can be written as
\begin{equation}\label{bq_bound}
\braket{b_j'}{q_j^{(k)}}\geqslant 1 - 2d\sqrt{2\epsilon}.
\end{equation}

Now, let us prove \eqref{distanceUBpsi} by developing the left side of it as follows:
\be\label{bound_distance}
\left\|U_\mc{B} B_y \ket{\psi}-\overline{B}_y\ket{\phi_d^+}\right\|= \left\|U_\mc{B} B_y U_\mc{B}^\dagger U_\mc{B} \ket{\psi}-\overline{B}_y\ket{\phi_d^+}\right\| = \left\|\overline{B}_y U_\mc{B} \ket{\psi}-\overline{B}_y\ket{\phi_d^+}\right\| = \left\|U_\mc{B} \ket{\psi}-\ket{\phi_d^+}\right\|,
\ee
where the last equality is due to the unitary invariance of the norm. Note that we have omitted the identities in Alice's space. The Euclidean distance above can be calculated as follows:
\be\label{UBpsi-phi}
\left\|U_\mc{B} \ket{\psi}-\ket{\phi_d^+}\right\|=\sqrt{2\left[1-\text{Re}\left(\bra{\phi^+_d}U_\mc{B}\ket{\psi} \right) \right]}.
\ee
To calculate the bracket above, we can write the maximally entangled state of two qudits in the basis $\{\ket{q_j^{(k)}}\}_j$ as $\ket{\phi_d^+}=\frac{1}{\sqrt{d}}\sum_j \ket{q_j^{(k)}}\ket{q_j^{(k)}}$. Therefore, we have
\be
\bra{\phi^+_d}U_\mc{B}\ket{\psi}=\sum_{j=0}^{d-1}\bra{\phi_d^+}\alpha_j\ket{q_j^{(k)}} U_\mc{B}\ket{b_j}= \frac{1}{\sqrt{d}}\sum_{i,j=0}^{d-1}\alpha_j\braket{q_i^{(k)}}{q_j^{(k)}} \bra{q_i^{(k)}}U_\mc{B}\ket{b_j}= \frac{1}{\sqrt{d}}\sum_{j=0}^{d-1}\alpha_j\braket{q_j^{(k)}}{b_j'}.
\ee
If we use bound \eqref{bq_bound} in the above, we obtain
\be
\bra{\phi^+_d}U_\mc{B}\ket{\psi}\geqslant \frac{1}{\sqrt{d}}\left( 1-2d\sqrt{2\epsilon} \right)\sum_{j=0}^{d-1}\alpha_j,
\ee
which together with \eqref{sum_alpha_i} gives
\be
\bra{\phi^+_d}U_\mc{B}\ket{\psi}\geqslant \left( 1-2d\sqrt{2\epsilon} \right)\sqrt{1-\epsilon}.
\ee
By inserting the above inequality in \eqref{UBpsi-phi} we obtain
\be\label{bound_Upsi-phi}
\left\|U_\mc{B} \ket{\psi}-\ket{\phi_d^+}\right\|\leqslant \sqrt{2\left[1-\left( 1-2d\sqrt{2\epsilon} \right)\sqrt{1-\epsilon} \right]}\leqslant 2\sqrt{d}(2\epsilon)^\frac{1}{4}.
\ee
Finally, from the above and \eqref{bound_distance} we obtain 
\be
\left\|\left(\I_\mc{A}\otimes U_\mc{B}\right)\left( \I_\mc{A}\otimes B_y \right)\ket{\psi}-\I\otimes \overline{B}_y\ket{\phi_d^+}\right\|\leqslant 2\sqrt{d}(2\epsilon)^\frac{1}{4}
\ee
as desired.

In order to obtain \eqref{distanceBBbar}, first we need to observe that the Euclidean norm of an operator $A$ satisfies
\be
\left\|(\I\otimes A)\ket{\phi_d^+}\right\| = \sum_{ij} \bra{ii} \I\otimes A^\dagger A \ket{jj}=\sum_i \bra{i} A^\dagger A \ket{i}=\frac{1}{d} \Tr A^\dagger A.
\ee
Therefore, we have that
\be
\left\|B_y-\overline{B}_y\right\|_2=\sqrt{d}\left\|\left(B_y-\overline{B}_y \right)\ket{\phi_d^+} \right\|.
\ee
If we sum and subtract $B_yU_\mc{B}\ket{\psi}$ inside the norm of the right side of the above, we obtain
\be
\left\|B_y-\overline{B}_y\right\|_2=\sqrt{d}\left\|B_y\ket{\phi_d^+}-\overline{B}_y \ket{\phi_d^+}+ B_yU_\mc{B}\ket{\psi}-B_yU_\mc{B}\ket{\psi}\right\|=\sqrt{d} \left\|B_y\left(\ket{\phi_d^+}-U_\mc{B}\ket{\psi}\right)+B_yU_\mc{B}\ket{\psi}-\overline{B}_y\ket{\phi_d^+} \right\|
\ee
We can apply the triangular inequality in the last norm to have
\be
\left\|B_y-\overline{B}_y\right\|_2\leqslant \sqrt{d}\left(\left\|\ket{\phi_d^+}-U_\mc{B}\ket{\psi}\right\|+\left\|B_yU_\mc{B}\ket{\psi}-\overline{B}_y\ket{\phi_d^+}\right\| \right),
\ee
where we have used the fact that norms satisfy unitary invariance. Now, the norms on the right side of the above are bounded by terms specified by Eqs. \eqref{bound_Upsi-phi} and \eqref{distanceUBpsi}, which implies that
\be
\left\|B_y-\overline{B}_y\right\|_2\leqslant 4d\left(2\epsilon \right)^\frac{1}{4}
\ee
as we wanted to prove.
\end{proof}

\end{document}